\newcommand{\ovl}[1]{\overline{#1}}
\newcommand{\ftwon}[1]{\{0,1\}^{#1}}
\newcommand{\ippwc}{\mathcal{W}}
\newcommand{\lig}{L_{i,g}}
\newcommand{\xcancel}[1]{}
\newcommand{\st}[1]{}
\newcommand{\pwc}{\textsc{Pw}}
\newcommand{\abpo}{O_{\text{h}}}
\newcommand{\iglalgo}{{\tt IGL}}
\newcommand{\xor}{\oplus}
\newcommand{\qgl}{\textsc{QGL}}
\newcommand{\mae}{\textsc{MAE}}
\newacronym{mq}{MQ}{Membership Query}
\newacronym{qex}{\textsc{Qex}}{Quantum Example}
\newacronym{qaex}{\textsc{Qaex}}{Quantum Agnostic Example}
\newcommand{\mq}{{\gls*{mq}}~}
\newcommand{\qex}{{\gls*{qex}}~}
\newcommand{\qaex}{{\gls*{qaex}}~}
\title{Efficient Quantum Agnostic Improper Learning of Decision Trees}
\author[1]{Sagnik Chatterjee}
\author[1]{Tharrmashastha SAPV}
\author[1,2]{Debajyoti Bera}
\affil[1]{Indraprashta Institute of Information Technology (IIIT-D), Delhi, India\thanks{\textit {\{sagnikc,tharrmashasthav,dbera\}@iiitd.ac.in}}}
\affil[2]{Centre for Quantum Technologies, IIIT-Delhi}
\date{}
\begin{document}
\maketitle
\begin{abstract}
  The agnostic setting is the hardest generalization of the PAC model since it is akin to learning with adversarial noise. 
  In this paper, we give a $\mathrm{poly}\left(n,{t},{\nicefrac{1}{\varepsilon}}\right)$ quantum algorithm for learning size $t$ decision trees over $n$-bit inputs with uniform marginal over instances, in the agnostic setting, \textit{without membership queries} (MQ). 
  This is the first algorithm (classical or quantum) for efficiently learning decision trees without MQ. First, we construct a quantum agnostic weak learner by designing a quantum variant of the classical Goldreich-Levin algorithm that {works with strongly biased function oracles}. 
  Next, we show how to quantize the agnostic boosting algorithm by Kalai and Kanade (2009) to obtain the \textit{first} efficient quantum agnostic boosting algorithm (that has a \textit{polynomial speedup} over existing adaptive quantum boosting algorithms). 
  We then use the quantum agnostic boosting algorithm to boost the weak quantum agnostic learner constructed previously to obtain a quantum agnostic learner for decision trees. Using the above framework, we also give quantum decision tree learning algorithms without MQ in weaker noise models.
\end{abstract}
\clearpage
\tableofcontents
\section{Introduction}
Efficiently learning decision trees is a central problem in algorithmic learning theory since any Boolean function is learnable as a decision tree ~\citep{bshouty93monotone}. There has been a large body of work (see \cref{table:querycomplexity}) centered around providing theoretical guarantees for learning decision trees under various generalizations and restrictions of the \pac model introduced by~\citet{valiant1984theory}. 

The original \pac model~\citep{valiant1984theory} is in the noiseless setting where the learning algorithm is trained on a training set $S=\{(x_i,y_i)\}_{i\in[m]}$ consisting of $m$ tuples of instances $x_i\in\mathbb{F}^n_2$ and their corresponding binary labels $y_i$. In the random classification noise (RCN) setting, the learning algorithm is trained on a set $S^{\prime}$ where each label $y_i$ in $S$ is flipped with a uniform probability $p$. In the agnostic setting (adversarial noise), each label in $S$ is flipped with some probability which is dependent on the example. 

There are two types of decision tree learning algorithms: \textit{proper learning} algorithms, where the output is a decision tree, and \textit{improper learning} algorithms, where the output hypothesis is not necessarily required to be a decision tree. Proper learning of decision trees, \textit{even in the noiseless setting}, is known to be computationally hard~\citep{koch2023superpolynomial}, and all the efficient improper learning algorithms (for different noise models) are designed to use \mq oracles (see \cref{table:querycomplexity}).

\noindent\textbf{Downsides of MQ oracles.} A \mq oracle allows a learning algorithm to fetch the label of {\em any desired instance} in the input space, even among the ones absent in the training set. In the famous experiment by \citet{baum1992query}, the \mq oracle was queried by the learning algorithm on instances outside the domain of the labeling function. This makes MQ oracles difficult to implement and is probably one reason that makes them unattractive to the applied machine learning community~\citep{bshouty2002using,awasthi2013learning}, which brings us to the main question tackled in this work.\vspace{-0.1cm}
    \begin{center}
    \doublebox{%
    \begin{minipage}{22em}
    \textbf{Question:} Does there exist a polynomial time (improper) decision tree learning algorithm without membership queries?
    \end{minipage}}
    \end{center}

\noindent\textbf{Quantum as the silver bullet.} In practice, machine learning algorithms use data in the training set to learn a hypothesis. This setup can be modeled as having query access to a random example oracle where we sample training points according to the uniform distribution. Theoretically, it is known that the PAC+MQ model is strictly stronger than the \pac model with only random examples~\citep{angluin1988queries,bshouty93monotone,feldman2006optimal,valiant1984theory}. Similar to the random example oracle, access to a uniform superposition over the training set is an equivalent and a natural requirement in quantum computing. This was first demonstrated by ~\citet{bshouty1998learning} where they introduced the notion of the Quantum \pac model. Many subsequent works (see~\citet{atici2007quantum,Arunachalam2020,deWolf2020,chatterjee2023quantum}) have been designed in the realizable \textit{quantum} \pac model with access to a uniform superposition over the training examples. 
It is not known whether random examples are sufficient for any (quantum or classical) agnostic learning task, which was another motivation behind this work.

The query models used in our quantum algorithm for improperly learning decision trees were proposed by \citet{bshouty1998learning} and \citet{arunachalam2017guest}; and are generalizations of the random example oracle where the learning algorithm has query access to a superposition over all instances in the domain. A detailed description of the \qex and \qaex oracles is given in \cref{sec:prelims}. While the random example query model is weaker than the \qex model, the \mq model is stronger than the \qex model w.r.t. uniform marginal distribution~\citep{bshouty1998learning}. 
% (see \cref{sec:querymodels} for a discussion).
\begin{table*}[t]
  \footnotesize
  \centering
  \caption{\footnotesize 
Comparing different algorithms for learning size-$t$ decision trees on $n$-bit Boolean functions. Note here that $t$ and $\nicefrac{1}{\varepsilon}$ can be as large as $\mathrm{poly}(n)$, which renders the running time of many of the algorithms given below as {  super-polynomial}. Our quantum algorithms are {  strictly polynomial in all parameters} while being the only algorithm to {  work in the agnostic and realizable settings} and {  not use membership queries} (denoted by MQ). Here we note that the number of training samples $m$ required for learning is $\mathrm{poly}(n)$. QC denotes query complexity.
}
  {
    \renewcommand{\arraystretch}{2} \footnotesize
    \begin{tabular}
      {p{3cm}@{} p{2cm} p{2cm} p{2cm} p{1cm} p{2cm} p{2.1cm}} 
      
       \textbf{Work} & \textbf{Setting}&\textbf{Type} &\textbf{Noise Setting}&\textbf{MQ}&\textbf{Runtime}&\\
       \toprule
       EH~\citeyear{ehrenfeucht1989learning} &Classical & Proper & \begin{minipage}{.5\textwidth}{  Realizable}\end{minipage} & \begin{minipage}{.5\textwidth}{   No}\end{minipage} & \begin{minipage}{.5\textwidth}
    {  ${\mathrm{poly}\left(n^{\log t},{\nicefrac{1}{\varepsilon}}\right)}$}
\end{minipage}&\\

       KM~\citeyear{kushilevitz1991learning}&Classical & Improper & \begin{minipage}{.5\textwidth}{  Realizable}\end{minipage} & \begin{minipage}{.5\textwidth}{  Yes}\end{minipage} & \begin{minipage}{.5\textwidth}
     {  ${\mathrm{poly}\left(n,t,{\nicefrac{1}{\varepsilon}}\right)}$}
 \end{minipage}&\\

         LMN~\citeyear{LMN93}&Classical & Proper & \begin{minipage}{.5\textwidth}{  Realizable}\end{minipage} & \begin{minipage}{.5\textwidth}{   No}\end{minipage}& \begin{minipage}{.5\textwidth}{ ${\mathrm{poly}\left(n^{\log{\left({t}/{\varepsilon}\right)}}\right)}$}
 \end{minipage}&\\

     MR~\citeyear{MR02}&Classical  & Proper & \begin{minipage}{.5\textwidth}{   Agnostic}\end{minipage} & \begin{minipage}{.5\textwidth}{   No}\end{minipage} & \begin{minipage}{.5\textwidth}{ ${\mathrm{poly}\left(n^{\log{\left({t}/{\varepsilon}\right)}}\right)}$}
 \end{minipage}&\\
 \cmidrule(l{1.75em}r{1.5em}){1-6}
        GKK~\citeyear{gopalan2008agnostically}
        & \multirow{3}{*}{Classical}  
        & \multirow{3}{*}{Improper} 
        &\multirow{3}{*}{Agnostic} 
        &\multirow{3}{*}{Yes} 
        & \multirow{3}{*}{${\mathrm{poly}\left(n,t,{\nicefrac{1}{\varepsilon}}\right)}$}
        &\\
        KK~\citeyear{kalai-kanade}&&&&&&\\ Feldman~\citeyear{feldman2009}&&&&&&\\
        \cmidrule(l{1.75em}r{1.5em}){1-6}
        BLT~\citeyear{BLT20}&Classical & Proper & \begin{minipage}{.5\textwidth}{   Agnostic}\end{minipage} & \begin{minipage}{.5\textwidth}{   No}\end{minipage} &\begin{minipage}{.5\textwidth}{  ${\mathrm{poly}\left(n^{\log t},{\nicefrac{1}{\varepsilon}}\right)}$}\end{minipage}&\\
      \midrule
      \multirow{2}{\columnwidth}{\textbf{This Work}} &Quantum & Improper & \begin{minipage}{.5\textwidth}{   Realizable}\end{minipage} & \begin{minipage}{.5\textwidth}{   No}\end{minipage} & \begin{minipage}{.5\textwidth}{   ${\mathrm{poly}\left(n,{t},{\nicefrac{1}{\varepsilon}}\right)}$}\end{minipage}&\textbf{QC:} $\bigO{{1}/{\varepsilon^2}}$\\
      \cmidrule(r){2-7}
      &Quantum & Improper & \begin{minipage}{.5\textwidth}{   Agnostic}\end{minipage} & \begin{minipage}{.5\textwidth}{   No}\end{minipage} & \begin{minipage}{.5\textwidth}{   ${\mathrm{poly}\left(n,{t},{\nicefrac{1}{\varepsilon}}\right)}$}\end{minipage}&\textbf{QC:} $\bigO{{n^2}/{\varepsilon^3}}$\\
      
      \bottomrule%% Insert an endline
    \end{tabular}
  }
  \label{table:querycomplexity}
\end{table*}
\subsection{Our Contributions and Technical Overview}\label{sec:overview}

The main contribution of this work is a quantum polynomial time algorithm for improperly learning decision trees {\em without} \mq in the agnostic setting (and hence, in weaker noise settings). The importance is twofold.
\begin{enumerate}
    \item To our knowledge, ours is the first {\em quantum algorithm} for decision tree learning (realizable or agnostic, with or without MQ).
    \item Our algorithm is also the only known {\em efficient} agnostic PAC learning algorithm for decision trees (classical or quantum) {\em without MQ}~\footnote{Our result subsumes the {classical} realizable learning algorithm for \textit{monotone} decision trees without MQ by~\citet{o2007learning}.}.
\end{enumerate}
We state a simplified version of our main result now.
\begin{theorem}\label{thm:maininfo}
Given $m$ training examples, there exists a quantum algorithm for learning size-$t$ decision trees in the agnostic setting without \mq in $\mathrm{poly}\left(m,{t},\nicefrac{1}{\varepsilon}\right)$ time.
\end{theorem}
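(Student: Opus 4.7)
The plan is to obtain the stated algorithm via the classical recipe \emph{weak agnostic learner + agnostic boosting}, but carried out entirely in the \qex/\qaex model so that no \mq oracle is invoked. Concretely, I would (i) construct a quantum agnostic weak learner that, given \qaex access to a target distribution on $\ftwon{n}\times\{0,1\}$ whose Bayes-optimal hypothesis is computed by a size-$t$ decision tree, returns a parity $\chi_S$ with correlation noticeably better than trivial, and (ii) feed this weak learner into a quantum-accelerated version of the Kalai--Kanade agnostic boosting template to amplify the advantage to $1-\varepsilon$.

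For step (i), the starting observation is the classical fact that if $f$ is (close to) a size-$t$ decision tree, then its Fourier spectrum has a coefficient $\widehat{f}(S)$ of magnitude $\Omega(1/t)$; thus finding such a parity suffices as a weak learner. Classically one would use Goldreich--Levin / Kushilevitz--Mansour to isolate a heavy Fourier coefficient, but these require \mq. The key idea is to replace this subroutine by the paper's \qgl procedure, which, as announced in the abstract, adapts Goldreich--Levin to the quantum setting and \emph{crucially} tolerates strongly biased function oracles. Running \qgl on the \qaex oracle produces, with high probability, a set $S$ whose Fourier correlation with the noisy labels matches the largest coefficient up to $\pm\varepsilon_w$. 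Choosing $\varepsilon_w = \Theta(1/t)$ then outputs a parity with advantage $\Omega(1/t)$ in $\mathrm{poly}(n,t)$ time and using $\Ot(1/\varepsilon_w^2)$ quantum examples.

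For step (ii), I would port Kalai--Kanade's \textsc{AgnosticBoost} to the quantum setting. Each boosting round produces a reweighting of the distribution; the weak learner must be invoked on samples drawn from this reweighted distribution. The non-trivial part is that \qaex provides a superposition over the \emph{original} distribution, so one has to simulate reweighted \qaex access without incurring a multiplicative blow-up. I would do this by preparing a block-encoding of the reweighting function (which is a bounded function computable from the current ensemble of weak hypotheses) and using amplitude-amplification-style rejection sampling, paying only a $\sqrt{\cdot}$ overhead in the rejection rate --- this is where the claimed polynomial speedup over adaptive quantum boosters comes from. After $T=\mathrm{poly}(t,1/\varepsilon)$ rounds, a linear combination of the weak parities, thresholded appropriately, yields a hypothesis whose agreement with the Bayes-optimal size-$t$ tree is within $\varepsilon$, matching the guarantee of the classical Kalai--Kanade analysis.

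The main obstacle I expect is step (i): making Goldreich--Levin work in the \emph{agnostic} quantum model. In the realizable quantum case one can prepare $\sum_x (-1)^{f(x)}\ket{x}$ by a single phase query; in the agnostic case the ``function'' is a $\{0,1\}$-valued random variable whose marginal bias can be arbitrarily close to $0$ or $1$, which destroys the standard Hadamard-test signal-to-noise ratio. Controlling this requires the biased-oracle refinement advertised as \qgl and a careful amplitude-estimation argument to recover $\widehat{f}(S)$ to additive accuracy $\varepsilon_w$ with a number of \qaex calls that depends only polynomially on $1/\varepsilon_w$ and not on the bias. Once that is in place, together with a standard VC/uniform-convergence argument bounding the sample size $m$ by $\mathrm{poly}(n,t,1/\varepsilon)$, the composition gives the claimed $\mathrm{poly}(m,t,1/\varepsilon)$ running time.
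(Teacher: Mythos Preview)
Your high-level plan matches the paper: build a weak agnostic learner by extracting a heavy Fourier coefficient via a quantum Goldreich--Levin routine tolerant to strongly biased oracles, then boost it using a quantized Kalai--Kanade. You also correctly identify the main obstacle (making GL work when the label is a biased random variable rather than a deterministic function).

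There are two implementation-level divergences worth flagging. First, for the weak learner the paper does \emph{not} run \qgl\ directly on the \qaex oracle. It first uses multi-distribution amplitude estimation on \qaex to build an explicit strongly biased oracle $O_h$ for (an approximation of) the Bayes predictor, and only then feeds $O_h$ into \qgl. Your proposal gestures at this in the obstacle paragraph but the body says ``running \qgl\ on the \qaex oracle,'' which would not type-check: \qgl\ expects a function-style oracle it can put into a swap-test/phase state, not a joint-distribution sampler. Second, for the boosting step you propose simulating the reweighted distribution via block-encoding plus amplitude-amplification-based rejection sampling. The paper deliberately avoids this: it exploits the fact that Kalai--Kanade \emph{relabels} rather than reweights, so the marginal on $x$ is unchanged and the new \qaex oracle is obtained by a simple conditional rotation plus \textsc{CNOT} (turning $\ket{x_i,y_i}$ into $\sqrt{(1+w_i^t)/2}\,\ket{x_i,y_i,0}+\sqrt{(1-w_i^t)/2}\,\ket{x_i,\bar y_i,1}$), with no amplification at all. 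Amplitude estimation is used only for the scalar margins $\alpha_t,\beta_t$. Your rejection-sampling route would still be polynomial (the KK weights are bounded in $[1/e,1]$), but it introduces an extra source of error the paper's relabeling trick sidesteps, and the claimed speedup over prior quantum boosters in the paper comes from the margin-estimation step, not from rejection sampling.

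A minor parameter slip: the weak learner is $(m,\kappa,\eta)$ with $\eta=1/t$ and $\kappa$ set to the final accuracy $\varepsilon$, so you want $\varepsilon_w=\Theta(\varepsilon)$, not $\Theta(1/t)$.
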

Here we note that the number of training samples $m$ required for learning is polynomial w.r.t. to $n$ where the decision trees correspond to $n$-bit Boolean functions. Following earlier work (see \cref{sec:related}), we also {assume} a uniform marginal distribution over the instances.  
In \cref{table:querycomplexity}, we compare our decision tree learning algorithm against existing decision tree learning algorithms. Our algorithm (see \cref{fig:algo-overview}) follows from the existence of
\begin{itemize}
    \item an efficient quantum agnostic boosting algorithm (see \cref{sec:qagnosticalgo}), and 
    \item an efficient weak quantum agnostic learner for decision trees (see \cref{sec:QDTL}).
\end{itemize}
\begin{figure}[t]
    \centering
    \includegraphics[width=0.7\textwidth]{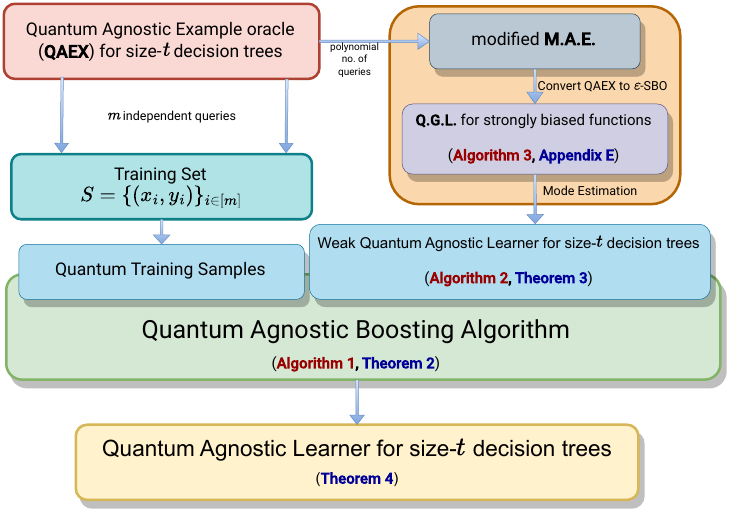}
    \small
    \caption{\footnotesize Agnostically learning polynomial-sized decision trees without MQ.}
    \label{fig:algo-overview}
\end{figure}
\subsubsection{Quantum agnostic boosting}\label{sec:qagboostex} Quantum boosting algorithms for the realizable setting have been shown to exist (see \cref{sec:related}), but their existence in the agnostic setting was an open question~\citep{deWolf2020}. The challenge in such algorithms is precisely estimating the margins under the presence of instance-dependent noise. Our idea was to quantize the \citet{kalai-kanade} (KK) algorithm whose use of \textit{relabeling} let us avoid using the amplitude amplification subroutine (a staple in the previous quantum boosting algorithms) explicitly, thereby removing a significant source of error. In \cref{sec:qagnosticalgo}, we show that given a weak quantum agnostic learner $A$ with an associated hypothesis class $\mathcal{C}$ and a set of $m$ training examples, we can construct a  $\mathrm{poly}\left(m,\nicefrac{1}{\varepsilon}\right)$ time quantum boosting algorithm to produce a hypothesis that is $\varepsilon$ close to the best hypothesis in $\mathcal{C}$.

\subsubsection{Weak Quantum Agnostic Learner for Decision Trees}\label{sec:qweakex} 
In \cref{sec:QDTL}, we construct a quantum weak agnostic learner for size-$t$ decision trees using $\bigO{n^2/\varepsilon^3}$ queries to the \qaex oracle instead of the \mq oracle. The weak learner is constructed using a new quantum variant of the Goldreich-Levin algorithm (GL)~\citep{goldreich1989hard}. We use QGL to identify the monomial that best approximates the Bayes optimal predictor. This monomial serves as our weak learner. We are aware of only one prior quantum Goldreich-Levin algorithm~\citep{adcock2002quantum}; however, that algorithm involved different types of oracles and tackled a problem unrelated to ours. We now briefly touch upon the technical challenges encountered.

\begin{enumerate}
    \item The classical GL algorithm requires obtaining $f(x)$ for specific instances $x$. Our \qgl{} algorithm (see \cref{alg:qgl}), instead, was designed to work with the \qaex oracle, which generates a superposition over all $(x,f(x))$ pairs. The key step in the \qgl{} algorithm is using a Deutsch-Jozsa-style sampler to work in tandem with the \qaex oracle. This brings us to the second technical challenge.
    \item The true label $f(x)$ of any $x$ is imperative for the classical Goldreich-Levin algorithm to work properly. However, in the agnostic scenario, both $f(x)$ (correct label) and $1-f(x)$ (incorrect label) may be returned with non-zero probability. The probabilities could also depend on $x$~, which makes matters worse. Thus, we designed a wrapper around \qaex denoted $O_h$ (see ~\cref{alg:QWeakLearner}) employing the recent technique of multi-distribution amplitude estimation (\mae)~\citep{bera2022few} to ensure a bound on the errors. 
    
    We note here that oracles in which the probability of label flips do not depend on $x$ capture the RCN model and have been studied as \textit{biased oracles}. To differentiate, we refer to oracles where the probability of label flips are dependent on the instance as \textit{strongly biased oracles} --- these capture the agnostic setting.
    \item Our QGL algorithm is run on the wrapper oracle $O_h$. Unfortunately, the QGL algorithm itself uses erroneous subroutines like amplitude amplification and estimation. Such algorithms often exhibit grossly incorrect behaviors (e.g., the amplification step may amplify the amplitudes of even the undesired states due to the error arising from amplitude estimation). We meticulously ensured amplitude amplification and estimation work in tandem to keep their inherent errors in control, particularly as the algorithm proceeds to lower levels of the prefix search tree (where the errors have a chance to accumulate). 
\end{enumerate}

\noindent{\bf Weaker Noise Settings.} The agnostic setting generalizes the realizable and the random classification noise (RCN) settings; thus, our framework also learns decision trees in those settings, as explained in \cref{sec:QTDLreal} and \cref{sec:QTDLrandom} respectively.

\subsection{Related Work}\label{sec:related}
\textbf{Agnostic Boosting}. 
\citet{kalaiParity2008} gave a classical agnostic boosting algorithm that achieves nearly optimal accuracy. We follow the agnostic boosting formalization of \citet{kalaiParity2008}(as opposed to earlier works like \citet{ben-david2001,gavinsky2002}) in this paper. \citet{feldman2009}, and KK~\citeyearpar{kalai-kanade} came up with distribution-specific agnostic boosting algorithms to circumvent certain impossibility results on convex boosting algorithms~\citep{long-servedio}. We give a quantum version of the KK algorithm that also achieves a quadratic speedup in the VC dimension of the weak learner.
\\\vspace{-0.2cm}

\noindent\textbf{Agnostic Learning of Decision Trees}. 
~\citet{ehrenfeucht1989learning} gave the first \textit{weakly} proper learning algorithm with quasi-polynomial running time and sample-complexity in the realizable setting using random examples. Subsequent works on properly learning decision trees\citep{MR02,BLT20,guy20universalcriterion,guy22proplearningjacm} either have quasi-polynomial dependence on error parameters and intensive memory requirements or require the use of \mq (see \cref{table:querycomplexity}). Recently, it was shown by \citet{koch2023superpolynomial} that efficient proper learning of decision trees has a superpolynomial lower bound. \citet{bshouty2023superpolynomial} showed that the superpolynomial lower bound also holds for proper learning of monotone decision trees.

\citet{kushilevitz1991learning} gave the first polynomial time {improper} decision tree learning algorithm (we henceforth refer to this as the KM algorithm) using \mq in the realizable setting. Their approach was later extended to the agnostic setting by \citet{gopalan2008agnostically,kalai-kanade,feldman2009}. To our knowledge, there is no prior work on quantum agnostic learning.

\noindent\textbf{Quantum Boosting}. \citet{Arunachalam2020} gave the first quantum adaptive boosting algorithm, which was a quantum generalization of the celebrated AdaBoost algorithm. Their approach was later extended to work on non-binary weak learners by \citet{chatterjee2023quantum}. 
 Both of the above boosting algorithms generate a quadratic speedup compared to their classical counterparts in the VC dimension of the weak learner. This speedup is retained by our quantum agnostic boosting algorithm.
\section{Notation and Preliminaries}
\label{sec:prelims}

\par\noindent\textbf{Fourier Analysis of Boolean Functions}.
Given any Boolean function $f:\mathbb{F}_2^n\xrightarrow{}\{-1,1\}$, , where $\mathbb{F}^n_2=\{0,1\}^n$, we can uniquely express it as $f(x)=\sum_{S\in\mathbb{F}^n_2}\hat{f}(S)\chi_S(x)$. Here $\hat{f}(S)=\expect{f(x)\chi_S(x)}=\langle f,\chi_S\rangle$ are the Fourier coefficients corresponding to every $S$, and $\chi_S(x)=\prod_{i\in S}(-1)^{x_i}$, where $x_i$ are 0-1 valued. $\chi_S(x)$ is the multilinear monomial corresponding to every $S$ (also referred to as the parity of $S$). For Boolean functions, the squares of the Fourier coefficients $\hat{f}^2(S)$ form a probability distribution.

In algorithmic learning, our objective is to learn an approximation of the Fourier representation\footnote{For a detailed survey on the connection between Fourier representation and learning theory see~\cite{mansour1994learning}.} of $f$ by finding the set of strings $S$ that have high $\hat{f}(S)$ values.
We design a quantum variant (see \cref{alg:qgl}) of the classical GL algorithm~\citep{goldreich1989hard} to find terms with Fourier coefficients larger than a threshold $\tau$. The QGL algorithm searches a binary tree 
of all possible prefixes of $n$-length strings; the root corresponds to the empty prefix, and the leaves correspond to complete strings, s.t. every string represents a monomial. The weight of a node $a$ of length $s$ is defined as $\pwc (a)=\sum_{b\in \{0,1\}^{n-s}} \hat{f}^2(ab)$.\\
\par\noindent\textbf{Agnostic \pac Learning.}
Consider an $n$-bit function or ``concept" $c\in\mathcal{C}:\mathbb{F}^n_2\xrightarrow{} \{-1,1\}$. 
In the agnostic setting~\citep{haussler1992decision,kearns-agnostic}, a learning algorithm tries to learn some unknown concept w.r.t. a fixed arbitrary joint distribution $\mathcal{D}$ over $\mathbb{F}^n_2\times\{-1,1\}$. The agnostic setting is seen as learning with adversarial noise in the following manner: Let $\mathcal{D}$ be a joint distribution over the examples and the labels $\mathbb{F}^n_2\times\{-1,1\}$. We can also interpret this as a distribution $\mathcal{D}^{\prime}$ over $\mathbb{F}^n_2$, where the examples are labeled according to some concept $c^{\prime}$, s.t. an adversary corrupts some $\eta$ fraction of the labels given to the algorithm. In the agnostic setting, training error of a hypothesis $h$, $\mathrm{err}_S(h)= {\mathrm{Pr}_{S}}\left[h(x)\neq y\right]$ is defined w.r.t. set $S$ of $m$ labeled training examples sampled from a joint distribution $\mathcal{D}$ over $\mathbb{F}^n_2\times\{-1,1\}$.
The generalization error is defined as $\mathrm{err}_\mathcal{D}(h)= {\mathrm{Pr}_\mathcal{D}}\left[h(x)\neq y\right]$. 
Correlation is defined as follows.
\begin{defn}[Correlation~\citep{kalai-kanade}]\label{def:correlation}
The correlation of a hypothesis $h\in\mathcal{H}$ w.r.t. $\mathcal{D}$ over $\mathbb{F}^n_2\times\{-1,1\}$ is defined as $\cor{h}{\mathcal{D}}=1-2\mathrm{err}_\mathcal{D}(h)={\mathbb{E}}_{\mathcal{D}}\left[h(x)\cdot y\right]$.
\end{defn}
\vspace{-0.2cm}\noindent The \textit{optimal correlation} of a class of concepts $\mathcal{C}$ is defined as $\mathrm{optcor}_\mathcal{D}(\mathcal{C})=\cor{h^*}{\mathcal{D}}={\mathrm{argmax}_{h\in \mathcal{C}}}\;\;\cor{h}{\mathcal{D}}$.

In agnostic \pac learning, we fix some concept class $\mathcal{C}$ (e.g., decision trees of fixed depth) and aim to learn a hypothesis $h$ close to the best possible concept $c_{\mathrm{opt}}\in\mathcal{C}$. Note that $h$ may not belong to $\mathcal{C}$, as in improper learning. Boosting algorithms are an important class of improper learning algorithms. 
\\\vspace{-0.2cm}
\par\noindent\textbf{Agnostic Boosting}. As discussed earlier, computational hardness results for polytime proper learning led researchers to try the improper learning approach via boosting, where they would take a ``weak"-agnostic learner and boost it to obtain a better (not necessarily optimal as in the realizable case) generalization performance. 
We make these notions precise below.
\begin{defn}[$(m,\kappa,\eta)$-weak Agnostic Learner~\citep{kalai-kanade}]
For some $\kappa=\bigO{1/poly(m)}$, an algorithm $A$ learns concept class $\mathcal{C}$ over an arbitrary distribution $\mathcal{D}$ on $\mathcal{X}\times\{-1,1\}$, on $m$ examples drawn i.i.d. from $\mathcal{D}$, and outputs a hypothesis $h$ s.t.
$
\cor{h}{\mathcal{D}}\geq\eta\cdot\mathrm{optcor}_{\mathcal{D}}(\mathcal{C})-\kappa
$.
\end{defn}
\begin{defn}[$\beta$-optimal $(\varepsilon,\delta)$-agnostic PAC learner~\citep{gavinsky2002}]\label{def:betalearn}
    A learning algorithm $A$ $\beta$-optimally learns a concept class $\mathcal{C}$ if for every $\varepsilon,\delta>0$, $0<\beta\leq\nicefrac{1}{2}$, any arbitrary distribution $\mathcal{D}$ over $\mathcal{X}\times\{-1,1\}$, $A$ takes examples drawn i.i.d. from $\mathcal{D}$, and outputs a hypothesis $h$ {\it s.t.} $\cor{h}{\mathcal{D}}\geq \mathrm{optcor}_{\mathcal{D}}(\mathcal{C})-\beta-\varepsilon$ with probability at least $1-\delta$. 
\end{defn}
For brevity, we shall be referring to $\beta$-optimal $(\varepsilon,\delta)$-agnostic PAC learners as $\beta$-optimal agnostic PAC learners. The goal of Agnostic Boosting~\citep{ben-david2001,gavinsky2002} is to produce a $\beta$-optimal learner given a $(m,\kappa,\eta)$-weak agnostic learner.

\citet{kalai-kanade} introduced the concept of training intermediate weak hypotheses on \textit{randomly relabeled examples} (instead of the traditional reweighting schemes based on AdaBoost) to obtain a $\beta$-optimal agnostic learner. In the fully supervised setting (i.e., w.r.t. this paper), the semantic differences between reweighting and relabeling are negligible.
\\
\par\noindent\textbf{Quantum Agnostic Learning}. Classical learners have access to a random example oracle $\mathrm{EX}(f,\mathcal{D})$ for a function $f$ w.r.t. distribution $\mathcal{D}$ over $\mathbb{F}^n_2$, which samples an instance $x$ according to $\mathcal{D}$, and returns a labeled example $\left\langle x,f(x)\right\rangle$. In the agnostic case, learners have access to the oracle $\mathrm{AEX}(\mathcal{D})$ where $\mathcal{D}$ is a joint distribution over instances and labels. An invocation of $\mathrm{AEX}$ returns a labeled instance $\left\langle x,y\right\rangle$ w.r.t. $\mathcal{D}$. In the Quantum PAC model~\citep{bshouty1998learning}, the quantum learners have access to a \textit{quantum example oracle} $\mathrm{QEX}(f,\mathcal{D})$, s.t. each invocation to $\mathrm{QEX}(f,\mathcal{D})$ produces the quantum state $\left( \sum_{x \in \mathbb{F}^n_2}\sqrt{\mathcal{D}(x)} \ket{x, f(x)} \right)$. In the quantum agnostic setting~\citep{arunachalam2018optimal}, quantum learners can access the oracle $\mathrm{\qaex}(\mathcal{D})$, s.t. each invocation of the oracle produces the quantum state $\left( \sum_{(x,y) \in \mathbb{F}^n_2\times{\{-1,1\}}}\sqrt{\mathcal{D}(x,y)} \ket{x, y} \right)$. We now define a $(m,\kappa,\eta)$-weak quantum agnostic learner.
\begin{defn}[$(m,\kappa,\eta)$-Weak Quantum Agnostic Learner]\label{def:corrAgnostic}
    For some $\kappa=\bigO{1/\mathrm{poly}(m)}$, a quantum algorithm $A$ that learns a concept class $\mathcal{C}$ over an arbitrary distribution $\mathcal{D}$ on $\mathcal{X}\times\{-1,1\}$, with at most $m$ calls to a $\qaex(\mathcal{D})$ oracle, and outputs a hypothesis $h$ s.t.
$
\cor{h}{\mathcal{D}}\geq\eta\cdot\mathrm{optcor}_{\mathcal{D}}(\mathcal{C})-\kappa
$.
\end{defn}
 We can similarly define a quantum version of a {$\beta$-optimal agnostic learner}.
\\
\par\noindent\textbf{Useful Quantum Algorithms}.
\begin{lem}[Amplitude Amplification~\citep{bhmt}]
\label{lem:AmplitudeAmplification}
 Let there be a unitary $U$ such that $U\ket{0}=\sqrt{a}\ket{\phi_0}\ket{0}+\sqrt{1-a}\ket{\phi_1}\ket{1}$ for an unknown $a$ such that $a \ge p>0$ for a known $p$. Then there exists a quantum amplitude amplification algorithm that makes $\Theta(\sqrt{p^{\prime}/p})$ expected number of calls to $U$ and $U^{-1}$ and outputs the state $\ket{\phi_0}$ with a probability $p^{\prime}>0$.
\end{lem}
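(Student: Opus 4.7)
The plan is to adapt the standard Brassard--H{\o}yer--Mosca--Tapp amplitude amplification construction and combine it with a BBHT-style exponential schedule to cope with the unknown amplitude $a$. First I would introduce two reflections: $S_\chi = I - 2\bigl(I\otimes\ket{0}\bra{0}\bigr)$ on the flag register, reflecting about the ``good'' subspace, and $S_0 = I - 2\ket{0}\bra{0}$ on the full register, reflecting about the initial state. The amplification operator is then $Q = -U S_0 U^{-1} S_\chi$, and the algorithm simply applies $Q^k$ to $U\ket{0}$ for a suitably chosen $k$ and measures the flag qubit.

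The central structural fact I would verify is that $Q$ preserves the two-dimensional subspace $\mathcal{H}_\phi = \mathrm{span}\bigl(\ket{\phi_0}\ket{0},\,\ket{\phi_1}\ket{1}\bigr)$ and, within this subspace, acts as a rotation by angle $2\theta$, where $\theta = \arcsin(\sqrt{a})$. Writing $U\ket{0} = \sin\theta\,\ket{\phi_0}\ket{0} + \cos\theta\,\ket{\phi_1}\ket{1}$, a short induction then gives
\[
Q^k U\ket{0} \;=\; \sin\bigl((2k+1)\theta\bigr)\,\ket{\phi_0}\ket{0} + \cos\bigl((2k+1)\theta\bigr)\,\ket{\phi_1}\ket{1},
\]
so measuring the flag register yields $\ket{\phi_0}$ with probability $\sin^2\bigl((2k+1)\theta\bigr)$. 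If $a$ were known, choosing $k \approx \tfrac{1}{2}\sqrt{p'/a}$ would keep the iteration in the linear regime $\sin^2\bigl((2k+1)\theta\bigr) \approx (2k+1)^2 a$ and guarantee success probability at least a constant multiple of $p'$.

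Because $a$ is only lower-bounded by $p$, I would wrap the iteration in a BBHT-style exponential search that samples $k$ uniformly at random from $\{0,1,\dots,\lceil c^{j}\rceil-1\}$ for a constant $c\in(1,2)$ and increasing $j$, until a success is observed. A standard averaging argument establishes that at the first scale $j$ with $c^{j}\gtrsim \tfrac{1}{2}\sqrt{p'/p}$, the outcome is $\ket{\phi_0}$ with probability at least a constant fraction of $p'$; a constant number of independent repetitions boosts this to $p'$, and summing the geometric series of query costs across scales gives the claimed $\Theta\bigl(\sqrt{p'/p}\bigr)$ expected calls to $U$ and $U^{-1}$. The main obstacle is ruling out the ``overshoot'' pathology, where a single fixed $k$ could rotate past $\pi/2$ for large unknown $a$ and collapse the measurement probability; the randomized exponential schedule is exactly what averages this failure mode away while preserving the asymptotic query bound.
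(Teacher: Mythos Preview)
The paper does not supply its own proof of this lemma; it is simply quoted from~\cite{bhmt} as a preliminary tool. Your proposal reconstructs precisely the standard Brassard--H{\o}yer--Mosca--Tapp argument (the two-dimensional rotation picture) together with the BBHT exponential schedule to handle the unknown $a$, which is the proof in the cited reference, so there is nothing to compare and your sketch is correct.
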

\cref{lem:AmplitudeAmplification} allows us to boost the probability of success of a marked state with a quadratic speedup compared to probabilistic amplification algorithms.
\begin{lem}[Relative Error Estimation~\citep{bhmt}]
Given an error parameter $\varepsilon$, a constant $k\geq1$, and a unitary $U$ such that $U\ket{0}=\sqrt{a}\ket{\phi_0}\ket{0}+\sqrt{1-a}\ket{\phi_1}\ket{1}$ where either ${a}\geq p$ or $a=0$. Then there exists a quantum algorithm that produces an estimate $\Tilde{a}$ of the success probability $a$ with probability at least $1-\frac{1}{2^k}$ such that $\left|a-\Tilde{a}\right|\leq \varepsilon a$ when $a\geq p$. The expected number of calls to $U$ and $U^{-1}$ made by our quantum amplitude estimation algorithm is 
$
  O\left(\frac{k}{\varepsilon\sqrt{p}}\left(1+\log{\log{\frac{1}{p}}}\right)\right) 
$.\label{lem:RelativeEstimation}
\end{lem}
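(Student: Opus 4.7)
The plan is to invoke the standard amplitude estimation of Brassard, H{\o}yer, Mosca, and Tapp as a black box: given $M$ calls to $U$ and $U^{-1}$, it returns $\tilde{a}$ satisfying $|\tilde{a}-a| \le 2\pi\sqrt{a(1-a)}/M + \pi^2/M^2$ with constant success probability (around $8/\pi^2$). The key observation that converts this into a relative-error guarantee is that, on the branch $a\ge p$, choosing $M=\Theta(1/(\varepsilon\sqrt{a}))$ makes both summands $\le \varepsilon a/2$; the worst-case safe choice, given only the promise $a\ge p$, is therefore $M=\Theta(1/(\varepsilon\sqrt{p}))$.

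First, I would show that a single AE invocation at this $M$ yields $|\tilde{a}-a|\le \varepsilon a$ with constant probability when $a\ge p$, and returns an estimate very close to $0$ when $a=0$ (so that the output $\tilde{a}$ can be truncated to $0$ below a fixed cutoff, cleanly handling the degenerate branch without breaking the relative-error claim). Second, to boost the per-estimate confidence from constant to $1-2^{-k}$, I would run $O(k)$ independent copies and take the median; a standard Chernoff argument on the indicator variables ``$|\tilde{a}_i-a|\le \varepsilon a$'' shrinks the failure probability below $2^{-k}$, contributing the factor $k$ in the query budget.

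Third, to obtain the $1+\log\log(1/p)$ factor, I would wrap the above subroutine in a doubly-exponential adaptive schedule on the precision: try $M_j=2^{2^j}$ for $j=0,1,\ldots$, invoking the median-of-AE subroutine at precision $M_j$ and stopping at the first round whose estimate is mutually consistent with the next round (e.g., within a factor of $1+\varepsilon$). Because $M_j$ needs only to reach $O(1/(\varepsilon\sqrt{p}))$ in the worst case, this loop runs at most $O(1+\log\log(1/p))$ times; by a geometric argument the total number of oracle calls is dominated by the final round, whose cost is $O(k/(\varepsilon\sqrt{p}))$. Multiplying by the number of outer rounds gives the claimed bound.

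The main obstacle I anticipate is choosing the termination / consistency criterion so that it correctly distinguishes the $a=0$ and $a\ge p$ branches inside a single adaptive schedule: one must avoid a spurious ``detection'' on the $a=0$ branch (which would invalidate the relative-error claim, since $\varepsilon\cdot 0=0$) while still guaranteeing that on the $a\ge p$ branch the schedule terminates with an $M$ large enough to deliver the desired relative error. A union bound over the $O(\log\log(1/p))$ outer rounds must also be absorbed into the median-boosting step, so the two amplification layers have to be tuned in sync; if this is done carelessly one loses either the $k$ or the $\log\log(1/p)$ factor.
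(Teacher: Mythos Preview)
The paper does not actually prove this lemma: it is quoted as Theorem~15 of Brassard--H{\o}yer--Mosca--Tapp, with the one-line remark that it follows by setting their parameter to $1/m$. So you are reconstructing an argument the paper simply cites.

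Your first two steps are correct and already do the job. With the promise $a\ge p$ known in advance, a single amplitude-estimation call at $M=\Theta(1/(\varepsilon\sqrt{p}))$ gives $|\tilde a-a|\le\varepsilon a$ with constant probability (your computation of the two error terms is right), and the median of $O(k)$ independent runs boosts this to $1-2^{-k}$, for a total of $O(k/(\varepsilon\sqrt{p}))$ queries. That is in fact \emph{tighter} than the bound printed in the lemma; the extra $(1+\log\log(1/p))$ factor is slack inherited from BHMT's adaptive procedure, which is designed for the harder setting where no lower bound on $a$ is available beforehand.

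Your third step is therefore unnecessary here, and as written it is also technically off. A doubly-exponential schedule $M_j=2^{2^j}$ squares the precision at each round, so the first $M_J$ to exceed $1/(\varepsilon\sqrt{p})$ can be as large as its square, $1/(\varepsilon^2 p)$; the final round then costs $O(k/(\varepsilon^2 p))$, not $O(k/(\varepsilon\sqrt{p}))$ as you claim. Your bookkeeping is also internally inconsistent: you cannot both say the total is dominated by the final round and then multiply by the number of rounds to manufacture the $\log\log$ factor. In BHMT the $\log\log$ arises from an \emph{exponential} schedule $M_l=2^l$ together with roughly $O(\log l)$ repetitions at level $l$ so that a union bound over the $O(\log(1/p))$ levels survives; it is a per-level confidence overhead, not a round-count multiplier on a doubly-exponential search.
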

We see that \cref{lem:RelativeEstimation} can be used for mean estimation with a relative error by setting $p=\bigO{1/m}$, where $\ket{\phi_0}$ is a superposition over $m$ basis states. 
This lemma follows from the amplitude estimation lemma (Theorem 15 of~\cite{bhmt}) by setting $t=\frac{1}{m}$.

\begin{lem}[Multidistribution Amplitude Estimation. Theorem 4 of \cite{bera2022few}]
    Given an oracle $O$ that acts as $O\ket{0} = \sum_y\alpha_y\ket{y}(\eta_{0,y}\ket{0}+\eta_{1,y}\ket{1})$, there exists an algorithm to output the quantum state $\sum_y\alpha_y\ket{y}(\eta_{0,y}\ket{0}+\eta_{1,y}\ket{1})\ket{\widetilde{\eta_{1,y}}}$ in $\bigO{1/\varepsilon}$ queries with a high probability, such that $\abs{\widetilde{\eta_{1,y}}-\eta_{1,y}}\leq\varepsilon$.\label{lem:MultiEstimation}
\end{lem}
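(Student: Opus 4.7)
The plan is to run coherent quantum amplitude estimation with the $\ket{y}$ register acting as a quantum control, so that the estimate of $\eta_{1,y}$ ends up written into a fresh ancilla in superposition over $y$ rather than being measured out. To begin, I would apply $O$ once to prepare $\ket{\psi}:=O\ket{0}=\sum_y \alpha_y \ket{y}\ket{\phi_y}$ where $\ket{\phi_y}:=\eta_{0,y}\ket{0}+\eta_{1,y}\ket{1}$, and parametrise $\eta_{1,y}=\sin\theta_y$ so that each $y$-branch lies in the two-dimensional subspace $\mathrm{span}\{\ket{y}\ket{0},\ket{y}\ket{1}\}$.

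The next step is to construct a Grover-style iterate $G$ that acts block-diagonally across the $y$-register, implementing within each branch a rotation by angle $2\theta_y$. This is achieved by composing the good-state reflection $S_g = I_y \otimes Z$ (sign-flipping states whose last qubit is $\ket{1}$) with the branch-wise reflection $R_\phi = I - 2\sum_y \ket{y}\bra{y}\otimes\ket{\phi_y}\bra{\phi_y}$; the latter is implemented from $O$ and $O^{-1}$ treating $\ket{y}$ as an untouched quantum control, under the natural assumption that $O$ factorises as a state-preparation step on $\ket{y}$ followed by a $y$-controlled rotation on the last qubit. Running coherent phase estimation on $G$ with $T=\bigO{1/\varepsilon}$ iterations and $\lceil\log(1/\varepsilon)\rceil + \bigO{1}$ precision qubits yields the state $\sum_y \alpha_y \ket{y}\ket{\phi_y}\ket{\widetilde{\theta_y}}$ with $|\widetilde{\theta_y}-\theta_y|\le\varepsilon$ except on a small-norm failure subspace. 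Coherently computing $\sin(\widetilde{\theta_y})$ into a new register and uncomputing $\widetilde{\theta_y}$ via the inverse phase estimation then produces the target state with $|\widetilde{\eta_{1,y}}-\eta_{1,y}|\le\varepsilon$, after absorbing the Lipschitz constant of $\sin$ into the error budget.

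The main obstacle is controlling the phase-estimation failure mass across \emph{all} $y$-branches simultaneously, since a coherent output demands that the error be small on every branch rather than merely on a typical one. I would handle this by running $\bigO{\log(1/\delta)}$ parallel phase-estimation copies on disjoint precision registers and coherently extracting their median, which amplifies the per-branch success probability to $1-\delta$ by standard Chernoff-style arguments; choosing $\delta = \mathrm{poly}(\varepsilon)$ gives the desired high-probability guarantee while contributing only a polylogarithmic overhead that is absorbed into the stated $\bigO{1/\varepsilon}$ query complexity.
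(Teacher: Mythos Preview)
The paper does not prove this lemma at all; it is quoted verbatim as Theorem~4 of \cite{bera2022few} in the preliminaries, alongside amplitude amplification and relative-error estimation, and is used purely as a black box. There is therefore no in-paper proof to compare your proposal against.

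That said, your sketch is the standard coherent amplitude-estimation argument and is essentially how such results are established. The one point worth flagging is the step where you build the branch-wise reflection $R_\phi = I - 2\sum_y \ket{y}\!\bra{y}\otimes\ket{\phi_y}\!\bra{\phi_y}$: this is \emph{not} obtainable from $O,O^{-1}$ alone, since the full-state reflection $O(I-2\ket{0}\!\bra{0})O^{\dagger}$ mixes different $y$-branches. You correctly flag this and add the factorisation hypothesis (a $y$-preparation followed by a $y$-controlled rotation on the last qubit); that hypothesis is exactly what is needed, and it is satisfied in every use the paper makes of the lemma (the \qaex oracle under uniform marginal in \cref{alg:QWeakLearner}, and the swap-test construction inside \qgl). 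The median-over-$O(\log(1/\delta))$ copies to get uniform per-branch success is also the right move and matches how the paper itself always wraps the lemma in a majority over $O(\log(1/\gamma))$ or $O(\log(1/\delta'))$ independent runs. One small cosmetic issue: ``uncomputing $\widetilde{\theta_y}$ via inverse phase estimation'' is only approximate because phase estimation is not deterministic; it is cleaner to simply leave $\widetilde{\theta_y}$ as an ancilla (it is information-equivalent to $\widetilde{\eta_{1,y}}=\sin\widetilde{\theta_y}$), which is in any case how the paper treats the output state in its applications.
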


Given a joint distribution $D$ over $\mathcal{X}\times\mathcal{Y}$,  \cref{lem:MultiEstimation} allows us to estimate the conditional probability 
$\mathrm{Pr}\left[{\mathcal{Y}= y|x}\right]$ to within $\varepsilon$ accuracy over all $x$ in superposition.
 
\section{Quantum Agnostic Boosting}\label{sec:qagnosticalgo}
%------------------------
%------------------------
\begin{algorithm}[t!]
\setcounter{AlgoLine}{0}
    \caption{Quantum Agnostic Boosting}
    \label{alg:QAgBoost}
    \DontPrintSemicolon
    \SetKwComment{Comment}{~$\vartriangleright$~}{}
    \LinesNumbered
    \footnotesize
    \SetKwInput{kwInit}{Initialize}
    \KwIn{$(m,\kappa,\eta)$-weak quantum agnostic learner $A$ and its corresponding \qaex oracle.}
    \kwInit{$H^0=0$, $\varepsilon>0$, $T=\bigO{\nicefrac{1}{\eta^2\varepsilon^2}}$. Prepare a set $S$ of $m$ training samples $\{(x_i,y_i)\}_{i\in[m]}$ by measuring the output of \qaex.}
    \KwOut{Hypothesis $H^{\hat{t}}$ for some $\hat{t} \in \{1,2, \ldots T\}$ such that $\textrm{err}_S(H^{\hat{t}})=\min_t \textrm{err}_S(H^{{t}})$.}
    \For{$t=1$ to $T$}{
            {Prepare $2+m$ copies of
            of the state 
            $\ket{\psi_0}=\nicefrac{1}{\sqrt{m}} \sum_{i \in [m]} \ket{x_i,y_i}$.}\;
            
            {Query the oracle $O_{H^{t-1}}$ to obtain $2+m$ copies of the state       $\nicefrac{1}{\sqrt{m}}\sum_{i\in[m]}\ket{x_i,y_i}{\ket{w^t_i}}$
            .}\;
            
            {On the last $m$ copies, perform arithmetic operations to obtain $\nicefrac{1}{\sqrt{m}}\sum_{i\in[m]}\ket{x_i,y_i}\ket{z_i}$.}\\\Comment{Let $z_i=\nicefrac{\left(1+w^t_i\right)}{2}$,  $z^\prime_i=\nicefrac{\left(1-w^t_i\right)}{2}$.}
            
            {Obtain $\ket{\phi_3}$ by a conditional rotation on $\ket{z_i}$. $\ket{\phi_3}=\frac{1}{\sqrt{m}}\sum_{i\in[m]}\ket{x_i,y_i}\left(\sqrt{z_i}\ket{0}+\sqrt{z^\prime_i}\ket{1}\right)$}{\label{line:phi3}.}\;
            
            {Perform a \textsc{Cnot} operation on $\ket{y_i}$ with the last register as control to obtain $m$ copies of $\ket{\phi_4}=\frac{1}{\sqrt{m}}\sum_{i\in[m]}\ket{x_i}\left(\sqrt{z_i}\ket{y_i,0}+\sqrt{z^\prime_i}\ket{\bar{y}_i,1}\right)$.}{\label{line:phi4}}\;
            \Comment{Denote the unitary for obtaining $\ket{\phi_4}$ by $\textsc{Qaex}_{t}$.}            
            {Obtain oracle $O_{h^t}$ corresponding to hypothesis $h^t$ produced by weak learner $A$ using $\textsc{Qaex}_t$ as the quantum example oracle instead of \qaex.}
            
            {Invoke $O_{h^t}$ on the 1st copy of $\ket{\phi_0}$ to obtain $\frac{1}{\sqrt{m}}\sum_{i\in[m]}\ket{x_i,y_i,w^t_i,h^t(x_i)}$.}\\ \Comment{Let $\alpha_t$ $=$ $\frac{1}{{m}}\sum_{i\in[m]}\left(w^t_i y_i h^t(x_i)\right)$.}
            { Prepare the state $$\sqrt{1-\alpha_t}\ket{\psi_0,0}+\sqrt{\alpha_t}\ket{\psi_1,1}.$$ Estimate $\alpha_t$ as $\tilde{\alpha}_t$.}{\label{line:alpha}}\;
            
            {Invoke $O_{h^t}$ on the 2nd copy of $\ket{\phi_0}$ to obtain $\frac{1}{\sqrt{m}}\sum_{i}\ket{x_i,y_i,w^t_i}\ket{-H^{t-1}(x_i)}$.}\\ \Comment{Let $\beta_t=\nicefrac{1}{{m}}\sum_{i\in[m]}\left(w^t_i\cdot y_i\cdot -H^{t-1}(x_i)\right)$.} 
            {Prepare the state $$\sqrt{1-\beta_t}\ket{\psi_0,0}+\sqrt{\beta_t}\ket{\psi_1,1}.$$ Estimate $\beta_t$ as $\tilde{\beta}_t$.}{\label{line:beta}}\;
            
            {If $\tilde{\alpha}_t>\tilde{\beta}_t$, $H^t=H^{t-1}+\tilde{\alpha}_t\cdot h^t$. Otherwise,  $H^t=\left(1-\tilde{\beta}_t\right) H^{t-1}$. Construct the oracle $O_{H^t}$.}\;\label{line:1_algo_boosting}
        }
        {Return the ${H^{\hat{t}}}$ with the least training error on $S$ for $\hat{t} \in [T]$.}
\end{algorithm}
%---------------------
%---------------------
%---------------------
%---------------------

In this section, we describe our quantum agnostic boosting algorithm that has query access to a $(m,\kappa,\eta)$-weak quantum agnostic learner $A$, and to its corresponding $\qaex(D)$ oracle for an unknown joint distribution $D$ over $\mathcal{X}\times\{-1,1\}$. As is common in quantum boosting algorithms (see \cref{sec:related}), we also assume access to earlier hypotheses in the form of oracles $O_{H^1},O_{H^2},\ldots,O_{H^{t-1}}$\footnote{For conciseness, we refer to $\mathrm{sign}(H^{t})$ and $-\mathrm{sign}(H^{t})$ as $H^{t}$ and $-H^{t}$ throughout this work. This notation can be interpreted as a confidence-weighted prediction.}. The pseudo-code for our algorithm is given in~\cref{alg:QAgBoost} which follows the classical \citet{kalai-kanade} algorithm~\footnote{For completeness, we give a short simplified analysis of the \citet{kalai-kanade} algorithm (henceforth referred to as the KK algorithm) in \cref{sec:kalai-kanade}.}. 

At a very high level, \cref{alg:QAgBoost} iteratively computes multiple hypotheses. To compute the hypothesis, say $H^t$ in iteration $t$, it first randomly relabels the examples $\{(x_i,y_i)\}_i$ in a careful manner, and then obtains a hypothesis $h^t$ from the relabeled examples. Next, it estimates the confidence margins $\alpha_t$ and $\beta_t$, and depending on their values, generates $H^t$ from $h^t$ and the hypothesis $H^{t-1}$ from the earlier iteration. The best among $\{H^1, H^2, \ldots\}$ is returned as the strong learner. We note a few key points. Firstly, the \textit{relabeling step} in the KK algorithm can be simulated as replacing each example $(x_i,y_i)$ with two conservatively weighted examples: $(x_i,y_i)$ with weight $(1+w^t_i)/2$ and $(x_i,\bar{y}_i)$ with weight $(1-w^t_i)/2$. We set $w^t_i=\textrm{min}\left\{1,e^{\{-H^{t-1}(x_i)\cdot y_i\}}\right\}$ which can be shown to be a conservative weighting function. It is easy to show using the Chernoff-Hoeffding bounds that $\alpha_t$ and $\beta_t$ are good estimates of $\cor{h^t}{\mathcal{D}^{\prime}_{w^t}}$ and $\cor{-H^{t-1}}{\mathcal{D}^{\prime}_{w^t}}$, respectively, when $m$ is large; here, $\mathcal{D}^{\prime}_{w^t}$ denotes the distribution $D$ relabeled by the weighting function ${w^t}$. 

The quantum algorithm essentially takes care of two things: creating an oracle to return a superposition of relabeled examples and estimating the confidence margins $\alpha_t$ and $\beta_t$. For the latter, we use quantum mean estimation with relative error (see \cref{sec:prelims}) to obtain estimates $\Tilde{\alpha}_t$ and $\Tilde{\beta}_t$ respectively. The first task is accomplished by performing standard operations on a superposition state. We now state the main theorem w.r.t. the complexity and correctness of \cref{alg:QAgBoost}, and provide further exposition and detailed proofs of \cref{alg:QAgBoost} in \cref{sec:qagboostdetails} and \cref{sec:proofs}.

\begin{restatable}[Quantum Agnostic Boosting]{theorem}{thmqagboost}
\label{thm:quantumagnosticboosting} 
Given a $(m,\kappa,\eta)$-weak quantum agnostic learner $A$ with a VC dimension of $d$, \cref{alg:QAgBoost} makes at most $\tildeO{\frac{1}{\eta^4\varepsilon^3}\sqrt{d}\log{\frac{1}{\delta}}}$ queries to $A$ and runs for an additional $\tildeO{\frac{n^2\cdot T}{\eta^2\varepsilon}{\sqrt{d}}\log{\frac{1}{\delta}}}$ time, to obtain a quantum $\left(\kappa/\eta\right)$ optimal agnostic learner with a probability of failure of at most $5\delta T$ for any $\varepsilon>0$ and $T=\bigO{\frac{1}{\eta^2\varepsilon^2}}$.
\end{restatable}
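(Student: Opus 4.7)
The plan is to reduce correctness to the classical \citet{kalai-kanade} (KK) analysis sketched in \cref{sec:kalai-kanade}, and then separately account for (i) the quantum overhead of preparing the relabeled example oracle, (ii) the additive error introduced by quantum mean estimation, and (iii) a union bound across the $T$ iterations. The starting observation is that KK guarantees a $(\kappa/\eta)$-optimal agnostic learner after $T=\bigO{1/(\eta^2\varepsilon^2)}$ rounds provided each round (a) runs the weak learner on the distribution $\mathcal{D}'_{w^t}$ obtained by the conservative relabeling $w^t_i=\min\{1,e^{-H^{t-1}(x_i) y_i}\}$, and (b) updates $H^t$ from $H^{t-1}$ using the two correlations $\cor{h^t}{\mathcal{D}'_{w^t}}$ and $\cor{-H^{t-1}}{\mathcal{D}'_{w^t}}$. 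So it suffices to show that \cref{alg:QAgBoost} faithfully simulates these two ingredients up to small, controllable error.

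First I would verify that $\textsc{Qaex}_t$, as constructed in Lines~\ref{line:phi3}--\ref{line:phi4}, is exactly a $\qaex$ oracle for the relabeled distribution $\mathcal{D}'_{w^t}$. The oracle $O_{H^{t-1}}$ coherently writes down $w^t_i$, the arithmetic step produces $z_i=(1+w^t_i)/2$, the conditional rotation on an ancilla produces amplitudes $\sqrt{z_i},\sqrt{z'_i}$ correctly (since $z_i+z'_i=1$ and $z_i\ge 0$ because $w^t_i\in[0,1]$), and the final $\textsc{Cnot}$ flips $y_i$ on the branch with weight $z'_i$; the resulting basis state amplitudes are exactly $\sqrt{\mathcal{D}'_{w^t}(x,y)/m}$ when pre-measurement. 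This lets us call the $(m,\kappa,\eta)$-weak quantum agnostic learner $A$ on $\textsc{Qaex}_t$ and obtain $h^t$ satisfying $\cor{h^t}{\mathcal{D}'_{w^t}}\ge \eta\cdot\mathrm{optcor}_{\mathcal{D}'_{w^t}}(\mathcal{C})-\kappa$, which is the KK weak-learning guarantee.

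Next I would analyze the mean estimation in Lines~\ref{line:alpha} and \ref{line:beta}. Writing $\alpha_t=\tfrac{1}{m}\sum_i w^t_i y_i h^t(x_i)$ as an expectation, I would use \cref{lem:RelativeEstimation} on the state $\sqrt{1-\alpha_t}\ket{\psi_0,0}+\sqrt{\alpha_t}\ket{\psi_1,1}$ with $p=\Theta(1/m)$ and target relative error $\varepsilon$ to obtain $\tilde{\alpha}_t$ with $|\tilde{\alpha}_t-\alpha_t|\le \varepsilon\alpha_t$ at failure probability $\le \delta$; the same argument applies to $\tilde{\beta}_t$. Plugging this additive slack back into the KK potential-function argument (the slack per round is absorbed by the $\varepsilon$ budget in $T=\bigO{1/(\eta^2\varepsilon^2)}$) yields the claimed $(\kappa/\eta)$-optimality. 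A Chernoff bound handles the fact that $\alpha_t,\beta_t$ computed on the empirical sample $S$ concentrate around their true correlations, using the VC bound $m=\widetilde{\Theta}(d/\varepsilon^2)$ implicit in the weak learner.

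Finally I would tally resources. Per iteration, the weak learner costs $m$ queries to $\textsc{Qaex}_t$, each of which expands to $O(1)$ calls to $O_{H^{t-1}}$ plus $\mathrm{poly}(n)$ gates; the two amplitude-estimation subroutines contribute $\widetilde{O}(\sqrt{m}/\varepsilon\cdot\log(1/\delta))$ queries with $m=\widetilde{\Theta}(d/\varepsilon^2)$, giving $\widetilde{O}(\sqrt{d}/(\varepsilon^2)\cdot\log(1/\delta))$ queries to $A$ per round. Multiplying by $T=\bigO{1/(\eta^2\varepsilon^2)}$ yields the stated query bound $\widetilde{O}\!\left(\tfrac{\sqrt{d}}{\eta^4\varepsilon^3}\log\tfrac{1}{\delta}\right)$, while the additional quantum gate cost per round is $\widetilde{O}(n^2\sqrt{d}/\varepsilon\cdot\log(1/\delta))$ (arithmetic, conditional rotation, oracle composition), giving the stated additive runtime. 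By a union bound over $T$ rounds with five failure sources per round (two amplitude-estimation calls, one weak-learner call, and the two Chernoff concentration steps for $\alpha_t,\beta_t$), the overall failure probability is at most $5\delta T$. The main obstacle will be verifying that the cumulative error from $\tilde{\alpha}_t,\tilde{\beta}_t$ does not destabilise the KK update rule $H^t=H^{t-1}+\tilde{\alpha}_t h^t$ or $H^t=(1-\tilde{\beta}_t)H^{t-1}$ across $T$ rounds; I expect the conservative weighting $w^t_i\le 1$ to keep the margin update bounded so that a relative error of $\varepsilon$ per round composes additively rather than multiplicatively.
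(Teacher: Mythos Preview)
Your approach mirrors the paper's: reduce correctness to the KK potential argument, verify that $\textsc{Qaex}_t$ realises the relabeled distribution $\mathcal{D}'_{w^t}$, bound the amplitude-estimation slack via \cref{lem:RelativeEstimation} plus a Chernoff step, and union-bound over $T$ rounds. The paper packages exactly this into \cref{cl:triangle}, \cref{cl:probab}, and \cref{cl:querycomplexity}.

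Two parameter choices in your write-up need tightening, because as written the arithmetic does not close to the stated bound. First, the relative error in amplitude estimation must be $\Theta(\eta\varepsilon)$, not $\varepsilon$: the KK argument (\cref{cl:eitheror}) only guarantees $\cor{g^t}{\mathcal{D}'_{w^t}}\ge \eta\varepsilon/2$, so to keep $|\tilde{\gamma}_t-\cor{g^t}{\mathcal{D}'_{w^t}}|$ a small fraction of that you need accuracy $O(\eta\varepsilon)$; the paper explicitly sets $\epsilon=\eta\varepsilon/20$ in the proof of \cref{cl:triangle}. Second, the sample-complexity plug-in is $m=\tilde{\Theta}(d/\eta^2)$ from \citet{arunachalam2018optimal}, not $d/\varepsilon^2$. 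With your choices the per-round cost is $\tildeO{\sqrt{d}/\varepsilon^2}$ and the total becomes $\tildeO{\sqrt{d}/(\eta^2\varepsilon^4)}$, which mismatches the theorem. With the corrected parameters the per-round estimation cost is $\tildeO{\sqrt{m}/(\eta\varepsilon)}=\tildeO{\sqrt{d}/(\eta^2\varepsilon)}$, and multiplying by $T=\bigO{1/(\eta^2\varepsilon^2)}$ gives the claimed $\tildeO{\sqrt{d}/(\eta^4\varepsilon^3)}$. Your decomposition of the $5\delta$ per-round failure differs cosmetically from the paper's (which groups it as $3\delta$ for the $\tilde{\gamma}_t$ estimate via \cref{cl:triangle}, $\delta$ for the weak learner, and $\delta$ for a final correlation estimate), but both total $5\delta T$.
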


\begin{psketch}
Almost all steps follow the KK algorithm. The first major source of error arises from the estimation of the margins. So, let's focus on \cref{line:1_algo_boosting} where $H^t$ is generated. At this point, \cref{alg:QAgBoost} needs to determine the combined classifier for the next step. Accordingly, we pick a classifier among $h^t$ and $-H^{t-1}$ that best correlates to the optimal classifier in the relabeled distribution and add a weighted version of it to the earlier hypothesis $H^{t-1}$. We denote this classifier as $g^t$, and observe that it's corresponding estimated confidence margin is ${\gamma}_t=\mathrm{max}\left({\alpha}_t,{\beta}_t\right)$. Of course, the algorithm has computed only $\Tilde{\alpha}_t$ and $\Tilde{\beta}_t$, and we denote $\mathrm{max}\left(\Tilde{\alpha}_t,\Tilde{\beta}_t\right)$ as $\Tilde{\gamma}_t$. 

Observe that 
\begin{equation*}
    \begin{split}
        \abs{\Tilde{\gamma}^t - \cor{g^t}{\mathcal{D}^{\prime}_{w^t}}}&\leq\abs{\Tilde{\gamma}^t-\gamma^t }+ \abs{\gamma^t- \cor{g^t}{\mathcal{D}^{\prime}_{w^t}}}\\&\leq \epsilon\gamma^t+\nicefrac{\eta\varepsilon}{20}\\&\leq\nicefrac{\eta\varepsilon}{10}.
    \end{split}
\end{equation*}
The first inequality follows from the triangle inequality. The second inequality follows from relative estimation (\cref{lem:RelativeEstimation} in \cref{sec:prelims}) and Chernoff-Hoeffding bounds. The final inequality stems from observing that $\gamma^t\leq1$ and setting $\epsilon=\nicefrac{\eta\varepsilon}{20}$. This shows that $\Tilde{\gamma}_t$ is a good estimate of the correlation of $g^t$ on to the relabeled distribution. Therefore, \cref{alg:QAgBoost} chooses the right hypothesis with high probability in every iteration. Each iteration of \cref{alg:QAgBoost} makes $\tildeO{\frac{1}{\eta\varepsilon}\sqrt{m}\log{\frac{1}{\delta}}}$ queries for estimating various quantities using \cref{lem:RelativeEstimation}. This gives us the required query complexity. 
    
Finally, we note that there are three points of failure in every iteration of \cref{alg:QAgBoost}: (a) Estimation of $\Tilde{\gamma}_t$ fails w.p. $\leq 3\delta$, (b) weak learner $A$ fails to produce a hypothesis w.p. $\leq\delta$, and (c) estimating the correlation of $g^t$ fails w.p. $\leq\delta$. Therefore the entire algorithm fails w.p. at most $5\delta T$.
\end{psketch}

\section{Quantum Decision Tree Learning without Membership Queries}
\label{sec:QDTL}
This section shows how to obtain efficient decision tree learning algorithms in the agnostic setting without membership queries, in particular, using only states that are superpositions of pairs of random examples and labels provided by the \qaex oracle. We use an improper learning approach with two main steps: Obtain a weak learner and then use an appropriate boosting algorithm to obtain a strong learner (see \cref{fig:algo-overview}). Since the agnostic/adversarial noise setting is the hardest generalization of \pac learning, it follows that the above blueprint would also work for designing efficient learning algorithms for decision trees without \mq for more restricted noise models such as the random classification noise model, and the realizable/noiseless model. In fact, there exist simpler algorithms for both of these restricted settings as discussed in \cref{sec:QTDLreal} and \cref{sec:QTDLrandom}.
\subsection{The Adversarial Noise (Agnostic) setting}
\label{sec:QTDLagn} 
Here the task is to learn an unknown concept $f(x)$ (represented by a decision tree) given a \qaex oracle. 
There were several difficulties in using the existing techniques to construct a weak learner for the agnostic settings. For example, in the technique proposed by ~\citet{gopalan2008agnostically}, a function is implicitly constructed from the \textsc{Aex} oracle whose samples are used as an approximation of the true labeling function; this is, however, not possible due to the inherent differences between \textsc{Aex} and \qaex oracles.\vspace{0.1cm}

The approach taken in \cref{sec:QTDLreal} would also not work since it is not entirely clear how to obtain a Fourier sampling state\footnote{This is the state $\left(\frac{1}{2^{n/2}}\right)\sum_{x} \hat{f}(x)\ket{x}\ket{...}$.} directly from the \qaex oracle without an explicit oracle $O_f$ where $f(x)$ is the unknown concept we are trying to learn.

Finally, the techniques of \citet{biasedoraclerudyraymond} that we used for the random classification noise setting also do not apply here since it acts only with oracles $O^{\gamma}_f$ where the bias $\gamma$ is the same for all $x$. However, in the agnostic setting, the bias is dependent on $x$.

\begin{algorithm*}[!h]
    \caption{Weak Quantum Agnostic Learner}
    \label{alg:QWeakLearner}
    \DontPrintSemicolon
    \SetKwComment{Comment}{~$\vartriangleright$~}{}
    \LinesNumbered
    \footnotesize
    \SetKwInput{kwInit}{Initialize}
    \setcounter{AlgoLine}{0}
    \KwIn{The \qaex oracle, $t$. \textbf{Initialize} $\delta=\nicefrac{1}{10}$, $\gamma\leq\mathrm{min}\left\{\nicefrac{\delta}{4nt^2},\nicefrac{\varepsilon^2}{8}\right\}$.}
    \KwOut{A $(m,\nicefrac{1}{t},\varepsilon)$ WL for size-$t$ decision trees.}
    {Query the \qaex oracle to obtain $\ket{\psi_1}=\sum_{x,y}\sqrt{\mathcal{D}_{x,y}}\ket{x}\ket{y}=\frac{1}{\sqrt{2^n}}\sum_{x}\ket{x}\left(\sum_y\alpha_{y|x}\ket{y}\right)$.}\;
    {
    {Perform $\ell$ independent estimations of \mae($\varepsilon$,$1-\nicefrac{8}{\pi^2}$) conditioned on the second register to obtain 
    $\frac{1}{\sqrt{2^n}}\sum_x\ket{x}{\left(\sum_y\alpha_{y|x}\ket{y}\right)\left(\beta_{gx}\ket{\Tilde{\alpha}_{1|x}}+\beta_{bx}\ket{\mathrm{Err}}\right)^{\otimes \ell}}
    $.}
    \Comment{Let $\ell=\bigO{\log{1/\gamma}}$.}
    {On each of the $\ell$ registers, perform thresholding on 3rd register to obtain $\frac{1}{\sqrt{2^n}}\sum_x\ket{x}\left(\Hat{\beta}_{gx}\ket{h(x)}\ket{\phi^{\prime}(x)}+\Hat{\beta}_{bx}\ket{\overline{h(x)}}\ket{\phi^{\prime\prime}(x)}\right)^{\otimes \ell}$.} \Comment{Let $h(x)=\mathbb{I}\left[\Tilde{\alpha}_{1|x}>\frac{1}{\sqrt{2}}\right]$.}}
    {Perform majority on $\ket{h(x)}$ registers over all $\ell$ copies to create $\frac{1}{\sqrt{2^n}}\sum_x\ket{x} \ket{\xi(x)} \ket{h^*(x)}$.}\;
    {Let $O_h$ be the combined unitary from steps 1 to 4.}\label{line:weak-learner-Oh}\;
    {Perform a binary search over the intervals $(\tau', \tau]$ of size $\epsilon/16$ on $(0,1]$, to find the largest $\tau$ such that $\qgl(O_h, n, \tau, \epsilon, \delta/4\log(\varepsilon))$ outputs a tuple $(l, \tilde{S})$ with $l=1$. The search terminates if $\tau\le 1/t$.}\;
    {Return the parity monomial $\chi_{\Tilde{S}}$ as our weak learner.}\;
\end{algorithm*}

\cref{alg:QWeakLearner} constructs a weak quantum agnostic learner for decision trees from the $\qaex$ oracle\footnote{\cref{alg:QWeakLearner} is detailed in \cref{sec:QDTLagnapp}.}. We can then use \cref{alg:QAgBoost} to boost this weak learner into a quantum agnostic learner for decision trees. Algorithm 2 first constructs an operator $O_h$ using the $\qaex$ oracle that can act as a biased oracle for some predictor $f$ such that $f$ is an approximation of the Bayes optimal predictor$f_{\mathcal{B}}$. 

Internally, the algorithm checks for each $x$, in superposition, which amongst $\alpha_{0|x}$ and $\alpha_{1|x}$ is the largest and sets $h(x)$ accordingly. Further, it performs multiple such checks over multiple independent copies to reduce any error arising from the amplitude estimation of the $\alpha_{y|x}$ states. Next, \cref{alg:QWeakLearner} offloads the bulk of its work to a quantum rendering of the GL algorithm denoted $\qgl$\footnote{For details refer to \cref{alg:qgl} detailed in \cref{sec:qgl}.}.  

\qgl\ tries to approximate a decision tree with a monomial, and it's operations are motivated by the classical GL algorithm (see \cref{sec:prelims}). The technical difficulty was to generalize it to take as input a strongly biased oracle instead of an (error-free) oracle for a Boolean function and further enhance it to contain three kinds of errors: (a) errors from the biased oracle, (b) errors arising from amplitude estimation, and (c) errors from amplitude amplification (the state that we will amplify may contain false positives arising due to the first two errors, and those will now be incorrectly amplified). 

We now state the main theorem for obtaining Quantum Weak Learners, with a detailed proof in \cref{sec:QDTLagnappproofs}. 

\begin{restatable}[Weak Agnostic Learner for size-$t$ Decision Trees]{theorem}{thmwklearner}
\label{thm:weakagnosticdtlearner}
    Let $\eta=1/t$, and let $\kappa \in [0,1/2)$. Given access to a $\qaex$ oracle, \cref{alg:QWeakLearner} makes $m=\tildeO{\frac{n}{\eta\kappa^3}\cdot\log{\frac{1}{\kappa}}}$ calls to the \qaex oracle and runs for an additional  $\tildeO{\frac{n}{\eta\kappa^3}\cdot\log{\frac{1}{\kappa}}}$ time to obtain a $(m,\kappa,\eta)$-weak quantum agnostic learner for size-$t$ decision trees w.h.p.
\end{restatable}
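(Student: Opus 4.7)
The plan is to mirror the three stages of \cref{alg:QWeakLearner}. First, I would show that the unitary $O_h$ constructed in lines 1--5 behaves, up to small error, as a quantum example oracle for a Boolean function $h$ that agrees with the Bayes optimal predictor $f_{\mathcal{B}}(x) = \mathrm{sign}\bigl(\Pr_{\mathcal{D}}[y{=}1 \mid x] - 1/2\bigr)$ on all but a $\gamma$-fraction of inputs. Second, I would invoke a standard structural fact about decision trees to argue that $f_{\mathcal{B}}$ carries a Fourier coefficient of magnitude at least $\mathrm{optcor}_{\mathcal{D}}(\mathcal{C})/t$ on some monomial $\chi_{S^\star}$. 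Third, I would show that the \qgl{}-driven binary search in line~6 finds such a heavy monomial and that the returned parity $\chi_{\tilde S}$ satisfies $\cor{\chi_{\tilde S}}{\mathcal{D}} \geq (1/t)\cdot\mathrm{optcor}_{\mathcal{D}}(\mathcal{C}) - \kappa$, which is exactly the $(m,\kappa,\eta)$-weak agnostic guarantee for $\eta = 1/t$.

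For the first stage, I would apply \cref{lem:MultiEstimation} to $\ket{\psi_1}$: at a cost of $O(1/\varepsilon)$ queries per copy, it yields a register containing an estimate $\tilde\alpha_{1|x}$ within $\varepsilon$ of $\alpha_{1|x}=\sqrt{\Pr[y{=}1|x]}$ with constant per-input success probability. Using $\ell = O(\log(1/\gamma))$ independent copies and taking majority over the thresholded bits boosts the per-$x$ (in superposition) correctness probability to $1-\gamma$ via a Chernoff bound applied fibrewise. On inputs where $|\alpha_{1|x} - 1/\sqrt{2}|$ exceeds the MAE error, thresholding correctly recovers $f_{\mathcal{B}}(x)$; the remaining ``borderline'' inputs have measure $O(\varepsilon^2)$, so the total disagreement probability is $O(\gamma+\varepsilon^2)$, controlled by the setting $\gamma \leq \min\{\delta/4nt^2,\varepsilon^2/8\}$. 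For the second stage, the classical fact that any size-$t$ decision tree $c$ satisfies $\max_S|\hat c(S)| \geq 1/t$, combined with Bayes optimality of $f_{\mathcal{B}}$, yields the heavy coefficient; transferring this from $f_{\mathcal{B}}$ to $h$ loses only $O(\sqrt\gamma)$ by the first stage, which is well within $\kappa$.

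For the third stage, the binary search sweeps $\tau$ in steps of $\epsilon/16$ down to $1/t$ and invokes $\qgl(O_h,n,\tau,\epsilon,\cdot)$, which returns every prefix with $\pwc(a) \geq \tau$ and excludes every prefix with $\pwc(a) < (1-\epsilon)\tau$. The largest $\tau$ at which \qgl{} outputs a single monomial ($l=1$) must coincide, up to $O(\kappa)$, with the largest Fourier weight of $h$; the emitted $\chi_{\tilde S}$ therefore satisfies $|\hat h(\tilde S)| \geq 1/t - O(\kappa)$, and combining this with \cref{def:correlation} and the $h \approx f_{\mathcal{B}}$ bound delivers the weak-learning inequality. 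Collecting the costs --- $O(1/\varepsilon)$ per MAE call times $\ell = O(\log(1/\gamma))$ copies to build $O_h$, times $O(\log(1/\kappa))$ outer iterations, times the cost of a single \qgl{} invocation on $O_h$ at threshold $\tau \geq 1/t$ --- and substituting $\varepsilon = \Theta(\kappa)$, $\gamma = \mathrm{poly}(\kappa/n)$ matches the stated $\tildeO{n/(\eta\kappa^3)\cdot\log(1/\kappa)}$ budget. The main obstacle is this third stage: \qgl{} internally uses amplitude amplification and estimation on $O_h$, and the MAE-induced false positives inside $O_h$ can themselves be amplified by \qgl{}, with errors compounding as the prefix tree descends through its $n$ levels. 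The delicate accounting --- selecting the MAE accuracy, the internal amplitude-estimation tolerance of \qgl{}, and the amplification success probability in concert so that spurious prefixes never cross $\tau$ while the true heavy prefix always does --- is the crux of the argument.
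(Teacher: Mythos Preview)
Your three-stage plan tracks the paper's algorithm, and your handling of stage~1 (building $O_h$ via \mae{} plus majority) and of the complexity accounting is essentially what the paper does. The problem is in your correctness argument, specifically the structural claim in stage~2 and the bridge you rely on in stage~3.

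In stage~2 you say that ``any size-$t$ decision tree $c$ satisfies $\max_S|\hat c(S)| \geq 1/t$, combined with Bayes optimality of $f_{\mathcal{B}}$, yields the heavy coefficient'' for $f_{\mathcal B}$. This does not follow: knowing that $c$ has a heavy Fourier coefficient, together with $\cor{f_{\mathcal B}}{D}\ge\cor{c}{D}$, tells you nothing about $\widehat{f_{\mathcal B}}(S)$, since $f_{\mathcal B}$ need not be a size-$t$ decision tree at all. The paper's argument is different and is the one that actually works: write
\[
\cor{c}{D}=\sum_{S}\hat c(S)\,\cor{\chi_S}{D}\ \le\ \Big(\sum_S|\hat c(S)|\Big)\cdot\max_S\big|\cor{\chi_S}{D}\big|\ \le\ t\cdot\max_S\big|\cor{\chi_S}{D}\big|,
\]
using the Kushilevitz--Mansour $L_1$ bound $\sum_S|\hat c(S)|\le t$, and conclude $\max_S|\cor{\chi_S}{D}|\ge\mathrm{optcor}_D(\mathcal C)/t$. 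The averaging is over the correlations $\cor{\chi_S}{D}$ of the parities with the \emph{data distribution}, not over Fourier coefficients of $c$ or of $f_{\mathcal B}$.

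This distinction also undermines your stage~3 bridge. The quantity \qgl{} locates is a heavy $\hat h(\tilde S)$, whereas the weak-learning inequality you must deliver is $\cor{\chi_{\tilde S}}{D}\ge (1/t)\,\mathrm{optcor}_D(\mathcal C)-\kappa$. Under the uniform marginal, $\cor{\chi_S}{D}=\widehat{g}(S)$ for the real-valued regressor $g(x)=\mathbb{E}_D[y\mid x]$, \emph{not} for $f_{\mathcal B}=\mathrm{sign}(g)$; so ``$h\approx f_{\mathcal B}$'' is not the relation that lets you pass from $\hat h(\tilde S)$ to $\cor{\chi_{\tilde S}}{D}$. (Your final inequality ``$|\hat h(\tilde S)|\ge 1/t - O(\kappa)$'' is also the realizable bound; the agnostic target involves $\mathrm{optcor}_D(\mathcal C)$.) The paper sidesteps this by packaging the whole link as \cref{clm:clmmodeestimate}: the \iglalgo/\qgl{} output $\tilde S$ satisfies $\big|\cor{\chi_{\tilde S}}{D}-\max_S\cor{\chi_S}{D}\big|\le\kappa$, after which the averaging inequality above gives the weak-agnostic guarantee directly. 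To make your route work you would need exactly this bridge, and the argument you sketched (``$h\approx f_{\mathcal B}$ so lose $O(\sqrt\gamma)$'') does not supply it.
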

\begin{psketch}
Let $\mathcal{C}$ be a family of size-$t$ decision trees with $c\in\mathcal{C}$ as the optimal classifier. Using the Fourier expansion of $c$ and applying \cref{def:correlation} we have $\cor{c(x)}{D}=\sum_{S\subseteq[n]}\hat{c}(S)\cor{\chi_S(x)}{D}$. \citeauthor{kushilevitz1991learning} showed that $\sum_{S\subseteq[n]}\abs{\hat{c}(S)}\leq t$. Using an averaging argument, we have $\mathrm{max}_S\abs{\cor{\chi_S(x)}{D}}\geq\frac{1}{t}\cor{c(x)}{D}$. 

We now claim that \cref{alg:QWeakLearner} produces $\Tilde{S}$ s.t. $\abs{\mathrm{max}_S\;\cor{\chi_{S}(x)}{D}-\cor{\chi_{\Tilde{S}}(x)}{D}}\leq \varepsilon$. This claim follows from the detailed analysis of the \qgl{} algorithm (\cref{alg:qgl}; see \cref{sec:qglproof} for details). Given $\Tilde{S}$, we have $\cor{\chi_{\Tilde{S}}(x)}{D}\geq \frac{1}{t}\cor{c(x)}{D}-\varepsilon$. This is an $\left(m,\varepsilon,\frac{1}{t}\right)$-weak quantum agnostic learner w.r.t $c$ (from \cref{def:corrAgnostic}). 
\end{psketch}

We state the main result of this work now.
\begin{theorem}[Restating \cref{thm:maininfo}]
    For any $\delta>0$, $\varepsilon\in(0,\nicefrac{1}{2})$, there exists a quantum learning algorithm with VC dimension $d$ that makes $\tildeO{\frac{nt^5\sqrt{d}}{\varepsilon^6}\log{(\nicefrac{1}{\delta})}}$ queries to the \qaex oracle and takes an additional $\tildeO{\frac{n^3t^5\sqrt{d}}{\varepsilon^6}\log{(\nicefrac{1}{\delta})}}$ time for $\left(t\varepsilon\right)$-optimal  agnostic PAC learning size-$t$ decision trees on $n$-bits.  
\end{theorem}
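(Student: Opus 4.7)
The plan is to derive this theorem by composing Theorem~\ref{thm:quantumagnosticboosting} (quantum agnostic boosting) with Theorem~\ref{thm:weakagnosticdtlearner} (weak quantum agnostic learner for decision trees). First, I would instantiate the weak learner from Theorem~\ref{thm:weakagnosticdtlearner} with $\eta = 1/t$ and the weak-learner accuracy parameter $\kappa = \varepsilon$, yielding an $(m, \varepsilon, 1/t)$-weak quantum agnostic learner $A$ for size-$t$ decision trees, which uses $\tildeO{nt/\varepsilon^3}$ \qaex queries per invocation and runs in $\tildeO{nt/\varepsilon^3}$ additional time per invocation.

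Next, I would feed $A$ into \cref{alg:QAgBoost}. Since $\kappa/\eta = \varepsilon/(1/t) = t\varepsilon$, the boosting output is a quantum $(t\varepsilon)$-optimal agnostic PAC learner, matching the target accuracy. By Theorem~\ref{thm:quantumagnosticboosting}, the number of calls made by boosting to $A$ is $\tildeO{\sqrt{d}/(\eta^4 \varepsilon^3)\cdot\log(1/\delta)} = \tildeO{t^4 \sqrt{d}/\varepsilon^3 \cdot \log(1/\delta)}$, and the boosting itself runs for an additional $\tildeO{n^2 T \sqrt{d}/(\eta^2 \varepsilon) \cdot \log(1/\delta)} = \tildeO{n^2 t^4 \sqrt{d}/\varepsilon^3 \cdot \log(1/\delta)}$ time, with $T = \bigO{t^2/\varepsilon^2}$.

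To obtain the overall \qaex query complexity, I would multiply the per-invocation cost of $A$ by the number of boosting invocations: $\tildeO{nt/\varepsilon^3} \cdot \tildeO{t^4 \sqrt{d}/\varepsilon^3 \cdot \log(1/\delta)} = \tildeO{n t^5 \sqrt{d}/\varepsilon^6 \cdot \log(1/\delta)}$, matching the stated bound. For the total runtime, each invocation of $A$ involves manipulations of $n$-bit registers inside \qgl whose gate cost contributes an $\tildeO{n^2}$ overhead per \qaex query consumed by $A$; composing this with the number of boosting calls yields $\tildeO{n^3 t^5 \sqrt{d}/\varepsilon^6 \cdot \log(1/\delta)}$, which dominates the intrinsic $\tildeO{n^2 t^4 \sqrt{d}/\varepsilon^3 \cdot \log(1/\delta)}$ boosting overhead and matches the stated runtime.

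Finally, I would handle the failure probability by a union bound. The boosting fails with probability at most $5\delta T$ by Theorem~\ref{thm:quantumagnosticboosting}, and each of the $\tildeO{t^4 \sqrt{d}/\varepsilon^3}$ weak-learner invocations contributes a small failure probability from Theorem~\ref{thm:weakagnosticdtlearner}. Rescaling $\delta \mapsto \delta/\mathrm{poly}(t, n, 1/\varepsilon, \sqrt{d})$ absorbs all these contributions at the cost of only an extra $\log(1/\delta)$ factor already reflected in the stated bounds. The main obstacle is the careful bookkeeping of this error composition, in particular verifying that $A$ satisfies the weak-learner interface expected by \cref{alg:QAgBoost} when it is invoked on the internally-constructed relabeled oracle $\textsc{Qaex}_t$ (rather than directly on the original \qaex), and that the wrapper oracle $O_h$ built inside \cref{alg:QWeakLearner} continues to provide the desired correlation guarantees on the relabeled distribution $\mathcal{D}'_{w^t}$ used by boosting.
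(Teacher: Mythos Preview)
Your proposal is correct and follows exactly the paper's approach: instantiate \cref{thm:weakagnosticdtlearner} with $\kappa=\varepsilon$ and $\eta=1/t$, then plug the resulting weak learner into \cref{thm:quantumagnosticboosting} to obtain a $(\kappa/\eta)=(t\varepsilon)$-optimal agnostic learner. The paper's own proof sketch is terser than yours---it omits the explicit arithmetic for the query and time bounds and the union-bound bookkeeping---but the route is identical.
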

\begin{psketch}
    We use the weak quantum agnostic learner for size-$t$ decision trees constructed in \cref{alg:QWeakLearner} (set $\kappa=\epsilon$ and $\eta=\tfrac{1}{t}$ in \cref{thm:weakagnosticdtlearner}) as a weak learner for the quantum agnostic boosting algorithm as described in \cref{alg:QAgBoost}. By \cref{thm:quantumagnosticboosting}, the output of \cref{alg:QAgBoost} is a $(t\varepsilon)$-optimal agnostic learner for size-$t$ decision trees.
\end{psketch}

\begin{algorithm}[t!]
\caption{QGL algorithm}
\label{alg:qgl}
\setcounter{AlgoLine}{0}
    \DontPrintSemicolon
    \LinesNumbered
    \SetKwComment{Comment}{~$\vartriangleright$~}{}
    \footnotesize
    \SetKwInput{kwInit}{Initialize}
    \SetKwRepeat{Do}{do}{while}
    \KwIn{
      Oracle $\abpo$, n, threshold $\tau$, accuracy $\epsilon\in (0,\tau)$, error $\delta'\in (0,1/2)$}
    \KwOut{A tuple $(l,\Tilde{S})$ such that if $l=1$ then $\hat{h}(\Tilde{S})\ge \tau-\epsilon$ and if $l=0$ then $\hat{h}(S)<\tau,~\forall S$}
    \kwInit{Set $L_F$ to be the first level that has the number of nodes $r$ to be at least $\frac{1}{\tau^2}$ nodes. Set $\lig=L_F$.}
    \kwInit{Set $q=\lceil{\log(1/\epsilon)}\rceil+5$ and $\delta' = \delta\tau^2/8n$.}
    
        {Prepare the state $\ket{\psi_1}=\frac{1}{\sqrt{r}}\sum_{p'\in L_{i,g}}\ket{p'}$}\;
        
        \Do{$i\neq n+1$}{
            {Append $\ket{+}\ket{0}\ket{0^n}\ket{0^n}\ket{0^{q},0}^{\otimes O(\log(1/\delta'))}\ket{0}$ to the state.}\;
            % {Construct an oracle $O$ that performs the following transformations:}\;
            {Prepare the state $\ket{\nu_1}$ in the fourth register and the state $\ket{\nu^p_2}$ in the fifth register.}\\\Comment{$\ket{\nu_1}$ and $\ket{\nu_2^p}$ are defined in \cref{sec:qglproof}.}
            {Perform the swap test with $3^{rd}$ register as the control state and $4^{th}$ and $5^{th}$ registers as the target state to obtain 
                $$\frac{1}{\sqrt{2|\lig|}}\sum_{p\in \lig\times \{0,1\}}\ket{P}\big(\ket{0^q}\ket{0}\big)^{\otimes\ell}\ket{0}.$$}\\\Comment{$L_{i,g}$ is the set of ``good'' prefixes of level $i-1$.}\Comment{$\ket{P}=\ket{p}(\sigma_{0,p}\ket{0}\ket{\phi_{0,p}} + \sigma_{1,p}\ket{1}\ket{\phi_{1,p}}).$}\Comment{\Comment{Let $\ell=\bigO{\log{1/\delta^{\prime}}}$.}}
                {Perform M.A.E.($\epsilon/2$, $1-\frac{8}{\pi^2}$) to estimate $\sigma_{0,p}$ and obtain a state of the form $$\frac{1}{\sqrt{2|\lig|}}\sum_{p\in \lig\times \{0,1\}}\ket{P}\ket{W}^{{\otimes\ell}}\ket{0}.$$ }\;\Comment{$\ket{W}=\big(\upsilon_{p,g}\ket{\widetilde{\sigma_{0,p}}} + \upsilon_{p,b}\ket{E_p}\big)\ket{0}.$}
            {In each of the $O(\log(1/\delta'))$ estimate registers, mark all the estimates that are at least $\frac{1}{2}+\frac{1}{2}(\tau-\epsilon)$.}\;
            {Perform a majority over all the $O(\log(1/\delta^{\prime}))$ indicator registers and store the majority in the last register.}\;
            
        {Amplify the probability of measuring the last register as $\ket{1}$ to obtain the following state of the form with error at most $\delta'/2n$
        $$\alpha_{i,g}\sum_{p\in L_{i+1,g}} \ket{p,\xi_p,1} + \alpha_{i,b}\sum_{p\in L_{i+1,b}} \ket{p,\xi_p,0}.$$
            }
            {Measure the last qubit as $m$. If $m=0$ and $i\neq n$, return to step $1$.}\;
        }
        {Measure the first register as $\tilde{S}$ and return $(m,\tilde{S})$}\;

\end{algorithm}
\subsection{The Noiseless (Realizable) Setting}
\label{sec:QTDLreal}
Many quantum algorithms use the Fourier sampling oracle to obtain speedups over their classical counterparts. A Fourier sampling oracle~\citep{BV97} yields the state $\sum_{S\subseteq[n]}\hat{f}(S)\ket{S}$, given access to an oracle for the function $f$, and upon measurement, returns $S$ such that $\hat{f}^2(S)$ is the largest with high probability. It is, therefore, natural to use as $f$ the labeling function in a realizable setting. 
Further, we can use the majority of several Fourier samples, from multiple copies of the above state, as a realizable weak learner for size-$t$ decision trees from $O_f$ without using membership queries (see \cref{sec:QDTLrealapp} for details). This weak learner can be fed into quantum realizable boosting algorithms~\cite{Arunachalam2020,deWolf2020} to obtain a strong PAC learner for size-$t$ decision trees.
\subsection{The Random Classification Noise setting}
\label{sec:QTDLrandom}

In this model, the labels associated with instances suffer from an independent random noise, and we can model it as a biased oracle $O^\epsilon_f$ for the true labeling function $f(x)$ s.t. $O^\epsilon_f\ket{x}\ket{0^{m-1}}\ket{0}$ gives us the state $\ket{x}\Big(\alpha\ket{u_x}\ket{f(x)}+\beta\ket{w_x}\ket{\overline{f(x)}}\Big)$ with $|\alpha|^2\ge \frac{1}{2}+\epsilon$. \citet{biasedoraclerudyraymond} showed that for any $O(T)$ query quantum algorithm that solves a problem with high probability using access to a perfect oracle, there exists an $O(T/\epsilon)$ query quantum algorithm that solves the same problem with high probability but using access to an $\epsilon$-biased oracle~\footnote{We provide a small discussion on \citet{biasedoraclerudyraymond} in \cref{sec:Iwama discussion} for completeness.}.Thus, to obtain a weak learner in the RCN setting, we only need to design a \qgl{} variant using an unbiased oracle, and then use the result by \citet{biasedoraclerudyraymond} to adapt it for a biased oracle. It suffices to state that the \qgl{} algorithm in \cref{alg:qgl} also works for unbiased oracles. 

\section{Discussion}
\citet{rudin2022interpretable} lists decision tree learning as one of ten grand challenges in interpretable machine learning. Current state-of-the-art decision tree learning algorithms (\cref{table:querycomplexity}) make use of membership queries that detract from human explainability. Therefore, there is a well-motivated need to move away from \mq and towards weaker query models. We give such an algorithm using \qaex queries in this work. We also remark here that the agnostic setting is particularly suitable for NISQ devices. However, since the boosting algorithms proposed in this work appear too complex to be implemented on NISQ hardware, simpler alternatives may be appealing, particularly to the practitioners of quantum ML. The ultimate goal is to obtain efficient learning algorithms for decision trees in the agnostic setting by only using random examples (from the training set). Another immediate follow-up would be obtaining lower bounds for improper learning of decision trees without \mq in the agnostic setting. 

\section{Acknowledgements}
The authors would like to thank Marcel Hinsche for pointing out an error in an earlier version of this work.

\printbibliography
\part{Appendix}

\section{The Kalai-Kanade Algorithm}\label{sec:kalai-kanade}
%------------------------
%------------------------
\begin{algorithm}[h!]
\setcounter{AlgoLine}{0}
    \setcounter{AlgoLine}{0}
    \DontPrintSemicolon
    \LinesNumbered
    \footnotesize
    \KwIn{
      $(m,\kappa,\eta)$-weak agnostic learner $A$ with complexity $R$, and $m$ labeled training samples $S=\{(x_i,y_i)\}_{i\in[m]}$.}
    \KwOut{$\left(\nicefrac{\kappa}{\eta}\right)$-optimal hypothesis $H^{\hat{t}}$ for $1\leq \hat{t}\leq T$ such that $\textrm{err}_S(H^{\hat{t}})=\textbf{argmin}_t \textrm{err}_S(H^{{t}})$.}
    \KwData{Initialize $H^0=0$, and a worst-case guess for $T$.}
        \For{$t=1$ to $T$}{
           
            {Define $w^t_i=-\phi^{\prime}(z_i)=\textrm{min}\left\{1,e^{\{-H^{t-1}(x_i)\cdot y_i\}}\right\}$.}\;
            
            {\textbf{Relabeling Step:} Set $\tilde{y}_i=y_i$ w.p. $(1+w^t_i)/2$, and w.p. $(1-w^t_i)/2$ set $\tilde{y}_i=\bar{y}_i$.}\;
            
            {Pass the set of relabeled samples $\tilde{S}=\{(x_i,\tilde{y}_i)\}_{i\in[m]}$ to $A$ to obtain intermediate hypothesis $h^t$.}\;
           
            {Let $\alpha_t=\frac{1}{{m}}\sum_{i\in[m]}\left(w^t_i\cdot y_i\cdot h^t(x_i)\right)$, and $\beta_t=\frac{1}{{m}}\sum_{i\in[m]}\left(w^t_i\cdot y_i\cdot -H^{t-1}(x_i)\right)$.}\;
            
            {If ${\alpha}_t>{\beta}_t$, set $H^t=H^{t-1}+{\alpha}_t\cdot h^t$. Otherwise, set $H^t=\left(1-{\beta}_t\right) H^{t-1}$.}\;
        }
\caption{The Kalai-Kanade algorithm}
\label{alg:PBAgBoost}
\end{algorithm}
%
%---------------------
%---------------------

We first define the \textit{conservative} weighting function used to relabel training samples.
\begin{defn}[Conservative weighting function]
\label{def:conserveweight}
A function $w:\mathcal{X}\times\{-1,1\}\xrightarrow{}[0,1]$ is conservative for any function $h:\mathcal{X}\xrightarrow{}\{-1,1\}$ if $w(x,-h(x))=1$ for all $x\in\mathcal{X}$.
\end{defn}
Consider the potential function $$\phi(z) = \begin{cases}1-z\;\;\text{if }z\leq 0\\ e^{-z}\;\;\;\;\text{if }z>0\end{cases}.$$
Observe that the weights in the Kalai-Kanade algorithm are set to the negative gradients of $\phi$ whose argument contains a combined hypothesis from the previous iterations; therefore, we try to use the weak learner to form a combined hypothesis that lowers the potential function in gradient descent like fashion. We note here that $\phi(z)$ is differentiable everywhere and $-\phi^{\prime}(z)\in\{1/e,1\}$. We state the following lemma using this fact and Taylor's expansion.
\begin{clm}[Lemma 2 of \cite{kalai-kanade}]
\label{cl:lem2kk}
${\phi(z)-\phi(z+\varepsilon)}\geq -\phi^{\prime}(z)\cdot\varepsilon-\frac{\varepsilon^2}{2}$.
\end{clm}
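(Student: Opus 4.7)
My plan is to treat the inequality as a standard descent-lemma bound for a convex, $C^1$ function whose derivative is globally $1$-Lipschitz, specialized to $\phi$. The excerpt already flags the two ingredients (differentiability and a use of Taylor's expansion), so the work is in pinning down the right global regularity of $\phi$ and then running one line of integration.

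First, I would establish the regularity facts that do all of the work. Despite the piecewise definition at $z=0$, $\phi$ is continuously differentiable on all of $\mathbb{R}$: direct computation gives $\phi'(z)=-1$ for $z\le 0$ and $\phi'(z)=-e^{-z}$ for $z\ge 0$, and the two expressions both equal $-1$ at $z=0$, so $\phi'$ is continuous. Next, I would show that $\phi'$ is $1$-Lipschitz. Away from $0$ the second derivative exists and satisfies $\phi''(z)=0$ for $z<0$ and $\phi''(z)=e^{-z}\in(0,1]$ for $z>0$, so $|\phi''|\le 1$ almost everywhere. Because $\phi'$ is absolutely continuous, the fundamental theorem of calculus upgrades this to the global bound $|\phi'(a)-\phi'(b)|\le|a-b|$ for all $a,b\in\mathbb{R}$.

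Second, the main inequality follows by the one-step descent-lemma calculation. Using the fundamental theorem of calculus and the Lipschitz estimate above, I would write
\begin{equation*}
\phi(z+\varepsilon)-\phi(z)-\phi'(z)\,\varepsilon \;=\; \int_{0}^{\varepsilon}\!\bigl(\phi'(z+t)-\phi'(z)\bigr)\,dt \;\le\; \int_{0}^{\varepsilon}\!|t|\,dt \;=\; \frac{\varepsilon^{2}}{2},
\end{equation*}
where the bound $\int_0^\varepsilon(\phi'(z+t)-\phi'(z))\,dt\le\varepsilon^2/2$ is valid for $\varepsilon$ of either sign: convexity of $\phi$ makes $\phi'(z+t)-\phi'(z)$ have the same sign as $t$, so the integrand has a definite sign and the Lipschitz bound gives the quadratic estimate uniformly in $\mathrm{sgn}(\varepsilon)$. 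Rearranging yields $\phi(z)-\phi(z+\varepsilon)\ge -\phi'(z)\varepsilon-\varepsilon^{2}/2$, which is exactly the claim.

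The only technical subtlety is the kink of $\phi$ at $z=0$: $\phi$ fails to be $C^{2}$ there, so the cleanest pointwise form of Taylor's theorem with a uniform second-derivative bound does not apply literally. I expect this to be the sole nontrivial step, and it is handled transparently by the integral-remainder route above. Equivalently, one can split the interval $[z,z+\varepsilon]$ at $0$ (when $0$ lies strictly between $z$ and $z+\varepsilon$) and apply the usual Taylor bound on each piece, on which $\phi$ is smooth with $\phi''\le 1$; the two contributions combine without loss because $\phi'$ is continuous across the kink.
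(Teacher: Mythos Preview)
Your proposal is correct and follows essentially the same approach as the paper, which merely cites ``Taylor's expansion'' without spelling out a proof; the integral-remainder/descent-lemma computation you give is the standard way to make that appeal rigorous for a $C^1$ function whose derivative is $1$-Lipschitz. (One cosmetic nit: the displayed identity $\int_0^{\varepsilon}|t|\,dt=\varepsilon^2/2$ is literally false when $\varepsilon<0$, but your follow-up sentence already handles both signs correctly, so the argument stands.)
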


The Kalai-Kanade algorithm produces a combined classifier $H^t$ on round $t$, which has a lower potential than $H^{t-1}$ until the potential eventually drops from $1$ in iteration $t=1$ to (or gets arbitrarily close to) $0$ for some iteration $\hat{t}$. Since there is a lower bound on how much the potential can drop every round, this gives us an upper bound on the number of iterations until the Kalai-Kanade algorithm converges. Finally, we see that when the potential drops to its lowest value, the combined classifier $H^{\hat{t}}$ qualifies as an agnostic learner. Let $\mathcal{D}$ be any arbitrary joint distribution over $\mathcal{X}\times\{-1,1\}$. We denote the resulting relabeled distribution\footnote{Technically, this is $\mathcal{D}^{\prime}_{w^t}$, but the usage should be apparent from the context.} (relabeled using any weighting function $w:\mathcal{X}\times\{-1,1\}\xrightarrow{}[0,1]$) by $\mathcal{D}^{\prime}_w$.
\begin{clm}[Lemma 1 of \cite{kalai-kanade}]
\label{cl:lem1kk}
Given any arbitrary distribution $\mathcal{D}$ over $\mathcal{X}\times\{-1,1\}$, an optimal classifier $c$ and a classifier $h$ s.t. $c,h:\mathcal{X}\xrightarrow{}[-1,1]$, and a weighting function $w:\mathcal{X}\times\{-1,1\}\xrightarrow{}[0,1]$ which is conservative for $h$, we can show that $\cor{c}{\mathcal{D}^{\prime}_{w}}-\cor{h}{\mathcal{D}^{\prime}_{w}}\geq\cor{c}{\mathcal{D}}-\cor{h}{\mathcal{D}}$.
\end{clm}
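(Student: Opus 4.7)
The plan is a short calculation that re-expresses both sides as expectations over $\mathcal{D}$, followed by a pointwise sign check that exploits conservativeness. First I would unpack the relabeling: for $(x,y)\sim \mathcal{D}$, the relabeled pair equals $(x,y)$ with probability $\tfrac{1}{2}(1+w(x,y))$ and $(x,-y)$ with probability $\tfrac{1}{2}(1-w(x,y))$. Substituting into $\cor{f}{\mathcal{D}^{\prime}_w} = \mathbb{E}_{\mathcal{D}^{\prime}_w}[f(x)\,y]$ immediately yields, for any predictor $f\colon \mathcal{X}\to[-1,1]$,
\[
\cor{f}{\mathcal{D}^{\prime}_w} \;=\; \mathbb{E}_{(x,y)\sim \mathcal{D}}\bigl[\,f(x)\,y\,w(x,y)\,\bigr].
\]
Applying this identity to both $f = c$ and $f = h$, subtracting, and rearranging, the target inequality becomes
\[
\mathbb{E}_{(x,y)\sim \mathcal{D}}\bigl[\,(c(x)-h(x))\,y\,(1-w(x,y))\,\bigr] \;\leq\; 0.
\]
This is the identity that does all the real work, since it turns relabeled correlation into $w$-weighted correlation on $\mathcal{D}$.

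Second, I would prove the displayed inequality by showing the integrand is pointwise nonpositive, splitting on whether $h$ agrees with $y$. On the set $\{y \neq h(x)\}$, conservativeness gives $w(x,y) = w(x,-h(x)) = 1$, so the integrand vanishes. On the set $\{y = h(x)\}$, we have $h(x)\,y = 1$ and therefore
\[
(c(x)-h(x))\,y \;=\; c(x)\,y - 1 \;\leq\; 0,
\]
because $c(x)\in[-1,1]$ and $y\in\{-1,1\}$ force $c(x)\,y \leq 1$. Combined with $1-w(x,y)\geq 0$, the integrand is pointwise $\leq 0$, and integrating over $\mathcal{D}$ yields the lemma.

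The one point that needs a comment is the mild mismatch between the claim's domain $[-1,1]$ for $h$ and the definition of \emph{conservative}, which is phrased for $h\colon \mathcal{X}\to\{-1,1\}$. Inside the Kalai--Kanade loop the lemma is applied with $h$ equal to the $\pm 1$-valued weak-learner output $h^t$ (or to another $\pm 1$-valued classifier), so the definition applies verbatim and the argument above goes through without change. For genuinely real-valued $h$, the natural reading is to extend conservativeness to $w(x,y) = 1$ whenever $y\cdot h(x)\leq 0$; the same two-branch case analysis then works provided $|h(x)|\geq |c(x)|$ on the agreement set, which is the regime used in the boosting analysis. I do not expect this bookkeeping to be the main obstacle: the entire proof is essentially one identity plus a two-case sign check.
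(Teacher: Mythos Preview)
Your proposal is correct and follows essentially the same route as the paper: both derive the identity $\cor{f}{\mathcal{D}'_w}=\mathbb{E}_{\mathcal{D}}[f(x)\,y\,w(x,y)]$, reduce the claim to showing $\mathbb{E}_{\mathcal{D}}[(c(x)-h(x))\,y\,(1-w(x,y))]\le 0$, and verify this by the same two-case split on whether $h(x)=y$. Your remark about the $\{-1,1\}$ vs.\ $[-1,1]$ range of $h$ is apt and is exactly how the paper (implicitly) uses the hypothesis.
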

\begin{proof}
From \cref{alg:PBAgBoost}, we can see that $\expectover{h(x)\cdot y}{\{x,y\}\in\mathcal{D}^{\prime}_{w}} = \expectover{h(x)\cdot y\cdot w(x,y)}{\{x,y\}\in\mathcal{D}}$. We now evaluate the quantity $\cor{c}{\mathcal{D}^{\prime}_{w}}-\cor{h}{\mathcal{D}^{\prime}_{w}}$ using \cref{def:correlation}. 
\begin{equation*}
    \begin{split}
        \cor{c}{\mathcal{D}^{\prime}_{w}}-\cor{h}{\mathcal{D}^{\prime}_{w}}&=\cor{c}{\mathcal{D}^{\prime}_{w}}-\cor{h}{\mathcal{D}^{\prime}_{w}}+\cor{c}{\mathcal{D}}-\cor{h}{\mathcal{D}}-\cor{c}{\mathcal{D}}+\cor{h}{\mathcal{D}}\\
        &=\cor{c}{\mathcal{D}}-\cor{h}{\mathcal{D}}+\cor{c}{\mathcal{D}^{\prime}_{w}}-\cor{c}{\mathcal{D}}-\cor{h}{\mathcal{D}^{\prime}_{w}}+\cor{h}{\mathcal{D}}\\
        &=\cor{c}{\mathcal{D}}-\cor{h}{\mathcal{D}}+\expectover{c(x)\cdot y\cdot \left(1-w(x,y)\right)}{\{x,y\}\in\mathcal{D}}-\expectover{h(x)\cdot y\cdot \left(1-w(x,y)\right)}{\{x,y\}\in\mathcal{D}}\\
        &=\cor{c}{\mathcal{D}}-\cor{h}{\mathcal{D}}-\expectover{\left(c(x)-h(x)\right)\cdot y\cdot \left(1-w(x,y)\right)}{\{x,y\}\in\mathcal{D}}.
    \end{split}
\end{equation*}

The proof follows from \cref{def:conserveweight}, and the fact that $w(x,y)=-\phi^{\prime}(h(x)y)$. When $h(x)=y$, we have $w(x,y)=1/e\implies 1-w(x,y)>0$, and $c(x)\cdot y\leq 1$ (true for any classifier). Therefore $\cor{c}{\mathcal{D}^{\prime}_{w}}-\cor{h}{\mathcal{D}^{\prime}_{w}}\geq\cor{c}{\mathcal{D}}-\cor{h}{\mathcal{D}}$. Alternatively, when $h(x)=-y$, we have $w(x,y)=1\implies 1-w(x,y)=0$ which implies $\cor{c}{\mathcal{D}^{\prime}_{w}}-\cor{h}{\mathcal{D}^{\prime}_{w}}=\cor{c}{\mathcal{D}}-\cor{h}{\mathcal{D}}$. 
\end{proof}

Consider the case when $\cor{\mathcal{C}}{\mathcal{D}^{\prime}_{w}}=0$. In this case, the optimal classifier behaves like a random guesser under the relabeled distribution. Therefore, either the combined classifier $H^{t-1}$ is worse than random guessing (since it was used to set the weights for relabeling), and we should use its negation as a weak agnostic learner, or the hypothesis returned by the weak learner trained on the relabeled distribution is close to optimal. Therefore, we need to pick either of these to add to the combined classifier for the next iteration. The selected hypothesis is denoted by $g^t$. The Kalai-Kanade algorithm combines the existing combined classifier and $g^t$ (weighted by its correlation $\gamma^t$) to form the combined classifier for the $t$\textsuperscript{th} iteration. Next we state a result that lower bounds the drop in potential in every iteration.
\begin{clm}[Lemma 3 of \cite{kalai-kanade}]
\label{cl:lem3kk}
Given any function $H:\mathcal{X}\xrightarrow{}\mathbb{R}$, hypothesis $h:\mathcal{X}\xrightarrow{}[-1,1]$, a weight $\gamma\in\mathbb{R}$, an arbitrary joint distribution $\mathcal{D}\sim\mathcal{X}\times\{-1,1\}$, a weighting function $w(x,y)=-\phi^{\prime}(y\cdot H(x))$, and a relabeled distribution $\mathcal{D}^{\prime}_{w}$, we have $\expectover{\phi(y\cdot H(x))}{\{x,y\}\sim\mathcal{D}}-\expectover{\phi\left(y\cdot \left(H+\gamma h\right)(x)\right)}{\{x,y\}\sim\mathcal{D}}\geq \cor{h}{\mathcal{D}^{\prime}_{w}} - \frac{\gamma^2}{2}$.
\end{clm}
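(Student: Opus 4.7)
The plan is to reduce the inequality to a pointwise application of \cref{cl:lem2kk} and then average over $(x,y)\sim\mathcal{D}$. For every point $(x,y)$, I would set $z = y\cdot H(x)$ and $\varepsilon = \gamma\cdot y\cdot h(x)$ in \cref{cl:lem2kk}; since $y(H+\gamma h)(x) = yH(x) + \gamma\, y\, h(x)$, this directly yields the pointwise bound
\begin{equation*}
\phi\bigl(yH(x)\bigr) - \phi\bigl(y(H+\gamma h)(x)\bigr) \;\ge\; -\phi'(yH(x))\cdot \gamma\, y\, h(x) \;-\; \tfrac{1}{2}\bigl(\gamma\, y\, h(x)\bigr)^2.
\end{equation*}

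Next I would take the expectation of both sides over $(x,y)\sim\mathcal{D}$. The LHS becomes exactly the potential drop we want to lower bound. For the linear term on the RHS, the definition $w(x,y)=-\phi'(yH(x))$ gives
\begin{equation*}
\mathbb{E}_{(x,y)\sim\mathcal{D}}\bigl[-\phi'(yH(x))\cdot \gamma\, y\, h(x)\bigr] \;=\; \gamma\cdot \mathbb{E}_{(x,y)\sim\mathcal{D}}\bigl[w(x,y)\cdot y\cdot h(x)\bigr],
\end{equation*}
and the first line of the proof of \cref{cl:lem1kk} already records the reweighting-to-relabeling identity $\mathbb{E}_{(x,y)\sim\mathcal{D}}[h(x)\, y\, w(x,y)] = \mathbb{E}_{(x,y)\sim\mathcal{D}^{\prime}_w}[h(x)\, y] = \cor{h}{\mathcal{D}^{\prime}_w}$, the last equality being \cref{def:correlation}. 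For the quadratic term, $|y|=1$ and $|h(x)|\le 1$ give $(y h(x))^2\le 1$, hence $\mathbb{E}_{(x,y)\sim\mathcal{D}}\bigl[\tfrac{1}{2}(\gamma\, y\, h(x))^2\bigr] \le \gamma^2/2$. Combining the three pieces yields the claimed lower bound on the potential drop.

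There is no genuine technical obstacle here: the argument is essentially a single first-order Taylor expansion with a uniform quadratic remainder, pushed through the expectation. The two minor points worth checking are (i) that $\phi$ is differentiable at $z=0$ so that the pointwise bound of \cref{cl:lem2kk} applies at every $(x,y)$, which follows from the piecewise definition of $\phi$ matching smoothly at the origin, and (ii) that the change-of-measure identity borrowed from the proof of \cref{cl:lem1kk} is actually available for the specific $w(x,y)=-\phi'(yH(x))$ used here, which it is since $-\phi'$ takes values in $\{1/e,1\}\subset[0,1]$ and so $w$ is a valid (conservative) weighting function.
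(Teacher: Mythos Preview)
Your proposal is correct and matches the paper's own proof, which simply says to plug $z=yH(x)$ and $\varepsilon=\gamma\,y\,h(x)$ into \cref{cl:lem2kk} and take expectations over $(x,y)\sim\mathcal{D}$. One remark: carrying out your computation actually yields $\gamma\cdot\cor{h}{\mathcal{D}'_w}-\tfrac{\gamma^2}{2}$ on the right-hand side (the statement as printed is missing the leading factor of $\gamma$); your argument delivers this corrected bound, which is the form used downstream.
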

The proof follows directly by plugging in appropriate values for $z$ and $\varepsilon$ in \cref{cl:lem2kk} and taking an expectation over both sides. 
The main result of \cite{kalai-kanade}, which shows that the combined classifier output by the Kalai-Kanade algorithm is an agnostic learner, is as follows. 
\begin{restatable}
[Theorem 1 of~\cite{kalai-kanade}]{lem}{lemagnosticboosting}

Let $A$ be an $(m,\kappa,\eta)$-weak agnostic learner w.r.t. some concept class $\mathcal{C}$ s.t. $\mathrm{VCdim}(\mathcal{C})=d$. Then, for any $\varepsilon,\delta>0$, there exists an agnostic boosting algorithm that uses $m=\bigO{\frac{1}{\eta^2\varepsilon^2}\log{\frac{1}{\delta}}}$ examples and $T=\bigO{\frac{1}{\eta^2\varepsilon^2}}$ iterations, makes $\tildeO{\frac{T\cdot d}{\eta^2\varepsilon}\log{\frac{1}{\delta}}}$ queries to $A$ and runs for an additional $\tildeO{\frac{n^2\cdot T\cdot d}{\eta^2\varepsilon}\log{\frac{1}{\delta}}}$ to output a hypothesis $h$ with probability at least $1=\bigO{\delta T}$, such that
$
\cor{h}{\mathcal{D}}\geq\mathrm{optcor}_{\mathcal{D}}(\mathcal{C})-\frac{\kappa}{\eta}-\varepsilon
$.\label{lem:agnosticboosting}
\end{restatable}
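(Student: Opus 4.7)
The plan is to execute the standard potential-function argument of Kalai and Kanade using the three technical claims (\cref{cl:lem1kk} and \cref{cl:lem3kk}) already stated in the appendix. Define the potential $\Phi_t = \E_{(x,y)\sim\mathcal{D}}[\phi(y H^t(x))]$; since $H^0 = 0$ we have $\Phi_0 = \phi(0) = 1$, and $\Phi_t \ge 0$ throughout. I would show (i) that when the current combined hypothesis $H^{t-1}$ is far from the best attainable correlation, the larger of the two candidate confidence margins $\gamma_t := \max(\alpha_t,\beta_t)$ must itself be large; and (ii) that by \cref{cl:lem3kk} this large margin forces a commensurate drop in the potential. Because the total drop is bounded by $\Phi_0 = 1$, after $T = \bigO{1/(\eta^2\varepsilon^2)}$ iterations some $\hat t \le T$ must have small $\gamma_{\hat t}$, and the corresponding $H^{\hat t-1}$ will satisfy the desired correlation bound.

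For the per-round progress bound, I would first observe that the KK weighting $w^t(x,y) = \min\{1,e^{-y H^{t-1}(x)}\}$ is conservative in the sense of \cref{def:conserveweight} for $\mathrm{sign}(H^{t-1})$. Applying \cref{cl:lem1kk} with $c$ a $\mathcal{D}$-optimal classifier and $h = \mathrm{sign}(H^{t-1})$ gives $\mathrm{optcor}_{\mathcal{D}'_{w^t}}(\mathcal{C}) \ge \mathrm{optcor}_{\mathcal{D}}(\mathcal{C}) - \cor{H^{t-1}}{\mathcal{D}} - \beta_t$, where $\beta_t = -\cor{H^{t-1}}{\mathcal{D}'_{w^t}}$. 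Combining with the weak-learner guarantee $\alpha_t \ge \eta \cdot \mathrm{optcor}_{\mathcal{D}'_{w^t}}(\mathcal{C}) - \kappa$ yields $\alpha_t + \eta\beta_t \ge \eta \Delta_t - \kappa$ with $\Delta_t := \mathrm{optcor}_{\mathcal{D}}(\mathcal{C}) - \cor{H^{t-1}}{\mathcal{D}}$. Since $\gamma_t \ge \alpha_t$ and $\gamma_t \ge \beta_t$, the simple bound $(1+\eta)\gamma_t \ge \alpha_t + \eta\beta_t$ gives the key progress inequality $\Delta_t \le \tfrac{1+\eta}{\eta}\gamma_t + \tfrac{\kappa}{\eta}$.

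Next I telescope. By \cref{cl:lem3kk} applied with $g^t \in \{h^t,-H^{t-1}\}$ and weight $\gamma = \gamma_t$, using $\cor{g^t}{\mathcal{D}'_{w^t}} = \gamma_t$, one obtains $\Phi_{t-1} - \Phi_t \ge \gamma_t^2 - \gamma_t^2/2 = \gamma_t^2/2$. Summing, $\sum_{t=1}^T \gamma_t^2 \le 2\Phi_0 = 2$. If every $\Delta_t$ exceeded $\kappa/\eta + \varepsilon$, the key inequality would force $\gamma_t > \eta\varepsilon/(1+\eta) \ge \eta\varepsilon/2$ for all $t$, giving $\sum \gamma_t^2 > T\eta^2\varepsilon^2/4$; choosing $T = \Theta(1/(\eta^2\varepsilon^2))$ contradicts $\sum \gamma_t^2 \le 2$, so some $\hat t \le T$ satisfies $\cor{H^{\hat t-1}}{\mathcal{D}} \ge \mathrm{optcor}_{\mathcal{D}}(\mathcal{C}) - \kappa/\eta - \varepsilon$. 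To pass from $\mathcal{D}$ to the empirical distribution on the training set $S$, I would invoke uniform convergence over the class of $T$-round combined hypotheses (VC dimension $\tildeO{Td}$) together with Hoeffding estimates of the margins $\alpha_t,\beta_t$ to accuracy $\eta\varepsilon$; taking $m = \tildeO{(1/(\eta^2\varepsilon^2))\log(1/\delta)}$ samples suffices for both, so the training-error minimizer $H^{\hat t}$ inherits the guarantee after absorbing a constant rescaling of $\varepsilon$.

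The main obstacle is precisely the per-round lower bound on $\gamma_t$: the weak learner's guarantee is with respect to $\mathrm{optcor}_{\mathcal{D}'_{w^t}}(\mathcal{C})$, which can be strictly smaller than $\mathrm{optcor}_{\mathcal{D}}(\mathcal{C})$, and \cref{cl:lem1kk} only bridges the two optima up to the slack $-\beta_t$. The maximization $\gamma_t = \max(\alpha_t,\beta_t)$ absorbs exactly this slack, which is why the algorithm must maintain and compare both candidates ($h^t$ and $-H^{t-1}$) rather than blindly adding the weak learner's output. For the resource accounting: each iteration calls $A$ once on the relabeled sample, estimates $\alpha_t,\beta_t$ via $\tildeO{(1/(\eta\varepsilon)^2)\log(1/\delta)}$ fresh Hoeffding samples, and performs $\bigO{n^2 m}$ arithmetic work to evaluate $H^t$ on the $n$-bit instances and update the stored combination; multiplying by $T$ and union-bounding the $T$ weak-learner and Hoeffding failures recovers the stated $\tildeO{Td/(\eta^2\varepsilon)\log(1/\delta)}$ query complexity, $\tildeO{n^2 Td/(\eta^2\varepsilon)\log(1/\delta)}$ additional runtime, and $\bigO{\delta T}$ overall failure probability.
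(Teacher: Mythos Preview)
Your proposal is correct and follows the same potential-function strategy as the paper: use \cref{cl:lem1kk} to relate the weak learner's advantage on the relabeled distribution to the gap $\Delta_t$, invoke \cref{cl:lem3kk} to convert a large margin $\gamma_t$ into a $\Omega(\gamma_t^2)$ potential drop, and telescope to bound $T$. The only organizational difference is that where the paper proves \cref{cl:eitheror} by a case split on whether $\cor{c}{\mathcal{D}'_{w^t}}$ exceeds $\kappa/\eta+\varepsilon/2$, you instead derive the single inequality $(1+\eta)\gamma_t \ge \eta\Delta_t - \kappa$ directly and take the contrapositive; this is a slightly cleaner bookkeeping of the same idea and yields the same $\gamma_t \ge \Omega(\eta\varepsilon)$ bound.
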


For a large enough training set size, we can give a tight enough estimate for the correlation of the new classifier $g^t$, which is an $(m,\kappa,\eta)$-weak agnostic learner. We also see from \cref{cl:lem3kk} that a confidence-based weighted combination drops the potential, and we can lower bound this drop in potential. Therefore, we can obtain an upper bound on the number of iterations of \cref{alg:PBAgBoost}, such that the potential function eventually reaches the minimum possible value. The proof follows from the fact that when the potential function reaches the minimum possible value, the corresponding combined classifier is a $\kappa/\eta$-optimal agnostic learner. 

\subsection{Proof of \cref{lem:agnosticboosting}}
\label{sec:proofagnosticboosting}
\begin{clm}
Either the weak hypothesis produced by \cref{alg:PBAgBoost} on the $t$\textsuperscript{th} iteration, or the negation of the combined hypotheses up to the $t-1$\textsuperscript{th} step has a correlation greater than $\frac{\eta\varepsilon}{3}$.\label{cl:eitheror}
\end{clm}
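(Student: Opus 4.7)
\begin{psketch}
The plan is to argue by contradiction. Suppose that on iteration $t$ both
\[
\cor{h^t}{\mathcal{D}^{\prime}_{w^t}} \leq \tfrac{\eta\varepsilon}{3} \quad \text{and} \quad \cor{-H^{t-1}}{\mathcal{D}^{\prime}_{w^t}} \leq \tfrac{\eta\varepsilon}{3}.
\]
Under this assumption I will derive a bound showing that $H^{t-1}$ is already $(\nicefrac{\kappa}{\eta} + \varepsilon)$-close to the optimal correlation on $\mathcal{D}$, which contradicts the premise that the boosting algorithm has not yet converged by round $t$.

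The first step uses the weak-learner guarantee. Since $h^t$ is produced by an $(m,\kappa,\eta)$-weak agnostic learner on the relabeled distribution $\mathcal{D}^{\prime}_{w^t}$, we have $\cor{h^t}{\mathcal{D}^{\prime}_{w^t}} \geq \eta \cdot \mathrm{optcor}_{\mathcal{D}^{\prime}_{w^t}}(\mathcal{C}) - \kappa$. Combined with the assumption $\cor{h^t}{\mathcal{D}^{\prime}_{w^t}} \leq \eta\varepsilon/3$, this yields $\mathrm{optcor}_{\mathcal{D}^{\prime}_{w^t}}(\mathcal{C}) \leq \varepsilon/3 + \kappa/\eta$. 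In particular, for the optimal concept $c^* \in \mathcal{C}$ on the original distribution $\mathcal{D}$ we have $\cor{c^*}{\mathcal{D}^{\prime}_{w^t}} \leq \varepsilon/3 + \kappa/\eta$.

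The second step invokes \cref{cl:lem1kk}. The weighting function used in iteration $t$ is $w^t(x,y) = \min\{1, e^{-y H^{t-1}(x)}\}$, and evaluating at $y = -H^{t-1}(x)$ gives $\min\{1, e^1\} = 1$, so $w^t$ is conservative for the classifier $H^{t-1}$ in the sense of \cref{def:conserveweight}. Applying \cref{cl:lem1kk} with $h = H^{t-1}$ gives
\[
\cor{c^*}{\mathcal{D}^{\prime}_{w^t}} - \cor{H^{t-1}}{\mathcal{D}^{\prime}_{w^t}} \geq \cor{c^*}{\mathcal{D}} - \cor{H^{t-1}}{\mathcal{D}},
\]
and using $\cor{-H^{t-1}}{\mathcal{D}^{\prime}_{w^t}} = -\cor{H^{t-1}}{\mathcal{D}^{\prime}_{w^t}}$ this rearranges to
\[
\cor{c^*}{\mathcal{D}^{\prime}_{w^t}} + \cor{-H^{t-1}}{\mathcal{D}^{\prime}_{w^t}} \geq \mathrm{optcor}_{\mathcal{D}}(\mathcal{C}) - \cor{H^{t-1}}{\mathcal{D}}.
\]

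Plugging in the two upper bounds ($\cor{c^*}{\mathcal{D}^{\prime}_{w^t}} \leq \varepsilon/3 + \kappa/\eta$ and $\cor{-H^{t-1}}{\mathcal{D}^{\prime}_{w^t}} \leq \eta\varepsilon/3 \leq \varepsilon/3$), I obtain $\mathrm{optcor}_{\mathcal{D}}(\mathcal{C}) - \cor{H^{t-1}}{\mathcal{D}} \leq 2\varepsilon/3 + \kappa/\eta < \varepsilon + \kappa/\eta$, which says $H^{t-1}$ is already a $(\kappa/\eta)$-optimal agnostic hypothesis at accuracy $\varepsilon$. This contradicts the invariant that the potential function has not yet reached its minimum before iteration $t$ (as formalized via \cref{cl:lem3kk}), so at least one of $h^t$ or $-H^{t-1}$ must have correlation strictly greater than $\eta\varepsilon/3$ on $\mathcal{D}^{\prime}_{w^t}$. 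The only subtle point is verifying conservativeness of $w^t$ for $H^{t-1}$ (and not for $-H^{t-1}$, which would fail), so that \cref{cl:lem1kk} can be applied in the correct direction; otherwise the argument is bookkeeping.
\end{psketch}
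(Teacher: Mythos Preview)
Your proposal is correct and uses essentially the same ingredients as the paper's proof: the weak-learner guarantee, conservativeness of $w^t$ for $H^{t-1}$, and \cref{cl:lem1kk}, together with the standing assumption that $H^{t-1}$ is not yet a $(\kappa/\eta)$-optimal agnostic learner. The only cosmetic difference is structure: the paper argues directly by splitting on whether $\cor{c}{\mathcal{D}'_w}>\kappa/\eta+\varepsilon/2$ (obtaining $\cor{h^t}{\mathcal{D}'_w}\ge \eta\varepsilon/2$ in one case and $\cor{-H^{t-1}}{\mathcal{D}'_w}>\varepsilon/2$ in the other), whereas you run the contrapositive; these are logically equivalent. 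One minor nit: the contradiction is with ``$H^{t-1}$ is not yet $(\kappa/\eta)$-optimal'' rather than with \cref{cl:lem3kk}, which concerns the potential drop and is not needed here.
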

\begin{proof}
Consider the optimal hypothesis $c\in\mathcal{C}$, and the combined hypothesis produced by \cref{alg:PBAgBoost} at iteration $t-1$ to be $H^{t-1}$. If $H^{t-1}$ is not a $\beta$-optimal agnostic learner, then we have $\cor{c}{\mathcal{D}}>\cor{H^{t-1}}{\mathcal{D}}+\beta+\varepsilon$. Plugging in \cref{cl:lem1kk}, we have $\cor{c}{\mathcal{D}^{\prime}_w}>\cor{H^{t-1}}{\mathcal{D}^{\prime}_w}+\beta+\varepsilon$.

First consider the case $\cor{c}{\mathcal{D}^{\prime}_w}>\beta+\frac{\varepsilon}{2}$, where $\beta=\frac{\kappa}{\eta}$. Consider the hypothesis $h^t$ produced by the weak learner at the $t$\textsuperscript{th} iteration in \cref{alg:PBAgBoost}. By the weak learning assumption, we have $\cor{h^t}{\mathcal{D}^{\prime}_w}\geq\eta\cdot\cor{c}{\mathcal{D}^{\prime}_w}-\kappa\implies\cor{h^t}{\mathcal{D}^{\prime}_w}\geq \frac{\eta\varepsilon}{2}$. 

Now consider the other case $\beta+\frac{\varepsilon}{2}>\cor{c}{\mathcal{D}^{\prime}_w}$. This implies that $\cor{-H^{t-1}}{\mathcal{D}^{\prime}_w}>\frac{\varepsilon}{2}$.
\end{proof}
\lemagnosticboosting*
\begin{proof}
Since $\eta\in\left[0,\frac{1}{2}\right)$, we have from \cref{cl:eitheror} that $\cor{g^t}{\mathcal{D}^{\prime}_w}\geq\frac{\eta\varepsilon}{3}$, where $g^t$ is the better of the two candidate hypotheses at iteration $t$. Now, consider the margin $\gamma^t$ of the best classifier $g^t$ at iteration $t$ obtained using $m$ training samples.
$$
\gamma^t=\frac{1}{m}\sum_{i\in[m]}g^t(x_i)\cdot y_i\cdot w^t(x_i,y_i).
$$
This margin is simply the estimated correlation of $g^t$. Using Chernoff-Hoeffding bounds and setting $m=\bigO{\frac{1}{\eta^2\varepsilon^2}\log{\frac{1}{\delta}}}$, we have $\abs{\gamma^t-\cor{g^t}{\mathcal{D}^{\prime}_w}}\leq\bigO{\eta\varepsilon}$ with high probability. Setting the appropriate values for $\cor{g^t}{\mathcal{D}^{\prime}_w}$ allows us to lower bound the potential drop to at least $\bigO{\eta^2\varepsilon^2}$ in iteration $t>0$ using \cref{cl:lem3kk}. 

Since the potential function is bounded in the range $[0,1]$, and the potential drops by at least $\bigO{\eta^2\varepsilon^2}$, in $\bigO{\frac{1}{\eta^2\varepsilon^2}}$ iterations, \cref{alg:PBAgBoost} must produce a hypothesis such that the potential function drops to its lowest value. 
Consider the iteration $\tau$ in which potential drops to its lowest. From \cref{cl:lem1kk} we have
\begin{equation*}
    \begin{split}
        \cor{c}{\mathcal{D}^{\prime}_{w}}-\cor{g^\tau}{\mathcal{D}^{\prime}_{w}}&\geq\cor{c}{\mathcal{D}}-\cor{g^\tau}{\mathcal{D}}\\
        \implies\cor{g^\tau}{\mathcal{D}}\geq\cor{c}{\mathcal{D}}-&\left[\cor{c}{\mathcal{D}^{\prime}_{w}}-\cor{g^\tau}{\mathcal{D}^{\prime}_{w}}\right].
    \end{split}
\end{equation*}
Substituting $\cor{c}{\mathcal{D}^{\prime}_w}>\frac{\kappa}{\eta}+\frac{\varepsilon}{2}$ (since the potential is lowest at this iteration) and $\cor{g^\tau}{\mathcal{D}^{\prime}_{w}}\geq\frac{\eta\varepsilon}{3}$, we have $\cor{H^\tau}{\mathcal{D}}\geq\cor{c}{\mathcal{D}}-\frac{\kappa}{\eta}-\varepsilon$. Therefore, we have that  in $\bigO{\frac{1}{\eta^2\varepsilon^2}}$ iterations, \cref{alg:PBAgBoost} produces a $\left(\frac{\kappa}{\eta}\right)$-optimal agnostic learner.
\end{proof}

\bigskip
\section{Details of Quantum Agnostic Boosting Algorithm (\cref{alg:QAgBoost})}
\label{sec:qagboostdetails}
Prepare a set $S$ of $m$ training samples $\{(x_i,y_i)\}_{i\in[m]}$ by measuring the output of \qaex.
At the start of every iteration, we prepare $2+m$ copies of the uniform state 
\begin{equation*}
    \ket{\psi_0}=\ket{\phi_0}=\frac{1}{\sqrt{m}} \sum_{i \in [m]} \ket{x_i,y_i}.
\end{equation*}
Then, we query the $t-1$\textsuperscript{th} oracle $O_{H^{t-1}}$.
\begin{align*}
    \frac{1}{\sqrt{m}} \sum_{i \in [m]} \ket{x_i,y_i}\ket{0}\ket{0}
    \xrightarrow{O_{H^{t-1}}} &\frac{1}{\sqrt{m}}\sum_{i\in[m]}\ket{x_i,y_i}\underset{z_i}{\ket{\underbrace{-H^{t-1}(x_i)\cdot y_i}}}\ket{0}\\
    \xrightarrow{}&\frac{1}{\sqrt{m}}\sum_{i\in[m]}\ket{x_i,y_i}\ket{z_i}{\ket{w^t_i}}.
\end{align*}
The second step uses arithmetic operations to compute $w^t_i=\textrm{min}\left\{1,e^{\{-H^{t-1}(x_i)\cdot y_i}\}\right\}$.
We uncompute the $\ket{z_i}$ register using one query to the $O_{H^{t-1}}$ oracle to obtain $2+m$ copies of the state
\begin{equation*}
    \ket{\phi_2}=\frac{1}{\sqrt{m}}\sum_{i\in[m]}\ket{x_i,y_i}\ket{w^t_i}.
\end{equation*}
Take the first $m$ copies of $\ket{\phi_2}$, and perform arithmetic operations to obtain $m$ copies of the state
\begin{equation*}
    \frac{1}{\sqrt{m}}\sum_{i\in[m]}\ket{x_i,y_i}\ket{\frac{1+w^t_i}{2}}.
\end{equation*}
Perform a conditional rotation on the third register to obtain the state $\ket{\phi_3}$ as shown in \cref{line:phi3}.
\begin{equation*}
    \ket{\phi_3}=\frac{1}{\sqrt{m}}\sum_{i\in[m]}\ket{x_i,y_i}\left(\sqrt{\frac{1+w^t_i}{2}}\ket{0}+\sqrt{\frac{1-w^t_i}{2}}\ket{1}\right).
\end{equation*}
After we perform the C-NOT, we get $Q$ copies of a state $\ket{\phi_4}$ with \textit{conservatively} relabeled samples, as shown in \cref{line:phi4}.
\begin{equation*}
    \ket{\phi_4}=\frac{1}{\sqrt{m}}\sum_{i\in[m]}\ket{x_i}\left(\sqrt{\frac{1+w^t_i}{2}}\ket{y_i,0}+\sqrt{\frac{1-w^t_i}{2}}\ket{\bar{y}_i,1}\right).
\end{equation*}

We denote the unitary for obtaining $\ket{\phi_4}$ as $\textsc{Qaex}_{t}$. Now, we pass $\textsc{Qaex}_{t}$ to the $(m,\kappa,\eta)$-weak quantum agnostic learner $A$, to obtain query access to the $t$\textsuperscript{th} intermediate hypothesis $h^t$. 
 Note that the weak learner $A$ obtains the intermediate hypothesis using $\textsc{Qaex}_t$ as the quantum example oracle instead of \qaex.

At this point, we have two copies of $\ket{\phi_2}$ left over. On the first copy, use the $O_{h^t}$ oracle to obtain
\begin{equation*}
    \ket{\psi_3^1} = \frac{1}{\sqrt{m}}\sum_{i\in[m]}\ket{x_i,y_i}\ket{w^t_i}\ket{w^t_i\cdot y_i\cdot h^t(x_i)}.
\end{equation*}
Perform a conditional rotation on the last register to obtain
\begin{align*}
    \ket{\psi_4^1} &= \frac{1}{\sqrt{m}}\sum_{i\in[m]}\sqrt{\kappa_i}\ket{x_i,y_i}\ket{w^t_i}\ket{\kappa_i}\ket{1}\\
    &\hspace{1.2cm}+\frac{1}{\sqrt{m}}\sum_{i\in[m]}\sqrt{1-\kappa_i}\ket{x_i,y_i}\ket{w^t_i}\ket{\kappa_i}\ket{0}
\end{align*}
where ${\kappa_i}={w^t_i\cdot y_i\cdot h^t(x_i)}$. We can rewrite the first part as
\begin{equation*}
    \sqrt{\alpha_t}\sum_{i\in[m]}\sqrt{\frac{\kappa_i}{{\sum_{i\in[m]}\kappa_i}}}\ket{x_i,y_i}\ket{w^t_i,\kappa_i,1}.
\end{equation*}
We perform quantum amplitude estimation with relative error $\varepsilon$, conditioned on the $\ket{1}$ register, to obtain an estimate $\tilde{\alpha}_t$. On the second copy, use the $O_{H^{t-1}}$ oracle to obtain the state 
\begin{equation*}
    \ket{\psi_3^2} = \frac{1}{\sqrt{m}}\sum_{i\in[m]}\ket{x_i,y_i}\ket{w^t_i}\ket{w^t_i\cdot y_i\cdot -H^{t-1}(x_i)}.
\end{equation*}
Let ${\kappa_i}={w^t_i\cdot y_i\cdot -H^{t-1}(x_i)}$. Perform a conditional rotation on the last register to obtain the state
\begin{align*}
    \ket{\psi_4^2} &= \frac{1}{\sqrt{m}}\sum_{i\in[m]}\sqrt{\kappa_i}\ket{x_i,y_i}\ket{w^t_i}\ket{\kappa_i}\ket{1}\\
    &\hspace{1.2cm}+\frac{1}{\sqrt{m}}\sum_{i\in[m]}\sqrt{1-\kappa_i}\ket{x_i,y_i}\ket{w^t_i}\ket{\kappa_i}\ket{0}.
\end{align*}
We can rewrite the first part as $$\sqrt{\beta_t}\sum_{i\in[m]}\sqrt{\frac{\kappa_i}{{\sum_{i\in[m]}\kappa_i}}}\ket{x_i,y_i}\ket{w^t_i,\kappa_i,1}.$$ Again, we perform quantum amplitude estimation with relative error 
$\varepsilon$ to obtain an estimate for $\tilde{\beta}_t$. We now state the following claims.
\begin{restatable}{clm}{clmtriangle}
\label{cl:triangle}
\cref{alg:QAgBoost} computes estimates of margins $\Tilde{\alpha}_t$ and $\Tilde{\beta}_t$ s.t. $\Tilde{\gamma}_t=\mathrm{max}\left(\Tilde{\alpha}_t,\Tilde{\beta}_t\right)$ using $\tildeO{\frac{1}{\eta\varepsilon}\sqrt{m}\log{\frac{1}{\delta}}}$ queries. $\abs{\Tilde{\gamma}^t - \cor{g^t}{\mathcal{D}^{\prime}_{w^t}}}\leq\nicefrac{\eta\varepsilon}{10}$ with probability $\geq 1-3\delta T$.
\end{restatable}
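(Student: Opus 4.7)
The plan is to decompose the total error via the triangle inequality,
\[
\bigl|\tilde\gamma^t - \cor{g^t}{\mathcal{D}^{\prime}_{w^t}}\bigr| \;\le\; \bigl|\tilde\gamma^t - \gamma^t\bigr| \;+\; \bigl|\gamma^t - \cor{g^t}{\mathcal{D}^{\prime}_{w^t}}\bigr|,
\]
where $\gamma^t=\max(\alpha_t,\beta_t)$ is the \emph{true} empirical margin computed from the sampled training set $S$ of size $m$, and bound each of the two summands by $\eta\varepsilon/20$. The first summand captures the quantum amplitude-estimation error introduced in \cref{line:alpha} and \cref{line:beta} of \cref{alg:QAgBoost}; the second summand is the classical sampling gap between the empirical correlation on $S$ and the true correlation under $\mathcal{D}^{\prime}_{w^t}$.

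For the quantum term, I would invoke the relative-error estimation guarantee (\cref{lem:RelativeEstimation}) on the two states $\sqrt{1-\alpha_t}\ket{\psi_0,0}+\sqrt{\alpha_t}\ket{\psi_1,1}$ and its $\beta_t$-analogue, with relative-error parameter $\epsilon'=\eta\varepsilon/20$ and confidence parameter $k=\Theta(\log(1/\delta))$. Since $|\alpha_t|,|\beta_t|\le 1$, the relative guarantee $|\tilde\alpha_t-\alpha_t|\le \epsilon'\alpha_t$ implies the absolute bound $|\tilde\alpha_t-\alpha_t|\le \eta\varepsilon/20$, and likewise for $\beta_t$; because $\max(\cdot,\cdot)$ is $1$-Lipschitz, $|\tilde\gamma^t-\gamma^t|\le \eta\varepsilon/20$ with failure probability at most $2\delta$. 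For the sampling term, $\alpha_t$ and $\beta_t$ are empirical means of $m$ i.i.d.\ $[-1,1]$-valued random variables whose expectations are $\cor{h^t}{\mathcal{D}^{\prime}_{w^t}}$ and $\cor{-H^{t-1}}{\mathcal{D}^{\prime}_{w^t}}$, so Hoeffding's inequality yields $|\gamma^t-\cor{g^t}{\mathcal{D}^{\prime}_{w^t}}|\le \eta\varepsilon/20$ with failure probability at most $\delta$, provided $m=\Omega((1/\eta^2\varepsilon^2)\log(1/\delta))$, which is already enforced by the overall algorithm.

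Combining, $|\tilde\gamma^t-\cor{g^t}{\mathcal{D}^{\prime}_{w^t}}|\le \eta\varepsilon/10$ with failure probability at most $3\delta$ per iteration; a union bound over the $T$ outer iterations gives the claimed $3\delta T$ failure probability. For the query complexity, I plug into \cref{lem:RelativeEstimation} with $p=1/m$ (the lower bound on the amplitude, since each example appears with weight $1/\sqrt{m}$ in the uniform superposition) and $\epsilon'=\eta\varepsilon/20$, obtaining $\tildeO{\frac{\sqrt{m}}{\eta\varepsilon}\log\frac{1}{\delta}}$ calls per amplitude estimation; exactly two such calls are made per iteration, matching the claimed bound.

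The main obstacle I anticipate is reconciling the signed nature of the margins $\alpha_t,\beta_t\in[-1,1]$ with the nonnegative amplitudes required by \cref{lem:RelativeEstimation}: the conditional rotation in the algorithm must be preceded by an affine preprocessing (for instance $a\mapsto (1+a)/2$) so that the probability amplified in the last register is genuinely in $[0,1]$, and one must then check that inverting this map converts the relative-error guarantee into the desired absolute error on $\alpha_t$ up to an inert constant factor that can be absorbed into the choice $\epsilon'=\eta\varepsilon/20$. Once this bookkeeping is in place, the rest of the argument is a routine combination of amplitude estimation and Hoeffding together with two union bounds.
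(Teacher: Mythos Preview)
Your proposal is correct and follows essentially the same route as the paper: the same triangle-inequality split into a quantum-estimation term (handled via \cref{lem:RelativeEstimation} with relative accuracy $\eta\varepsilon/20$ and $p=\Theta(1/m)$) and a sampling term (handled via Hoeffding with $m=\Theta((\eta\varepsilon)^{-2}\log(1/\delta))$), combined by a union bound over $T$ iterations. The only cosmetic difference is that the paper defines $\gamma^t$ as the empirical margin of the \emph{chosen} classifier $g^t$ rather than $\max(\alpha_t,\beta_t)$, which obviates your Lipschitz step; your additional remark about the affine shift needed to make the margins nonnegative is a detail the paper leaves implicit.
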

\noindent\cref{cl:triangle} shows that we can estimate the correlation of the best classifier $g^t$ at every step $t>0$ with a high probability. 
\begin{restatable}{clm}{clmprobab}
\label{cl:probab}
\cref{alg:QAgBoost} takes as input an $(m,\kappa,\eta)$-weak quantum agnostic learner and outputs a $\left(\kappa/\eta\right)$-quantum agnostic learner with a probability of failure of at most $5\delta T$.
\end{restatable}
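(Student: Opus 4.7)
The plan is to reduce the analysis of \cref{alg:QAgBoost} to that of the classical Kalai-Kanade algorithm (\cref{lem:agnosticboosting}) by accounting for the failure probabilities introduced by the quantum subroutines. Conditioned on all the quantum estimation routines succeeding in every iteration, each iteration of \cref{alg:QAgBoost} is functionally identical to an iteration of \cref{alg:PBAgBoost} up to a controlled additive error in the margin estimates, so the KK potential-drop argument applies almost verbatim.

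For each iteration $t \in [T]$, I would enumerate the failure events as follows: (a) the two invocations of relative-error amplitude estimation (\cref{lem:RelativeEstimation}) in \cref{line:alpha,line:beta} together with Chernoff-Hoeffding concentration of the training-set margins yield $|\Tilde{\gamma}^t - \cor{g^t}{\mathcal{D}^{\prime}_{w^t}}| \leq \eta\varepsilon/10$ with probability at least $1-3\delta$ by \cref{cl:triangle}; (b) the weak quantum agnostic learner $A$, fed the simulated oracle $\textsc{Qaex}_t$, fails to produce a valid hypothesis $h^t$ with probability at most $\delta$; (c) a further Chernoff bound needed to certify that the iterative choice between $h^t$ and $-H^{t-1}$ tracks the correct maximum on $\mathcal{D}^{\prime}_{w^t}$ fails with probability at most $\delta$. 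Summing these per-iteration contributions gives at most $5\delta$ per iteration, and a union bound over $T$ iterations yields the claimed $5\delta T$ overall failure probability.

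Conditional on all these good events, I would plug $\Tilde{\gamma}_t$ into the potential-drop inequality in \cref{cl:lem3kk}: the per-iteration potential drop in \cref{alg:PBAgBoost} is $\Omega(\eta^2\varepsilon^2)$, so the $\bigO{\eta\varepsilon}$ perturbation between $\Tilde{\gamma}_t$ and the true correlation of $g^t$ is absorbed into a constant rescaling of $T$. After $T = \bigO{1/(\eta^2\varepsilon^2)}$ iterations the potential must have dropped to its floor, and replaying the final step of the proof of \cref{lem:agnosticboosting} (using \cref{cl:lem1kk} together with the lower bound $\cor{g^t}{\mathcal{D}^{\prime}_{w^t}} \geq \eta\varepsilon/3$) yields $\cor{H^{\hat{t}}}{\mathcal{D}} \geq \mathrm{optcor}_{\mathcal{D}}(\mathcal{C}) - \kappa/\eta - \varepsilon$, which establishes the $(\kappa/\eta)$-optimal quantum agnostic learning guarantee. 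The main obstacle is ensuring the training-set margins $\alpha_t, \beta_t$ concentrate around their population counterparts uniformly over all $t$: since $w^t$ depends on $H^{t-1}$, which depends on the training data, a naive Chernoff bound over a single fixed function is inapplicable. The standard resolution is a union bound over the VC-dimension class of reachable iterates, which accounts for the $\sqrt{d}$ factor appearing in the query-complexity estimate of \cref{thm:quantumagnosticboosting}.
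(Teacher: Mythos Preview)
Your proposal is correct and follows essentially the same route as the paper: invoke \cref{cl:triangle} and \cref{cl:lem3kk} to argue the potential drops by $\Omega(\eta^2\varepsilon^2)$ per iteration, then replay the proof of \cref{lem:agnosticboosting}, with the $5\delta$ per-iteration failure budget split exactly as you describe ($3\delta$ from margin estimation via \cref{cl:triangle}, $\delta$ from the weak learner, and a final $\delta$ from the remaining concentration/estimation step for $g^t$). One small inaccuracy in your last paragraph: the $\sqrt{d}$ factor in \cref{thm:quantumagnosticboosting} does not come from a uniform-convergence union bound over reachable iterates as you suggest, but simply from substituting the agnostic sample-complexity bound $m=\tilde{\Theta}(d/\eta^2)$ into the $\tilde{O}(\sqrt{m}/(\eta\varepsilon))$ cost of relative amplitude estimation (see \cref{cl:querycomplexity}); the paper does not actually address the data-dependence concern you raise.
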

\noindent\cref{cl:probab} shows that our algorithm succeeds with high probability.
\begin{restatable}{clm}{clmquery}
\label{cl:querycomplexity}
Given a weak $(m,\kappa,\eta)$-weak quantum agnostic learner $A$ with a VC dimension of $d$, \cref{alg:QAgBoost} makes at most $\tildeO{\frac{1}{\eta^4\varepsilon^3}\sqrt{d}\log{\frac{1}{\delta}}}$ queries to $A$.
\end{restatable}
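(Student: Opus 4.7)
\begin{psketch}
The plan is to separately account for the per-iteration query cost to $A$ and then multiply by the total number of boosting rounds $T = O(1/\eta^2\varepsilon^2)$. In each iteration of \cref{alg:QAgBoost}, the weak learner $A$ is invoked to construct the oracle $O_{h^t}$, which is then queried many times inside the amplitude estimation subroutines (\cref{lem:RelativeEstimation}) that produce $\Tilde{\alpha}_t$ and $\Tilde{\beta}_t$. Since each query to $O_{h^t}$ corresponds to an invocation of $A$, counting the queries to $O_{h^t}$ (and to $O_{H^{t-1}}$, which is built transitively from earlier invocations of $A$) across all iterations suffices.

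By \cref{cl:triangle}, the two amplitude estimation invocations in each round together contribute $\tildeO{\sqrt{m}/(\eta\varepsilon) \cdot \log(1/\delta)}$ queries to the relevant oracles. To express this in terms of the VC dimension $d$, I would set $m$ to the minimum sample size for which the $(m,\kappa,\eta)$-weak agnostic learner with VC dimension $d$ achieves the accuracy guarantee demanded by \cref{thm:quantumagnosticboosting}; by standard uniform-convergence arguments this sits at the scale $m = \tildeO{d/\eta^2}$, with the slack in $\varepsilon$ absorbed elsewhere in the boosting analysis. Substituting gives a per-round cost of $\tildeO{\sqrt{d}/(\eta^2\varepsilon) \cdot \log(1/\delta)}$.

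Multiplying by $T = O(1/\eta^2\varepsilon^2)$ iterations then yields the claimed total of $\tildeO{\sqrt{d}/(\eta^4\varepsilon^3) \cdot \log(1/\delta)}$ queries to $A$.

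The main care needed is in the bookkeeping. First, each application of \cref{lem:RelativeEstimation} makes $\tildeO{\sqrt{m}}$ controlled calls to the unitary implementing $O_{h^t}$, and these must be identified one-to-one with queries to $A$ without hidden multiplicative overhead arising from the classical-to-quantum construction of $O_{h^t}$ out of $A$. Second, the $\log(1/\delta)$ factors hidden inside the $\tildeO{\cdot}$ notation must be tracked consistently under the union bound over the $T$ iterations (which would otherwise cost an extra $\log T$). Neither step is technically deep, but both must be executed precisely to hit the stated exponents.
\end{psketch}
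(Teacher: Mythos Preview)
Your proposal is correct and follows essentially the same route as the paper: per-iteration cost $\tildeO{\sqrt{m}/(\eta\varepsilon)\log(1/\delta)}$ from \cref{cl:triangle}, substitute $m=\tildeTheta{d/\eta^2}$ (the paper cites the quantum agnostic sample complexity of \citet{arunachalam2018optimal} rather than generic uniform convergence, but the bound is the same), and multiply by $T=O(1/\eta^2\varepsilon^2)$. Your added caution about identifying calls to $O_{h^t}$ with invocations of $A$ and about tracking $\log$ factors is prudent bookkeeping that the paper's own proof simply elides.
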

\noindent\cref{cl:querycomplexity} gives an upper bound on the query complexity of our boosting algorithm. 

Combining the three claims, we get \cref{thm:quantumagnosticboosting}, which states that the hypothesis $h$ produced by our agnostic boosting algorithm is very close to the accuracy of the best hypothesis in the concept class $\mathcal{C}$ with high probability, essentially guaranteeing that our boosting algorithm agnostically learns $\mathcal{C}$. All the proofs are given in \cref{sec:proofs}.
% \clearpage 
\section{Analysis of \cref{alg:QAgBoost}}
\label{sec:proofs}
The analysis of \cref{alg:QAgBoost} relies heavily on the analysis of the classical Kalai-Kanade algorithm as presented in \cref{sec:kalai-kanade} and \cref{sec:proofagnosticboosting}.
\subsection{Proof of Correctness}
The following claim shows us that the estimated quantity $\tilde{\gamma}_t$ in every iteration of \cref{alg:QAgBoost} is good.
\clmtriangle*
\begin{proof}
Let $g^t$ be the classifier chosen by \cref{alg:QAgBoost} at the $t$\textsuperscript{th} iteration. We denote the correlation of $g^t$ w.r.t. the relabeled distribution as $\cor{g^t}{\mathcal{D}^{\prime}_w}$. Using \cref{def:correlation}, we can restate this as
\begin{equation}
    \cor{g^t}{\mathcal{D}^{\prime}_{w^t}}=\expectover{w^t_i\cdot y_i\cdot g^t(x_i)}{x_i,y_i\sim\mathcal{D}}.
\end{equation}
Let $X_i=w^t_i\cdot y_i\cdot g^t(x_i)$ be a random variable. Applying \cref{def:conserveweight}, we get that $X_i\in\left[-\frac{1}{e},1\right]$. Let $\gamma^t=\frac{1}{m}\sum_{i\in[m]}X_i$.
Then by applying Chernoff-Hoeffding bounds, we have
\begin{align*}
    &{\mathrm{Pr}}\left[\abs{\gamma^t-\cor{g^t}{\mathcal{D}^{\prime}_{w^t}}}\geq\frac{\eta\varepsilon}{20}\right]\\&\leq 2\cdot\mathrm{exp}\left(\frac{-2\frac{\eta^2\varepsilon^2}{400}}{\sum_{i=1}^m \left(1+\frac{1}{e}\right)^2}\right)\\
    &\leq 2\delta.
\end{align*}

Therefore by setting $m={\frac{200}{\eta^2\varepsilon^2}\log{\frac{1}{\delta}}}$, we can obtain with probability at least $1-2\delta$,
\begin{equation}
    \label{eq:tr1}
 \abs{{\gamma}^t - \cor{g^t}{\mathcal{D}^{\prime}_w}}\leq\frac{\eta\varepsilon}{20}.  
\end{equation}
We can obtain an estimate $\Tilde{\gamma}^t$ of $\gamma^t$ using \cref{lem:RelativeEstimation} with probability at least $1-\delta$, such that 
\begin{equation}
    \label{eq:tr2}
    \abs{\Tilde{\gamma}^t-\gamma^t}\leq \epsilon\cdot\gamma^t.
\end{equation}
We note here that \cref{eq:tr2} and \cref{cl:eitheror} together make it impossible for the estimate $\Tilde{\gamma}^t$ to be so far from the actual margin $\gamma^t$, that we end up choosing the classifier with the worse correlation.

Use triangle inequality on \cref{eq:tr1} and \cref{eq:tr2}, we obtain with probability at least $1-3\delta$,
\begin{equation}
    \begin{split}
        \abs{\Tilde{\gamma}^t - \cor{g^t}{\mathcal{D}^{\prime}_{w^t}}}&\leq\abs{\Tilde{\gamma}^t-\gamma^t + \gamma^t- \cor{g^t}{\mathcal{D}^{\prime}_{w^t}}}\\
        &\leq\abs{\Tilde{\gamma}^t-\gamma^t} + \abs{\gamma^t- \cor{g^t}{\mathcal{D}^{\prime}_{w^t}}}\\
        &\leq \epsilon\cdot\gamma^t + \frac{\eta\varepsilon}{20}.
    \end{split}
\end{equation}
 In the last step we observe that $\gamma^t$ can be at most 1. Setting $\epsilon=\frac{\eta\varepsilon}{20}$ gives us the required upper-bound on  $\abs{\Tilde{\gamma}^t - \cor{g^t}{\mathcal{D}^{\prime}_{w^t}}}$. As a final point, we get the required query complexity by plugging the terms of \cref{eq:tr2} into \cref{lem:RelativeEstimation}. 
\end{proof}
We now show that our boosting algorithm actually boosts the given weak learner to produce an agnostic learner.
\clmprobab*
\begin{proof}
Using \cref{cl:triangle} and \cref{cl:lem3kk}, we obtain that the drop in potential for \cref{alg:QAgBoost} at every iteration is bounded by at most $\bigO{{\eta^2\varepsilon^2}}$. We now follow the proof for \cref{lem:agnosticboosting} given in \cref{sec:proofagnosticboosting} to show that \cref{alg:QAgBoost} produces a $(\kappa/\eta)$- agnostic learner in at most $\bigO{\frac{1}{\eta^2\varepsilon^2}}$ iterations. 

We allow the algorithm to fail with probability $3\delta$ during estimation of $\tilde{\gamma}^t$ (see \cref{cl:triangle}). We allow the algorithm to fail with another $\delta$ probability while invoking the weak learner to produce a hypothesis $h^t$ at the $t$\textsuperscript{th} iteration. Finally, estimating the correlation of the constructed hypothesis $g^t$ can fail with an additional probability of $\delta$ at every iteration.
\end{proof}
\subsection{Complexity Analysis}
\label{sec:complexity}
\clmquery*
\begin{proof}
The quantum algorithm runs for $\bigO{\frac{1}{\eta^2\varepsilon^2}}$ iterations (see \cref{cl:probab}). From \cref{cl:triangle}, we see that each iteration makes $\tildeO{\frac{\sqrt{m}}{\eta\varepsilon}\log{\frac{1}{\delta}}}$ queries. 
Plugging in sample complexity upper bounds from \cite{arunachalam2018optimal}, we have $m=\tildeTheta{\frac{d}{\eta^2}}$ for both the classical and quantum case\footnote{Refer Theorem 14 of \cite{arunachalam2018optimal} for the optimal quantum agnostic sample complexity.}, where $d$ is the VC-dimension of the $\left(\frac{\kappa}{\eta}\right)$-optimal agnostic learner. This gives us a total of $\tildeO{\frac{\sqrt{d}}{\eta^4\varepsilon^3}\log{\frac{1}{\delta}}}$ queries made by \cref{alg:QAgBoost}.
\end{proof}
We note here that the classical algorithm has a query complexity of $\tildeO{\frac{d}{\varepsilon^2}\log{\frac{1}{\delta}}}$~\citep{arunachalam2018optimal}. Therefore, we have a polynomial blowup in the given parameters, while we have a quadratic speedup in the VC dimension of the agnostic learner. We restate the main theorem here for completeness.

\thmqagboost*
% \clearpage
\section{Quantum Goldreich-Levin Algorithm}
\label{sec:qgl}
\begin{figure}[!h]
    \centering
    \includegraphics[width=0.5\textwidth]{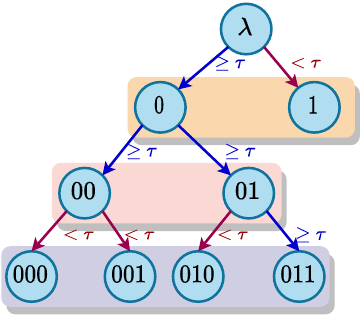}
    \caption{\footnotesize A partial \qgl{} tree (up to $3$ levels) indicating the level ordered traversal of ``good" prefixes, that have \pwc{} above threshold $\tau$. Bad prefixes are indicated using red arrows. The sub-trees of bad prefixes are not explored further. The set of good prefixes for level $i+1$ is decided in superposition over the set of good prefixes of level $i$. Shaded boxes indicate nodes evaluated in superposition.}
    \label{fig:qgltree}
\end{figure}

\begin{restatable}{clm}{clmqgl}
Given an oracle $O_h$, threshold $\tau$, accuracy $\epsilon$ and error parameter $\delta$, \cref{alg:qgl}  performs $O(\frac{n}{\epsilon^2\tau}\log(\frac{\delta\tau^2}{n}))$ queries to $O_h$ and outputs a pair $(l,\tilde{S})$ such that if $l=1$, then $\hat{h}(\tilde{S})\ge \tau-\epsilon$, else if $l=0$, then $\nexists\, S$ such that $\hat{h}(S)\ge \tau$, both w.p. $\geq 1-\delta$.
\label{clm:qgl}
\end{restatable}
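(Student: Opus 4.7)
The plan is to prove a loop invariant for the level-$i$ iterate: the state produced at the end of descent level $i$ is approximately a uniform superposition over the good prefixes of $L_{i,g}$ (those of length $i$ whose $\pwc$ exceeds the threshold), entangled with an indicator register marking goodness. Once this invariant is established, a measurement at $i=n$ returns a leaf $\tilde S$ satisfying $\hat h(\tilde S)\ge \tau-\epsilon$ on the $l=1$ branch, while the $l=0$ branch arises only when the good set becomes empty at some intermediate level, which by monotonicity of $\pwc$ implies no $S$ with $\hat h(S)\ge \tau$ exists.

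I would first verify the base case at level $L_F$. By choice of $L_F$ there are at least $1/\tau^2$ prefixes, so $\ket{\psi_1}$ places weight at most $\tau^2$ on each prefix. Because $\sum_p \pwc(p)\le 1$ there are at most $1/\tau^2$ good prefixes at any level, so the marked mass in the subsequent amplified state is at least $\tau^2$ whenever the good set is non-empty. This lower bound is what drives amplitude amplification in $O(1/\tau)$ rounds per level via \cref{lem:AmplitudeAmplification}.

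For the inductive step from level $i$ to $i+1$, I would isolate three error sources. The swap test between $\ket{\nu_1}$ and $\ket{\nu_2^p}$, with $O_h$ supplying one input, realises a unitary whose amplitude $\sigma_{0,p}$ on the good register tracks $\pwc(p0)/\pwc(p)$ up to a bias inherited from the strongly biased oracle; \cref{lem:MultiEstimation} with accuracy $\epsilon/2$ then yields $\widetilde{\sigma_{0,p}}$ with additive error $\epsilon/2$ on a ``good'' branch of mass $8/\pi^2$. Running $\ell=O(\log(1/\delta'))$ independent MAE copies coherently and taking the majority of the thresholded indicators drives the per-prefix failure probability below $\delta'/2n$: every prefix with true $\pwc$ above $\tau$ is marked and every prefix with $\pwc$ below $\tau-\epsilon$ is left unmarked, except on an error branch of mass at most $\delta'/2n$. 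Amplitude amplification on the resulting indicator qubit then maps the (approximate) uniform superposition over $L_{i,g}$ to an (approximate) uniform superposition over $L_{i+1,g}$, preserving the invariant.

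To close, I would take a union bound over the $n$ descent levels and the at most $1/\tau^2$ prefixes tracked per level; the choice $\delta'=\delta\tau^2/8n$ then makes the total failure probability at most $\delta$. The query count follows by multiplying $O(1/\epsilon)$ calls per MAE shot (\cref{lem:MultiEstimation}), $O(\log(1/\delta'))$ shots per level, $O(1/\tau)$ amplitude amplification rounds per level (\cref{lem:AmplitudeAmplification}), and $n$ levels, yielding the claimed bound. The main obstacle I anticipate is ensuring that false positives from the imperfect thresholding are not systematically amplified: the ``marked'' subspace is an approximate rather than exact projector, so a naive application of amplitude amplification could blow up the spurious amplitude. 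I plan to handle this by decomposing the pre-amplification state as a clean part supported on genuine good prefixes plus an error part of $\ell_2$-norm at most $\sqrt{\delta'/2}$, and using the robustness form of \cref{lem:AmplitudeAmplification} to bound the error part's contribution to the final measurement by $O(\sqrt{\delta'/\tau})$, which is absorbed in the union bound once $\delta'$ is chosen as above.
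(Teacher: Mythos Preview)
Your overall architecture---loop invariant on the superposition over good prefixes, union bound over $n$ levels and $O(1/\tau^2)$ prefixes per level, and the multiplicative query count---matches the paper. But there are two concrete problems.

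First, you misidentify what the swap test computes. You write that $\sigma_{0,p}$ ``tracks $\pwc(p0)/\pwc(p)$''; it does not. The paper shows $|\sigma_{0,p}|^2 = \tfrac{1}{2} + \tfrac{1}{2}|\braket{\nu_1}{\nu_2^p}|^2$, and the squared inner product equals $\pwc_h(p)$ itself (for the child prefix $p$, not a ratio involving the parent). The thresholding step then compares the estimate directly to $\tfrac{1}{2}+\tfrac{1}{2}(\tau-\epsilon)$. Your invariant still works once you fix this, but the mechanics of the marking step need to be restated.

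Second, and more substantively, you wave away the strongly biased oracle with ``up to a bias inherited from the strongly biased oracle.'' This is the heart of the paper's proof and cannot be skipped. Because $O_h$ returns $\overline{h(x)}$ on a branch of amplitude $\eta_{x,b}$ that depends on $x$, the inner product $\ippwc = |\braket{\nu_1}{\nu_2^p}|^2$ is not equal to $\pwc_h(p)$; there are cross terms. The paper's argument expands $\ippwc - \pwc_h(p)$ as $\tfrac{1}{2^{2n}}\sum_{x,z}\Delta_{x,z}$ and bounds each $\Delta_{x,z}$ by $4\gamma$ where $\gamma = \max_x \eta_{x,b}^2$, so that $|\ippwc - \pwc_h(p)|\le 4\gamma \le \epsilon/2$ once $\gamma \le \epsilon/8$. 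Without this calculation you have no guarantee that the thresholded indicator actually separates $\pwc(p)\ge\tau$ from $\pwc(p)<\tau-\epsilon$, and the rest of the invariant collapses. Your anticipated obstacle (false positives being amplified) is downstream of this; the prior question is whether the estimate you are thresholding is meaningful at all under a strongly biased oracle, and that requires the $\Delta_{x,z}$ bound.
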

\begin{algorithm}[h!]
	\caption{\iglalgo: Interval-Search Goldreich Levin \label{algo:iglalgo}~\citep{bera2021quantumnonlinear}}
\DontPrintSemicolon
    \LinesNumbered
    \footnotesize
    \setcounter{AlgoLine}{0}
\SetKwComment{Comment}{~$\vartriangleright$~}{}
% 	% \begin{algorithmic}
	    \KwIn{Oracle $\abpo$, n, accuracy $\epsilon$ and probability of error $\delta$.}
	    {Set $k = \left\lceil \log_2 \tfrac{1}{\epsilon}
	    \right\rceil + 1$ \Comment{$k$ is the smallest integer {\it s.t.}
	    $\tfrac{1}{2^k} \le \frac{\epsilon}{2}$; thus, $\frac{\epsilon}{4}
	    < \frac{1}{2^k} \le \frac{\epsilon}{2}$}}
	    {Set gap $g=\frac{1}{8}\left(\epsilon - \tfrac{1}{2^k}\right)$ and threshold
	    $\tau=\frac{1}{2}$.
	    \Comment{$8g + \frac{1}{2^k} = \epsilon \implies \tfrac{3}{2}\tfrac{\epsilon}{16} \ge g \ge \tfrac{\epsilon}{16}$}
	    }
     \For{$i=1 \ldots k$}{
        % {Invoke {\cgl}$(\abpo, n, \tau, g, \frac{\delta}{k})  \to (l, \tilde{S})$ or {\qgl}$(\abpo, n, \tau, g, \frac{\delta}{k})  \to (l, \tilde{S})$.}\;
        {Invoke {\qgl}$(\abpo, n, \tau, g, \frac{\delta}{k})  \to (l, \tilde{S})$.}\;
		{\textbf{If} {$l=0$}, set $\tau = \tau + \tfrac{1}{2^{i+1}}$. \textbf{Else} set $\tau = \tau -\tfrac{1}{2^{i+1}}$.}\;
		% \EndIf
        {\textbf{If} {$\frac{1}{2^{i+1}}<g/2$ or $\tau \le \frac{1}{t}$}
            {return $\tilde{S}$.}}\;
    }
   \end{algorithm}
%---------------------
%---------------------

\subsection{Proof of correctness of Algorithm~\ref{alg:qgl}:}
\label{sec:qglproof}
We first present how the state evolves at each level of the quantum Goldreich-Levin algorithm.
Consider the $i^{th}$ level. 
Let $L_i$ denote the $i^{th}$ level of the Goldreich-Levin tree.
Also, let $L_{i,g}$ be the set of ``good" prefixes of level $i-1$.
(By ``good," we mean the prefixes $p$ such that $\pwc(p)$ is greater than the threshold.)
The state obtained at the end of the $i-1^{th}$ level will be of the form
\begin{equation*}
    \ket{\psi_{i}}=\begin{cases}
        \frac{1}{|L_{i,g}|}\sum_{p'\in L_{i,g}}\ket{p'}\ket{\phi_{p'}},&i\neq F\\
        &\\
        \frac{1}{\sqrt{r}}\sum_{p'\in L_F}\ket{p'},&i=F.
    \end{cases}
\end{equation*}

Let $k=\bigO{\log(1/\delta')}$. We append the state $\ket{+}\ket{0}^{1+2n}(\ket{0^q}\ket{0})^{\otimes{k}}\ket{0}$ to $\ket{\psi_i}$ to get
\begin{align*}
    \ket{\psi_i^{(1)}}&= {\frac{1}{\sqrt{|\lig|}}\sum_{p'\in \lig}}\ket{p'}\ket{+}\ket{\phi_{p'}}\ket{0}\ket{0^n}\ket{0^n}\Big(\ket{0^q}\ket{0}\Big)^{\otimes k}\ket{0}\\
    &={\frac{1}{\sqrt{2|\lig|}}\sum_{p'\in \lig}}\ket{p'^\frown 0}\ket{\phi_{p'}}\ket{0}\ket{\rho}\ket{0^n}\ket{0^n}(\ket{0^q}\ket{0})^{\otimes k}\ket{0}\\
    &+\ket{p'^\frown 1} \ket{\phi_{p'}}\ket{0}\ket{\rho}\ket{0^n}\ket{0^n}(\ket{0^q}\ket{0})^{\otimes k}\ket{0}\\
    &={\frac{1}{\sqrt{2|\lig|}}\sum_{p\in \lig\times \{0,1\}}}\ket{p}\ket{\phi_p}\ket{0}\ket{0^n}\ket{0^n}(\ket{0^q}\ket{0})^{\otimes k}\ket{0}\\
    &= R_1R_2R_3R_4R_{5}R_{6}R_7R_8~~~(\text{say})
\end{align*}

where $p'^\frown 0$ and $p'^\frown 1$ are $p'$ concatenated with $0$ and $1$ respectively, $\ket{+} = \nicefrac{(\ket{0}+\ket{1})}{\sqrt{2}}$, $R_6 R_7 = R_{6,1}R_{7,1}\cdots R_{6,k}R_{7,k}$.

Notice that the first register contains an equal superposition of all the immediate children of the ``good" prefixes of the previous level.
In the next step, we prepare the state $\ket{\nu_1}$ in $R_4$ where 
\begin{align*}
    \ket{\nu_1} &= \frac{1}{2^{n/2}}\sum_x \ket{x}\Big[\eta_{g,x}(-1)^{h(x)}\ket{h(x)}\ket{\psi_{x,g}}+\eta_{x,b}(-1)^{\ovl{h(x)}}\ket{\ovl{h(x)}}\ket{\psi_{x,b}}\Big].
\end{align*}
We also prepare the state $\ket{\nu_2^p}$ in $R_5$ where
\begin{align*}
    \ket{\nu_2^p} &= \frac{1}{2^{n/2}}\sum_x \ket{x}(-1)^{x_1\cdot p}\Big[\eta_{g,x}\ket{h(x)}\ket{\psi_{x,g}}+\eta_{x,b}\ket{\ovl{h(x)}}\ket{\psi_{x,b}}\Big].
\end{align*}
Then, we perform the swap test with $R_3$ as the control qubit and $R_4$ and $R_5$ as the target qubits.
This gives us,
$$\sigma_{0,p}\ket{0}\ket{\phi_{0,p}} + \sigma_{1,p}\ket{1}\ket{\phi_{1,p}}$$
as the state of the registers $R_3, R_4$ and $R_5$ for each $p$ where $|\sigma_{0,p}|^2 = \frac{1}{2}+\frac{1}{2}|\braket{\nu_1}{\nu_2^p}|^2$.

Next, for each $j=1,\cdots,O(\log(1/\delta'))$, we use M.A.E to $\epsilon/2$-estimate $|\sigma_{0,p}|^2$ in $R_{6,i}$ with error at most $1-\frac{8}{\pi^2}$ and flip the state in $R_{7,i}$ to $\ket{1}$ if the estimate is at least $\frac{1}{2}-\frac{1}{2}(\tau-\epsilon)$.
Notice that this essentially marks all the ``good" states but with an error $1-\frac{8}{\pi^2}$, i.e., the algorithm acts as a biased oracle to mark the ``good" states.

As the next step, we perform a majority over $O(\log(1/\delta'))$ $R_{7,i}$ copies and store the result in $R_8$.
This is followed by an amplitude amplification to obtain the ``good" states with high probability.
For the correctness of majority followed by amplitude amplification, we direct the reader to Appendix~H of~\cite{bera2022few}.
As the last step for this level, we measure $R_8$. If the measurement outcome is $1$, then the post-measurement state would contain an equal superposition of all the ``good" prefixes of that level.

Now, we analyze the quality of the estimate returned by the algorithm.
Recall that the sum of the squares of the Fourier coefficients of a function $h$ at all points with prefix $p$ can be given as $\pwc_h(p) $

\begin{align*}
    &= \sum_{s\in \{0,1\}^{n-|p|}}\hat{h}^2(p^\frown s)\\
    &= \mathbb{E}_{X_1,X_2,Z_1,Z_2}\big[(-1)^{f(X_1^\frown X_2)\xor f(Z_1^\frown Z_2)\xor p\cdot X_1\xor p\cdot Z_1}\big]\\
    &= \frac{1}{2^{2n}}\sum_{x_1,x_2,z_1,z_2}(-1)^{f(x_1^\frown x_2)\xor f(z_1^\frown z_2)\xor p\cdot x_1\xor p\cdot z_1}
\end{align*}

where the random variables $X_1$ and $Z_1$ are samples uniformly from $\{0,1\}^{|p|}$ and $X_2$ and $Z_2$ are samples uniformly from $\{0,1\}^{n-|p|}$.

Now, consider the following states
\begin{align*}
\ket{\nu_{1}}&= \frac{1}{2^{n/2}}\sum_x \ket{x}\Big(\eta_{g,x}(-1)^{h(x)}\ket{h(x)}\ket{\psi_{x,g}}+\eta_{x,b}(-1)^{\ovl{h(x)}}\ket{\ovl{h(x)}}\ket{\psi_{x,b}}\Big)    
\end{align*}
and
\begin{align*}
    \ket{\nu^p_{2}}= \frac{1}{2^{n/2}}\sum_x \ket{x}(-1)^{x_1\cdot p}&\big(\eta_{g,x}\ket{h(x)}\ket{\psi_{x,g}}+\eta_{x,b}\ket{\ovl{h(x)}}\ket{\psi_{x,b}}\big)
\end{align*}

where $x_1$ is the first $|p|$ bits of $x$.
Let $\ippwc = \big|\braket{\nu_1}{\nu_2^p}\big|^2$.
Naturally, if $\eta_{x,b}=0$, $\ippwc$ directly yields us $\pwc_h(p)$. i.e,
\begin{align*}
    \ippwc &= \Big(\frac{1}{2^n}\sum_{x}(-1)^{h(x)\xor x_1.p}\Big)^2\\
    &= \frac{1}{2^{2n}}\sum_{x,z\in \ftwon{n}} (-1)^{h(x)\xor h(z)\xor p\cdot x_1\xor p\cdot z_1} = \pwc_h(p).
\end{align*}

However, if $\eta_{x_b}\neq 0$ for some $x$, then the cross terms would push the inner product away from $\pwc_h(p)$.
Here, we show that if one is interested only in an $\epsilon$-estimate of $\pwc_h(p)$, then under certain conditions on $\eta_{x,b}$, an $\epsilon$-estimate of the inner product is not too far away from $\pwc_h(p)$.
More concretely, we show that for an $\epsilon/2$-estimate $\widehat{\ippwc}$ of $\ippwc$,
$$\Big|\widehat{\ippwc} - \pwc_h(p)^2\Big| \le \epsilon$$
with probability at least $1-\delta$ if $\gamma = \max_x\{\eta_{x,b}\} \le \epsilon/8$.

Let $\eta_{x,b}\neq 0$ for some $x$'s. Then, we have
\begin{equation*}
   \begin{split}
       \ippwc = {\frac{1}{2^{2n}}\sum_{x,z\in \ftwon{n}}} (-1)^{p\cdot x_1\xor p\cdot z_1}
       \Big[&\eta^2_{x,g}\eta^2_{z,g}(-1)^{h(x)\xor h(z)}+ \eta^2_{x,b}\eta^2_{z,g}(-1)^{\ovl{h(x)}\xor h(z)}
       \\&+
   \eta^2_{x,g}\eta^2_{z,b}(-1)^{h(x)\xor \ovl{h(z)}}+ \eta^2_{x,b}\eta^2_{z,b}(-1)^{\ovl{h(x)}\xor \ovl{h(z)}}\Big]
   \end{split}
\end{equation*}

This implies
\begin{align*}
   &\ippwc-\pwc_h^2(p)\\
   &= \frac{1}{2^{2n}}\sum_{x,z\in \ftwon{n}} (-1)^{p\cdot x_1\xor p\cdot z_1}\bigg[\eta^2_{x,g}\eta^2_{z,g}(-1)^{h(x)\xor h(z)}+ \eta^2_{x,b}\eta^2_{z,g}(-1)^{\ovl{h(x)}\xor h(z)}+
   \eta^2_{x,g}\eta^2_{z,b}(-1)^{h(x)\xor \ovl{h(z)}} \\
   &+ \eta^2_{x,b}\eta^2_{z,b}(-1)^{\ovl{h(x)}\xor \ovl{h(z)}}\bigg]-\frac{1}{2^{2n}}\sum_{x,z\in \ftwon{n}} (-1)^{h(x)\xor h(z)\xor p\cdot x_1\xor p\cdot z_1}\\
   &= \frac{1}{2^{2n}}\sum_{x,z\in \ftwon{n}} (-1)^{p\cdot x_1\xor p\cdot z_1}\bigg[\eta^2_{x,g}\eta^2_{z,g}(-1)^{h(x)\xor h(z)}+ \eta^2_{x,b}\eta^2_{z,g}(-1)^{\ovl{h(x)}\xor h(z)}+\eta^2_{x,g}\eta^2_{z,b}(-1)^{h(x)\xor \ovl{h(z)}}\\
   &+ \eta^2_{x,b}\eta^2_{z,b}(-1)^{\ovl{h(x)}\xor \ovl{h(z)}}-(-1)^{h(x)\xor h(z)}\bigg]\\
\end{align*}

For any fixed $x,z \in \ftwon{n}$, let 
\begin{align*}
    \Delta_{x,z} &= \eta^2_{x,g}\eta^2_{z,g}(-1)^{h(x)\xor h(z)}+ \eta^2_{x,b}\eta^2_{z,g}(-1)^{\ovl{h(x)}\xor h(z)} +\eta^2_{x,g}\eta^2_{z,b}(-1)^{h(x)\xor \ovl{h(z)}}+ \eta^2_{x,b}\eta^2_{z,b}(-1)^{\ovl{h(x)}\xor \ovl{h(z)}}- (-1)^{h(x)\xor h(z)}.
\end{align*}

Using the equality $1-\eta^2_{x,g}\eta^2_{z,g} = \eta^2_{x,g}\eta^2_{z,b} + \eta^2_{x,b}\eta^2_{z,g} + \eta^2_{x,b}\eta^2_{z,b}$ in the above equation, we get

\begin{align*}
    \Delta_{x,z} &= \eta^2_{x,b}\eta^2_{z,g}\big[(-1)^{\ovl{h(x)}\xor h(z)}- (-1)^{h(x)\xor h(z)}\big]\\
    &+\eta^2_{x,g}\eta^2_{z,b}\big[(-1)^{h(x)\xor \ovl{h(z)}}- (-1)^{h(x)\xor h(z)}\big] \\
    &+\eta^2_{x,b}\eta^2_{z,b}\big[(-1)^{\ovl{h(x)}\xor \ovl{h(z)}} - (-1)^{h(x)\xor h(z)}\big]\\
\end{align*}
giving the equation
$$\ippwc - \pwc^2_h(p) = \frac{1}{2^{2n}}\sum_{x,z\in \ftwon{n}}\Delta_{x,z}.$$

Notice that for any $a,b,c,d\in \{0,1\}$, $-2\le \big[(-1)^{a\xor b}-(-1)^{c\xor d}\big] \le 2$.
From this observation, we get that
$$\Delta_{x,z}\ge -2 \Big(\eta^2_{x,b}\eta^2_{z,g} + \eta^2_{x,g}\eta^2_{z,b} + \eta^2_{x,b}\eta^2_{z,b}\Big)
$$
and
$$
\Delta_{x,z}\le 2 \Big(\eta^2_{x,b}\eta^2_{z,g} + \eta^2_{x,g}\eta^2_{z,b} + \eta^2_{x,b}\eta^2_{z,b}\Big).$$

Now,
\begin{align*}
    &\eta^2_{x,b}\eta^2_{z,g} + \eta^2_{x,g}\eta^2_{z,b} + \eta^2_{x,b}\eta^2_{z,b}\\
    &= \eta^2_{x,b}\eta^2_{z,g} + \eta^2_{z,b}(\eta^2_{x,g} + \eta^2_{x,b})\\
    &= \eta^2_{x,b}\eta^2_{z,g} + \eta^2_{z,b}\\
    &\le \eta^2_{x,b} + \eta^2_{z,b}\\
    &\le 2\gamma.
\end{align*}

The second-last inequality follows since $\eta^2_{z,g}\le 1$ and the last equality follows because $\eta^2_{x,b}\le \gamma~\text{and}~\eta^2_{z,b}\le \gamma$. This gives us that $-4\gamma \le \Delta_{x,z} \le 4\gamma$ implying
$$-4\gamma \le \frac{1}{2^{2n}}\sum_{x,z \in \ftwon{n}}\Delta_{x,z} \le 4\gamma.$$
Or, $$\big|\ippwc-\pwc^2_h(p)\big| \le 4\gamma.$$

Now, if $\gamma \le \epsilon/8$, then $4\gamma \le \epsilon/2$. Then, for any $\epsilon/2$-estimate of $\ippwc$, we have,
$$\big|\widehat{\ippwc}-\pwc^2_h(p)\big| \le \big|\widehat{\ippwc}-\ippwc\big| + \big|\ippwc-\pwc^2_h(p)\big| \le \epsilon.$$

Now, we show that if $\delta'<\delta\tau^2/4n$, then the probability that this algorithm fails is at most $\delta$.
The error induced due to estimation is at most $\delta'$.
The number of candidate prefixes at any level for which estimates are obtained is at most $2/\tau^2$.
Using union bound on errors, the error at any level is at most the sum of errors due to the estimation and the amplification routines.
This gives us $\delta_{level} \le \frac{2\delta'}{\tau^2} + \frac{\delta}{2n}$.
Hence, the total error of the algorithm at most $n\cdot \big(\frac{2\delta'}{\tau^2} + \frac{\delta}{n2}\big) = \frac{2n\delta'}{\tau^2} + \frac{\delta}{2}$.
Setting $\delta'\le \frac{\delta\tau^2}{4n}$, the upper bound on the total error is $\delta$.

% \clearpage

\section{Quantum Decision Tree Learning: Agnostic Setting}
\label{sec:QDTLagnapp}
We detail the steps of \cref{alg:QWeakLearner} as follows:
\begin{enumerate}
    \item We start with the state $\sum_{x,y}\sqrt{\mathcal{D}_{x,y}}\ket{x}\ket{y}$. Assuming a uniform marginal distribution over $\mathcal{X}$, this can be written as $\frac{1}{\sqrt{2^n}}\sum_{x}\ket{x}\left(\sum_y\alpha_{y|x}\ket{y}\right)$.
    \item We make $k=\bigO{\log{\frac{1}{\gamma}}}$ independent estimations using \cref{lem:MultiEstimation} (M.A.E.) with parameters $(\varepsilon,1-8/\pi^2)$ to obtain the state \\
    \begin{equation*}
        \frac{1}{\sqrt{2^n}}\sum_x\ket{x}{\left(\sum_y\alpha_{y|x}\ket{y}\right)\left(\beta_{gx}\ket{\Tilde{\alpha}_{1|x}}+\beta_{bx}\ket{\mathrm{Err}}\right)^{\otimes k}}.
    \end{equation*}
    We note here that we want to set the value of $h(x)$ as the label in the third register with the larger conditional probability. 
   
    \item On each of the $k=\bigO{\log{\frac{1}{\gamma}}}$ registers, perform thresholding to obtain
    \begin{equation*}
        \frac{1}{\sqrt{2^n}}\sum_x\ket{x}\left(\Hat{\beta}_{gx}\ket{h(x)}\ket{\psi^{\prime}(x)}+\Hat{\beta}_{bx}\ket{\overline{h(x)}}\ket{\psi^{\prime\prime}(x)}\right)^{\otimes k}.
    \end{equation*}
    \item Perform majority on $\bigO{\log{\frac{1}{\gamma}}}$ copies of $\ket{h(x)}$.
    \item Let the product of unitaries from steps 1 to 5 be denoted as $\abpo$. Run \cref{algo:iglalgo} with the oracle $\abpo$ and accuracy and error parameters as $\epsilon$ and $\delta$ to obtain a string $\tilde{S}$.
    \item Return $\chi_{\tilde{S}(x)}$ as our desired weak learner.
\end{enumerate}

%---------------------
%---------------------

\subsection{Proofs of Agnostic Setting}
\label{sec:QDTLagnappproofs}
We now state the following claims, which prove the correctness and give us the query and time complexity of \cref{alg:QWeakLearner}. First, we restate \cref{clm:qgl}, which is proven in \cref{sec:qgl}.

\clmqgl*
\begin{restatable}{clm}{clmiglquerycomplexity}
\cref{alg:QWeakLearner} performs $\tildeO{\frac{n}{\eta\kappa^3}\cdot\log{\frac{1}{\kappa}}}$ queries to $\qaex$ using the $\qgl$ algorithm (\cref{alg:qgl}) where $\kappa$ is the accuracy parameter. The time complexity for \cref{alg:QWeakLearner} is the same as its query complexity with a logarithmic overhead.\label{clm:clmiglquerycomplexity} 
\end{restatable}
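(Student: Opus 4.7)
The plan is to decompose the total number of queries that \cref{alg:QWeakLearner} makes to the \qaex{} oracle into three nested layers and multiply the per-layer costs. From the outside in, these layers are: (i) the outer binary search over the threshold $\tau$; (ii) each invocation of the $\qgl$ routine (\cref{alg:qgl}); and (iii) each invocation of the wrapper oracle $O_h$ (\cref{line:weak-learner-Oh}) that $\qgl$ uses in place of a classical function oracle.

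First I would bound layer (i). The binary search in the penultimate step of \cref{alg:QWeakLearner} operates on intervals of width $\kappa/16$ on $(0,1]$ and terminates as soon as $\tau \le 1/t = \eta$, so it runs for at most $O(\log(1/\kappa))$ outer iterations, each of which invokes $\qgl$ once. Next, for layer (ii), I would apply \cref{clm:qgl} directly: a single call to $\qgl$ at threshold $\tau$ with accuracy $\kappa$ performs $\tildeO{n/(\kappa^{2}\tau)}$ queries to $O_h$, and the worst case (smallest $\tau$) reached during the search is $\tau = \eta$, so each $\qgl$ invocation costs at most $\tildeO{n/(\kappa^{2}\eta)}$ queries to $O_h$.

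The remaining work is in layer (iii). Unrolling the construction of $O_h$, one call consists of a single \qaex{} query, followed by $\ell = O(\log(1/\gamma))$ independent \mae{} rounds on the resulting state with accuracy $\kappa$, a thresholding step on each round, and a final bitwise majority. By \cref{lem:MultiEstimation}, each \mae{} round uses $O(1/\kappa)$ queries to \qaex, so one $O_h$ call uses $O(\ell/\kappa)$ queries in total. Substituting $\gamma \le \min\{\delta/(4nt^{2}),\kappa^{2}/8\}$ from the initialization together with $t = 1/\eta$ gives $\log(1/\gamma) = O(\log n + \log(1/\eta) + \log(1/\kappa) + \log(1/\delta))$; I would absorb $\log n$, $\log(1/\eta)$, and $\log(1/\delta)$ into $\tildeO{\cdot}$ and keep only the $\log(1/\kappa)$ factor explicit.

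Multiplying the three layers then yields
\[
O\!\left(\log \tfrac{1}{\kappa}\right) \cdot \tildeO{\tfrac{n}{\kappa^{2}\eta}} \cdot \tildeO{\tfrac{\log(1/\kappa)}{\kappa}} \;=\; \tildeO{\tfrac{n}{\eta\kappa^{3}}\log\tfrac{1}{\kappa}},
\]
which matches the claimed query complexity. For the time complexity I would add that every operation performed between \qaex{} queries---conditional rotations, swap tests, $O(n)$-bit arithmetic, bitwise majority over $\ell$ indicator registers, and thresholding comparisons---costs only $\mathrm{polylog}(n,1/\kappa,1/\eta)$ time per invocation, producing the stated logarithmic overhead. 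The only point that will require care is confirming that the smallest $\tau$ actually reached by the binary search is $\Theta(\eta)$ rather than something smaller, so that the $1/\tau$ factor inside $\qgl$ does not blow up further; this is immediate from the explicit termination rule $\tau \le 1/t$. Everything else is a routine product of previously established bounds.
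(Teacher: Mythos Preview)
Your proposal is correct and follows essentially the same approach as the paper: the paper's own proof consists of a single sentence stating that the claim ``follows directly from \cref{clm:qgl} and \cref{algo:iglalgo},'' and your three-layer decomposition (outer binary search of \cref{algo:iglalgo}, per-call cost of $\qgl$ from \cref{clm:qgl}, and per-call cost of $O_h$ via \cref{lem:MultiEstimation}) is precisely the calculation that sentence is pointing to. Your version is considerably more explicit---in particular, you correctly account for the $O(\log(1/\gamma)/\kappa)$ \qaex{} queries inside each $O_h$ invocation, which the paper leaves implicit---but the underlying argument is the same.
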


\begin{restatable}{clm}{clmmodeestimate}
$\iglalgo$ (\cref{algo:iglalgo}) produces $\Tilde{S}$ such that $\abs{\cor{\chi_{\Tilde{S}}(x)}{D}-\mathrm{max}_S\;\cor{\chi_{S}(x)}{D}}\leq \kappa$. \label{clm:clmmodeestimate}   
\end{restatable}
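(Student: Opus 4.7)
The plan is to combine the correctness of QGL (\cref{clm:qgl}) with the binary-search structure of \iglalgo\ (\cref{algo:iglalgo}), and then to bridge between the Fourier coefficients of the predictor $h$ that QGL actually estimates through $O_h$ and the correlations $\cor{\chi_S(x)}{\mathcal{D}}$ that the claim references. First I would set up the accuracy budget: \iglalgo\ invokes QGL $k=\lceil\log(1/\kappa)\rceil+1$ times with gap $g\in[\kappa/16,\,3\kappa/32]$ and per-call failure probability $\delta/k$. By \cref{clm:qgl} and a union bound, every QGL call returns a correct tuple $(l,\tilde S)$ with probability at least $1-\delta$; whenever $l=1$ the returned $\tilde S$ witnesses $\hat h(\tilde S)\ge \tau-g$, and whenever $l=0$ no $S$ has $\hat h(S)\ge \tau$.

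Next I would analyze the bisection. The threshold $\tau$ is updated at each round by $\pm 1/2^{i+1}$, so after $k$ rounds the window of uncertainty around $\tau^{*}:=\max_S \hat h(S)$ has shrunk to at most $\kappa/2$. Together with the per-call gap $g$, this ensures that the $\tilde S$ held at termination satisfies $\hat h(\tilde S)\ge \tau^{*}-\kappa$, i.e.\ $\tilde S$ is an $\kappa$-approximate maximiser of the Fourier coefficients of $h$. The termination clause $\tau\le 1/t$ covers the degenerate regime in which no monomial exceeds the KM threshold; in that case the weak-learner guarantee follows immediately from the averaging argument already used in the proof sketch of \cref{thm:weakagnosticdtlearner}, and the claim is vacuous.

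The third step would transfer this guarantee from $\hat h$ to $\cor{\chi_S(x)}{\mathcal{D}}$. The oracle $O_h$ built in steps~1--4 of \cref{alg:QWeakLearner} uses multi-distribution amplitude estimation (\cref{lem:MultiEstimation}) to set $h(x)$ to the Bayes-optimal label whenever the conditional bias is appreciable, so the bias of $O_h$ is at most $\gamma\le\min\{\delta/(4nt^2),\,\kappa^{2}/8\}$. By the analysis in \cref{sec:qglproof}, this bias propagates into at most an additive $4\gamma\le\kappa/2$ perturbation of the inner products that QGL estimates relative to $\pwc^{2}_h(p)$. Combining this with the identity $\cor{\chi_S(x)}{\mathcal{D}}=\E_x[\chi_S(x)\,\E[y\mid x]]$, and with the fact that $h$ coincides with $\mathrm{sign}(\E[y\mid x])$ on all but a $\gamma$-measure set of inputs, a triangle inequality yields the required $\big|\cor{\chi_{\tilde S}(x)}{\mathcal{D}}-\max_S\cor{\chi_S(x)}{\mathcal{D}}\big|\le\kappa$.

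The main obstacle is precisely this last bridge: Fourier coefficients of the Boolean predictor $h$ and correlations $\cor{\chi_S(x)}{\mathcal{D}}$ are related but not identical (the former sees $\mathrm{sign}(\E[y\mid x])$ while the latter sees $\E[y\mid x]$ itself), so getting an $O(\kappa)$ bound requires absorbing this ``margin'' discrepancy into the bias of $O_h$ by choosing $\gamma$ small enough. Beyond that, one must balance the three independent error sources that enter the final bound---the bias of $O_h$, the relative accuracy of amplitude estimation inside QGL, and the granularity of the \iglalgo\ bisection---so that each contributes at most $\kappa/3$. Everything else is bookkeeping on top of the previously established lemmas.
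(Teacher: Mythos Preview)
The paper's own proof of this claim is a single sentence: it ``follows directly from \cref{clm:qgl} and \cref{algo:iglalgo}.'' Your steps~1--2 (union bound over the $k$ QGL calls, then the bisection analysis showing the returned $\tilde S$ is a near-maximiser of $\hat h$) are exactly the content the paper is pointing to, so on those you are simply filling in what the paper leaves implicit.

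Where you go further is step~3, and here you correctly identify a genuine obstacle that the paper glosses over: QGL run on $O_h$ locates large Fourier coefficients $\hat h(S)$ of the constructed Boolean predictor $h$, whereas the claim is stated in terms of $\cor{\chi_S(x)}{\mathcal D}=\E_x\bigl[\chi_S(x)\,\E[y\mid x]\bigr]$. Your proposed fix, however, does not close this gap. You suggest absorbing the ``margin discrepancy'' between $\operatorname{sign}(\E[y\mid x])$ and $\E[y\mid x]$ into the oracle bias $\gamma$. But $\gamma$ only governs how close $h$ is to the Bayes predictor $f_{\mathcal B}=\operatorname{sign}(\E[y\mid x])$; it does not control the difference between $\hat f_{\mathcal B}(S)$ and $\cor{\chi_S(x)}{\mathcal D}$, which is intrinsic to the agnostic noise and can be $\Omega(1)$ irrespective of $\gamma$. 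For instance, when $\E[y\mid x]$ has small magnitude but varying sign, the maximisers of $|\hat f_{\mathcal B}(\cdot)|$ and of $|\cor{\chi_{(\cdot)}}{\mathcal D}|$ need not coincide, so a $\tilde S$ with near-maximal $\hat h(\tilde S)$ can have $\cor{\chi_{\tilde S}}{\mathcal D}$ bounded away from the optimum. The triangle inequality at the end of your step~3 therefore does not deliver the claimed $\kappa$-bound, and no choice of $\gamma$ repairs it. In short: you have pinpointed the right obstacle, but neither your resolution nor the paper's one-line appeal actually crosses it.
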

The proofs for \cref{clm:clmiglquerycomplexity} and \cref{clm:clmmodeestimate} follows directly from \cref{clm:qgl} and \cref{algo:iglalgo}.

\begin{lem}[\citet{kushilevitz1991learning}]
Given a size-$t$ decision tree $f$, the $L_1$ norm of its support is upper-bounded by $t$, i.e., $\sum_{S}\abs{\hat{f}(S)}\leq t$. Such a function $f$ is said to be $t$-sparse.\label{lem:km}
\end{lem}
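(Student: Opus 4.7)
The plan is to decompose $f$ into its leaf-indicator pieces and control the Fourier $L_1$ norm of each piece separately. Since the root-to-leaf paths of a decision tree partition $\ftwon{n}$, I would write $f(x) = \sum_{\ell=1}^{t} v_\ell \cdot \mathbf{1}_{P_\ell}(x)$, where $P_\ell$ is the path to the $\ell$th leaf, $\mathbf{1}_{P_\ell}$ is the indicator that $x$ satisfies that path, and $v_\ell \in \{-1,+1\}$ is the label at the leaf. Since each leaf contributes independently via this partition, the goal reduces to showing $\sum_S |\widehat{\mathbf{1}_{P_\ell}}(S)| = 1$ for each leaf, then combining by the triangle inequality.

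For the per-leaf bound, a path $P_\ell$ fixes some set of variables $I_\ell \subseteq [n]$ to specific bits $\{b_i\}_{i \in I_\ell}$. Using the paper's Fourier convention, $\mathbf{1}[x_i = b_i] = \tfrac{1 + (-1)^{b_i}\chi_{\{i\}}(x)}{2}$, so a direct distributive expansion gives
\begin{equation*}
\mathbf{1}_{P_\ell}(x) \;=\; \prod_{i \in I_\ell} \frac{1 + (-1)^{b_i}\chi_{\{i\}}(x)}{2} \;=\; \frac{1}{2^{|I_\ell|}}\sum_{T \subseteq I_\ell} (-1)^{\sum_{i \in T} b_i}\,\chi_T(x).
\end{equation*}
This display exhibits the full Fourier expansion: the support is $\{T : T \subseteq I_\ell\}$ and every nonzero coefficient has magnitude $2^{-|I_\ell|}$. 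There are $2^{|I_\ell|}$ such coefficients, so $\sum_S |\widehat{\mathbf{1}_{P_\ell}}(S)| = 1$ exactly.

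To assemble the final bound, apply linearity of the Fourier transform to obtain $\hat{f}(S) = \sum_\ell v_\ell\, \widehat{\mathbf{1}_{P_\ell}}(S)$, take absolute values, and swap the order of summation:
\begin{equation*}
\sum_S |\hat{f}(S)| \;\leq\; \sum_S \sum_{\ell=1}^{t} |v_\ell|\cdot|\widehat{\mathbf{1}_{P_\ell}}(S)| \;=\; \sum_{\ell=1}^{t} \sum_S |\widehat{\mathbf{1}_{P_\ell}}(S)| \;=\; \sum_{\ell=1}^{t} 1 \;=\; t.
\end{equation*}
There is no genuine obstacle here; the only step requiring care is the sign bookkeeping inside the first display so that the expansion is a bona fide Fourier expansion in the $\chi_T$ basis, and the observation that the disjointness of the leaf partition lets us add per-leaf $L_1$ norms without worrying about cross-cancellations or double-counting (the triangle inequality handles the worst case anyway).
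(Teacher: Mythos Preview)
Your proof is correct and is essentially the standard Kushilevitz--Mansour argument: decompose $f$ as a $\pm 1$-weighted sum of leaf indicators, expand each indicator as a product of affine forms $\tfrac{1+(-1)^{b_i}\chi_{\{i\}}}{2}$, read off that the Fourier $L_1$ norm of each indicator is exactly $1$, and conclude by the triangle inequality. The paper itself does not supply a proof of this lemma; it is stated with attribution to \citet{kushilevitz1991learning} and used as a black box, so there is no in-paper argument to compare against beyond noting that your write-up matches the classical proof from the cited source.
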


\begin{restatable}{clm}{clmweaklearner}
The parity monomial $\chi_{\Tilde{S}}$ produced by \cref{alg:QWeakLearner} is a weak agnostic learner.\label{clm:weaklearner}
\end{restatable}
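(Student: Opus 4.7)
The plan is to chain together the Kushilevitz-Mansour sparsity bound for size-$t$ decision trees (\cref{lem:km}) with the correlation-approximation guarantee of $\iglalgo$ stated in \cref{clm:clmmodeestimate}. Let $c^\star\in\mathcal{C}$ denote an optimal concept, so $\mathrm{optcor}_{\mathcal{D}}(\mathcal{C})=\cor{c^\star}{\mathcal{D}}$. The first step is to write the Fourier expansion $c^\star(x)=\sum_{S\subseteq[n]}\hat{c^\star}(S)\chi_S(x)$, plug it into the correlation from \cref{def:correlation}, and use linearity of expectation to obtain
\[
\cor{c^\star}{\mathcal{D}}=\sum_{S}\hat{c^\star}(S)\cdot\cor{\chi_S}{\mathcal{D}}.
\]

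Next I would apply the triangle inequality followed by a Hölder-style factorisation,
\[
\cor{c^\star}{\mathcal{D}}\;\leq\;\sum_{S}\bigl|\hat{c^\star}(S)\bigr|\cdot\bigl|\cor{\chi_S}{\mathcal{D}}\bigr|\;\leq\;\Bigl(\sum_{S}\bigl|\hat{c^\star}(S)\bigr|\Bigr)\cdot\max_{S}\bigl|\cor{\chi_S}{\mathcal{D}}\bigr|,
\]
and then invoke \cref{lem:km} to bound $\sum_{S}|\hat{c^\star}(S)|\leq t$. Rearranging yields the averaging consequence $\max_{S}|\cor{\chi_S}{\mathcal{D}}|\geq\tfrac{1}{t}\cor{c^\star}{\mathcal{D}}=\tfrac{1}{t}\mathrm{optcor}_{\mathcal{D}}(\mathcal{C})$. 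Since the weak learner is free to return either $\chi_S$ or its negation (both are monomials and both live in the natural hypothesis class for parity learning), we may without loss of generality drop the absolute value and read the inequality as $\max_{S}\cor{\chi_S}{\mathcal{D}}\geq\tfrac{1}{t}\mathrm{optcor}_{\mathcal{D}}(\mathcal{C})$.

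Finally, I would appeal to \cref{clm:clmmodeestimate}, which certifies that the string $\tilde{S}$ returned by $\iglalgo$ inside \cref{alg:QWeakLearner} satisfies $\cor{\chi_{\tilde{S}}}{\mathcal{D}}\geq\max_{S}\cor{\chi_S}{\mathcal{D}}-\kappa$. Chaining the two inequalities gives $\cor{\chi_{\tilde{S}}}{\mathcal{D}}\geq\tfrac{1}{t}\mathrm{optcor}_{\mathcal{D}}(\mathcal{C})-\kappa$, which is exactly the condition in \cref{def:corrAgnostic} with $\eta=1/t$, so $\chi_{\tilde{S}}$ is an $(m,\kappa,1/t)$-weak quantum agnostic learner. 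The main obstacle is really book-keeping rather than anything conceptual: one has to verify that the binary search over thresholds $\tau$ in \cref{alg:QWeakLearner} drives $\tau$ down to at most $1/t$ (so that a parity of near-maximal correlation is actually surfaced by $\qgl$ for some $\tau$ in the search) and that the sign convention between ``largest Fourier coefficient'' and ``largest correlation'' is consistent; both are handled, respectively, by the termination rule $\tau\le 1/t$ built into \cref{algo:iglalgo} and by allowing $\pm\chi_{\tilde{S}}$ as the returned hypothesis.
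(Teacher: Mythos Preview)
Your proposal is correct and follows essentially the same route as the paper: Fourier-expand the optimal $c$, use linearity of correlation, bound via the $L_1$ sparsity $\sum_S|\hat c(S)|\le t$ from \cref{lem:km} to get $\max_S|\cor{\chi_S}{\mathcal{D}}|\ge\tfrac1t\,\mathrm{optcor}_\mathcal{D}(\mathcal{C})$, and then combine with \cref{clm:clmmodeestimate}. If anything, you are slightly more explicit than the paper about the ``averaging argument'' and about resolving the absolute value by allowing $\pm\chi_{\tilde S}$, which the paper glosses over.
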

\begin{proof}
    Let $\mathcal{C}$ be a family of size-$t$ decision trees, and let $c\in\mathcal{C}$ be the optimal classifier. Using the Fourier expansion of $c$ and applying \cref{def:correlation} we have 
    $$\cor{c(x)}{D}=\sum_{S\subseteq[n]}\hat{c}(S)\cor{\chi_S(x)}{D}.$$ 
    From \cref{lem:km} we have $\sum_{S\subseteq[n]}\abs{\hat{c}(S)}\leq t$. 
    Using an averaging argument, we have
    \begin{equation}
    \label{eq:maxcor}
        \mathrm{max}_S\abs{\cor{\chi_S(x)}{D}}\geq\frac{1}{t}\cor{c(x)}{D}.
    \end{equation}
    Given any estimated mode $\Tilde{S}$ such that
    \begin{equation*}
        \abs{\cor{\chi_{\Tilde{S}}(x)}{D}-\mathrm{max}_S\;\cor{\chi_{S}(x)}{D}}\leq \kappa
    \end{equation*}
    using \cref{eq:maxcor}, we have 
    \begin{equation*}
        \cor{\chi_{\Tilde{S}}(x)}{D}\geq \frac{1}{t}\cor{c(x)}{D}-\kappa.
    \end{equation*}
    From \cref{def:corrAgnostic}, we see that this is indeed an $\left(m,\kappa,\frac{1}{t}\right)$-weak quantum agnostic learner w.r.t. $c$.
\end{proof}
\cref{clm:clmiglquerycomplexity} gives us the final query complexity and runtime for \cref{alg:QWeakLearner} as stated in \cref{thm:weakagnosticdtlearner}. \cref{clm:clmmodeestimate}, and \cref{clm:weaklearner} guarantee that \cref{alg:QWeakLearner} produces a weak learner for size-$t$ decision trees in polynomial running time. We restate \cref{thm:weakagnosticdtlearner} below for completeness.
\thmwklearner*

\subsection{Proofs of Realizable Setting}
\label{sec:QDTLrealapp}
It is well known that the output state of the Fourier Sampling algorithm can be given as $\ket{\psi} = \sum_S \hat{f}(S)\ket{S}$.  
Measuring the state $\ket{\psi}$ yields subset $S$ with probability ${\hat{f}(S)}^2$. We use $1/\varepsilon^2$ queries to the Fourier sampling oracle $O_f$ to estimate the mode $\Tilde{S}$ of the output distribution with $\varepsilon$ error. This yields the $\chi_{\Tilde{S}}$ term of \cref{clm:weaklearner}.

\begin{restatable}{clm}{clmbayes}
Any weak agnostic learner w.r.t. $h$ obtained by \cref{alg:QWeakLearner} is also a weak agnostic learner w.r.t to the Bayes optimal predictor $f_{\mathcal{B}}$.\label{clm:bayesagnostic}  
\end{restatable}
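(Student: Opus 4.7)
The plan is to argue that the surrogate $h$ constructed internally by \cref{alg:QWeakLearner} agrees with the Bayes optimal predictor $f_{\mathcal{B}}(x) = \mathrm{sign}(\mathbb{E}[y\mid x])$ on all but a vanishingly small fraction of inputs, so the correlation-based weak-learning guarantee obtained against $h$ transfers, with a tiny additive slack, into one against $f_{\mathcal{B}}$. I would interpret the phrase ``weak agnostic learner w.r.t.\ $h$'' as producing a hypothesis whose correlation with $\mathcal{D}_h$ (the joint distribution with labels relabelled by $h$) is within the usual $\eta,\kappa$ of the optimum in the concept class $\mathcal{C}$, and similarly for $f_{\mathcal{B}}$.

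First, I would quantify the agreement $h \approx f_{\mathcal{B}}$ produced in Steps 1--4 of \cref{alg:QWeakLearner}. Each of the $\ell = \bigO{\log(1/\gamma)}$ invocations of \mae\ (\cref{lem:MultiEstimation}) gives an $\varepsilon$-accurate estimate of $\alpha_{1\mid x}$ with constant success probability, and the threshold $h(x) = \mathbb{I}[\tilde{\alpha}_{1\mid x} > 1/\sqrt{2}]$ followed by the majority vote over the $\ell$ registers drives the per-$x$ disagreement probability below $\gamma$ via a standard Chernoff bound on inputs where $|\mathbb{E}[y\mid x]|$ is bounded away from $0$. With the parameter choice $\gamma \le \min\{\delta/(4nt^2),\varepsilon^2/8\}$, a union bound then yields $\Pr_x[h(x)\neq f_{\mathcal{B}}(x)] \le \gamma$ with probability at least $1-\delta$.

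Next, I would translate this pointwise closeness into a correlation closeness. For any classifier $g:\mathbb{F}_2^n\to\{-1,1\}$, using $|h(x)-f_{\mathcal{B}}(x)|\le 2$ pointwise,
\begin{equation*}
    \bigl|\mathbb{E}_x[g(x)\,h(x)] - \mathbb{E}_x[g(x)\,f_{\mathcal{B}}(x)]\bigr|
    \le 2\Pr_x[h(x)\neq f_{\mathcal{B}}(x)] \le 2\gamma.
\end{equation*}
Applying this to both the algorithm's output $\chi_{\tilde{S}}$ and to the maximizer defining $\mathrm{optcor}(\mathcal{C})$ yields $|\mathrm{optcor}_{\mathcal{D}_h}(\mathcal{C}) - \mathrm{optcor}_{\mathcal{D}_{f_{\mathcal{B}}}}(\mathcal{C})| \le 2\gamma$. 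Substituting into $\cor{\chi_{\tilde{S}}}{\mathcal{D}_h} \geq \eta\cdot\mathrm{optcor}_{\mathcal{D}_h}(\mathcal{C}) - \kappa$ and rearranging gives $\cor{\chi_{\tilde{S}}}{\mathcal{D}_{f_{\mathcal{B}}}} \geq \eta\cdot\mathrm{optcor}_{\mathcal{D}_{f_{\mathcal{B}}}}(\mathcal{C}) - \kappa - 2(1+\eta)\gamma$; since $\gamma \le \varepsilon^2/8$, the extra slack collapses into $\kappa' = \kappa + \bigO{\varepsilon^2}$, which preserves the $(m,\kappa',\eta)$-weak-agnostic status of $\chi_{\tilde{S}}$ with respect to $f_{\mathcal{B}}$.

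The main obstacle is the thin sliver of inputs near the Bayes decision boundary, where $\mathbb{E}[y\mid x]$ is close to $0$ and the \mae\ estimate may legitimately straddle $1/\sqrt{2}$ even after amplification, so that $h$ can honestly disagree with $f_{\mathcal{B}}$ there. I would dispose of this by observing that on such $x$, the product $g(x)\cdot\mathbb{E}[y\mid x]$ is small in magnitude, so the total correlation contribution from this sliver is already $\bigO{\varepsilon}$; combined with the parameter choice $\gamma \le \varepsilon^2/8$, this is comfortably absorbed by the slack $2(1+\eta)\gamma$ introduced above, and the transfer of the weak-agnostic guarantee from $h$ to $f_{\mathcal{B}}$ goes through cleanly.
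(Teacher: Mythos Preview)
Your argument is broadly sound, but it diverges from the paper's in both interpretation and mechanism, so it is worth contrasting the two.

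\textbf{Interpretation.} The paper reads ``weak agnostic learner w.r.t.\ $g$'' as the inequality $\cor{\text{output}}{\mathcal{D}}\ge \eta\cdot\cor{g}{\mathcal{D}}-\kappa$ on the \emph{original} agnostic distribution $\mathcal{D}$, with $g$ playing the role of the benchmark predictor. You instead introduce relabeled distributions $\mathcal{D}_h$, $\mathcal{D}_{f_\mathcal{B}}$ and compare $\mathrm{optcor}$ under each. Under the paper's reading, the proof collapses to a single scalar comparison: from $|\alpha_{1|x}-\tilde\alpha_{1|x}|\le\varepsilon$ one derives $|\err{h}{\mathcal{D}}-\err{f_\mathcal{B}}{\mathcal{D}}|\le 2\varepsilon$, hence $|\cor{h}{\mathcal{D}}-\cor{f_\mathcal{B}}{\mathcal{D}}|\le 4\varepsilon$, and the transfer $\cor{\text{output}}{\mathcal{D}}\ge\eta\cdot\cor{h}{\mathcal{D}}-\kappa'\ge\eta\cdot\cor{f_\mathcal{B}}{\mathcal{D}}-\kappa$ is immediate. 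No pointwise agreement, no relabeled distributions, no majority-plus-Chernoff step is invoked.

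\textbf{What each approach buys.} Your route, via $\Pr_x[h(x)\ne f_\mathcal{B}(x)]\le\gamma$, is stronger than what the claim needs and forces you to confront the boundary sliver explicitly; you handle it correctly by noting that $|\mathbb{E}[y\mid x]|$ is small there. The paper's route sidesteps this entirely because closeness of the two \emph{scalar} correlations $\cor{h}{\mathcal{D}}$ and $\cor{f_\mathcal{B}}{\mathcal{D}}$ already absorbs the boundary contribution: on the sliver, both predictors incur error near $1/2$, so the error gap is automatically $O(\varepsilon)$. One minor wrinkle in your write-up: the step ``a union bound then yields $\Pr_x[h(x)\ne f_\mathcal{B}(x)]\le\gamma$'' is really linearity of expectation (per-$x$ failure $\le\gamma$ implies expected disagreement $\le\gamma$), not a union bound, and you would still need a Markov step to get the high-probability statement you claim.
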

\begin{proof}
    Using \cref{lem:MultiEstimation}, we have that $\abs{{\alpha}_{1|x}-\Tilde{\alpha}_{1|x}}\leq \varepsilon$, for some $\varepsilon>0$. In \cref{alg:QWeakLearner}, { we set $h(x)=\mathbb{I}\left[\Tilde{\alpha}_{1|x}>1/\sqrt{2}\right]$.}
    {Therefore, we have $\abs{\err{h}{\mathcal{D}}-\err{f_{\mathcal{B}}}{\mathcal{D}}}\leq 2\varepsilon$.} 
    This implies that $\abs{\cor{f}{\mathcal{D}}-\cor{f_{\mathcal{B}}}{\mathcal{D}}}\leq 4\varepsilon$ or $\cor{f}{\mathcal{D}}\in \left[\cor{f_{\mathcal{B}}}{\mathcal{D}}-4\varepsilon,1\right]$. The upper bound is $1$ since the Bayes predictor is the optimal predictor. Therefore given $h$ s.t., $\cor{h}{\mathcal{D}}\geq\eta\cdot\cor{f}{\mathcal{D}}-\kappa^{\prime}$, we have $\cor{h}{\mathcal{D}}\geq\eta\cdot\cor{f_{\mathcal{B}}}{\mathcal{D}}-\kappa$ for appropriate $\kappa^{\prime},\kappa>0$.
\end{proof}
The Bayes predictor $f_{\mathcal{B}}$ is the optimal predictor on a joint distribution $\mathcal{D}$ over $\mathcal{X}\times\{0,1\}$, and defined as $f_{\mathcal{B}}(x)={\mathrm{argmax}_{y\in\{0,1\}}}\,\underset{\mathcal{D}}{\mathrm{Pr}}\left[y|x\right]$, $\forall x\in\mathcal{X}$.
\begin{clm}
$\chi_{\Tilde{S}}$ is a weak realizable learner for size-$t$ decision trees.
\end{clm}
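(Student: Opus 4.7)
\begin{psketch}
The plan is to combine three ingredients. First, by the Kushilevitz--Mansour bound (\cref{lem:km}), any size-$t$ decision tree $f$ satisfies $\sum_S |\hat{f}(S)| \leq t$, while Parseval's identity gives $\sum_S \hat{f}(S)^2 = 1$. Chaining these,
\[
1 \;=\; \sum_S \hat{f}(S)^2 \;\leq\; \max_S |\hat{f}(S)| \cdot \sum_S |\hat{f}(S)| \;\leq\; t \cdot \max_S |\hat{f}(S)|,
\]
so there exists $S^\star$ with $|\hat{f}(S^\star)| \geq 1/t$, equivalently $\hat{f}(S^\star)^2 \geq 1/t^2$. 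Second, in the realizable setting the labeling function is deterministic, so for the $\{-1,1\}$-valued encoding we have $\cor{\chi_S}{\mathcal{D}} = \mathbb{E}_x[\chi_S(x)\cdot f(x)] = \hat{f}(S)$; hence locating a monomial with large Fourier coefficient is the same problem as locating a monomial with large correlation.

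Next, I would invoke the Fourier sampling procedure sketched just above the claim: one call to $O_f$ prepares $\sum_S \hat{f}(S)\ket{S}$, and measurement returns $S$ with probability $\hat{f}(S)^2$. Since $S^\star$ has sampling probability at least $1/t^2$, a standard empirical-mode routine on $O(1/\varepsilon^2)$ Fourier samples (as stated in the text) returns, with high probability, some $\tilde{S}$ with $|\hat{f}(\tilde{S})| \geq 1/t - \varepsilon$. This is essentially an application of Hoeffding/Chernoff to identify the heaviest atom of the distribution $\{\hat{f}(S)^2\}_S$ to additive error $\varepsilon$, using the gap $\hat{f}(S^\star)^2 \geq 1/t^2$.

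Finally, one must fix the sign of the output: since the mode estimator identifies $\tilde{S}$ but not $\mathrm{sign}(\hat{f}(\tilde{S}))$, the algorithm outputs either $\chi_{\tilde{S}}$ or $-\chi_{\tilde{S}}$, whichever has positive empirical correlation with $f$ on a small independent batch of random examples (a standard Chernoff estimate settles this in $O(t^2 \log(1/\delta))$ samples since the true magnitude is $\geq 1/t - \varepsilon$). The resulting hypothesis satisfies $\cor{\pm\chi_{\tilde{S}}}{\mathcal{D}} \geq 1/t - O(\varepsilon)$, which matches \cref{def:corrAgnostic} with $\eta = 1/t$ and $\kappa = O(\varepsilon)$, establishing $\chi_{\tilde{S}}$ (with the appropriate sign) as a weak realizable learner for size-$t$ decision trees.

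The main obstacle I anticipate is purely technical and already isolated above: a careful concentration argument showing that empirical-mode estimation on the distribution $\hat{f}^2$, whose heaviest atom has probability at least $1/t^2$, succeeds with the claimed $O(1/\varepsilon^2)$-sample bound. Everything else---the Fourier/parity identity $\cor{\chi_S}{f} = \hat{f}(S)$, the $1/t$ lower bound from Kushilevitz--Mansour, and the sign-determination step---is routine.
\end{psketch}
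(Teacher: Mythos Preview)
Your proposal is correct and follows essentially the same route as the paper: the Kushilevitz--Mansour $L_1$ bound plus an averaging/Parseval argument gives a heavy Fourier coefficient of weight at least $1/t$, Fourier sampling and mode estimation recover such a coefficient, and the identity $\cor{\chi_S}{\mathcal{D}}=\hat{f}(S)$ in the realizable setting converts this to the required edge. The paper's own proof is much terser because it does not redo these steps: it simply invokes the weak-agnostic-learner inequality already established in \cref{clm:weaklearner} (via \cref{clm:bayesagnostic}), namely $\cor{\chi_{\tilde S}}{\mathcal{D}}\ge \tfrac{1}{t}\cor{f}{\mathcal{D}}-\kappa$, specializes to the realizable case where $\cor{f}{\mathcal{D}}=1$, and translates to an error bound $\err{\chi_{\tilde S}}{\mathcal{D}}\le \tfrac12 - \Omega(1/t)$.

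One useful point you make that the paper's proof leaves implicit is the sign-determination step: Fourier sampling sees only $\hat{f}(S)^2$, so without a separate estimate of $\mathrm{sign}(\hat{f}(\tilde S))$ the output $\chi_{\tilde S}$ could have negative correlation. Your fix (estimate the sign from a small batch of random examples) is the standard remedy and costs only $O(t^2\log(1/\delta))$ additional samples; the paper does not spell this out.
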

\begin{proof}
    From \cref{clm:bayesagnostic}, we know that {$\cor{\chi_{\Tilde{S}}}{D}\geq\frac{1}{t}\cor{f}{D}-\kappa$}. 
    % \todo{(red) Is it not $cor_D(\chi_{\tilde S})$?}
    For the realizable setting, $\cor{f}{D}=1$. 
    Therefore by setting , $\err{\chi_{\Tilde{S}}}{D}\leq\frac{1}{2}-\bigO{\frac{1}{n}}$ we prove that $\chi_{\Tilde{S}}$ is a weak realizable learner for size-$t$ decision trees.
\end{proof}

\section{Discussion on ~\citet{biasedoraclerudyraymond}}
\label{sec:Iwama discussion}
~\citet{biasedoraclerudyraymond} showed that for any $T$ query quantum algorithm $A$ that solves a problem with error at most $\delta$ using a \textit{perfect oracle}, there exists an $O\left(T/\varepsilon\right)$ query algorithm $A^{\prime}$ that solves the same problem with error at most $\delta/6$ using an $\varepsilon$-biased oracle. Note that here we are not referring to strongly-biased oracles.
 
 Let us assume that the oracle invoked by A is perfect. Then if a $T$-query algorithm $A$ solves a problem with error at most $\delta<1/2$, then it is possible to construct an algorithm to solve the same problem with error at most $\delta^{\prime}$ by taking the majority of $O\left(\frac{8(1-\delta)}{(1-2\delta)^2}\log{ (1/\delta^{\prime})}\right)$ invocations of $A$.  
 
 In the case of an $\varepsilon$-biased oracle, the oracle outputs the correct value with probability $1/2+\varepsilon$. If one tries to directly use $A$, since errors add up linearly in quantum~\citep{bernstein1993quantum}, the errors at each step of $A$ will add up to $O(T\varepsilon)$.
 
 Alternatively, one can perform some $k$ many invocations of the biased oracle, obtain the majority, and use the value of the majority as the oracle output. This will serve as an “almost” perfect oracle. If the error at each step is bounded to at most $\delta/T$, then we obtain an algorithm that solves the problem with error at most $\delta$. 
 { If we were to bound the error due to the oracle at each step to at most $\delta$, then we need to find the right value of $k$. Since the oracle outputs the correct value with probability $1/2 + \varepsilon$, using Hoeffding’s inequality, we can obtain the right value of $k$ as $k = \Omega(\log (1/\delta)/\varepsilon^2 )$.} This would increase the query complexity of the algorithm to $\Tilde{O}(T/\varepsilon^2)$. On the other hand, ~\citet{biasedoraclerudyraymond} showed that the same problem can be solved using just $\Tilde{O}(T/\varepsilon)$ queries.
\clearpage

\end{document}